\newtheorem{thm}{Theorem}
\newtheorem{cor}{Corollary}
\newtheorem{lem}{Lemma}
\theoremstyle{remark}
\theoremstyle{definition}
\def\drc{D\! \! \! \! /}
\def\dar{\buildrel d\over \longrightarrow }
\def\ar{\longrightarrow }
\newcommand\scri{\mathcal{I}}
\begin{document}

\title{Topological Obstructions To Maximal Slices}

\author{Donald M.\ Witt}
\affiliation{Department of Physics and Astronomy, University of British Columbia,
Vancouver, British Columbia \ V6T 1Z1}
\affiliation{Perimeter Institute for Theoretical Physics, 31 Caroline Street North, Waterloo,
ON, N2L 2Y5, Canada}
\date{\today}

\begin{abstract}
A necessary condition for a globally hyperbolic spacetime ${\mathbb R}\times \Sigma$
to admit a maximal slice is that the Cauchy slice $\Sigma$ admit a metric with nonnegative
scalar curvature, $R\ge 0$. In this paper, the two cases considered are the closed 
spatial manifold and the asymptotically flat spatial manifold. Although most results
here will apply in four or more spacetime dimensions, this work will mainly consider
 4-dimensional spacetimes. For $\Sigma$ closed or asymptotically flat,
all topologies are allowed by the field equations. 
Since all $\Sigma$ occur as Cauchy slices of solutions to the Einstein equations 
and most  $\Sigma$ do not admit metrics with $R\ge 0$, 
it follows that most globally hyperbolic spacetimes never admit a maximal slice, i.e. a 
slice with zero mean extrinsic curvature. In particular, asymptotically flat globally hyperbolic 
spacetimes which admit maximal slices are the exception rather than the rule. 
The reason for this is 
due to topological obstructions to constructing such slices. In the asymptotically flat case, this will be shown by
smooth compactification of the manifold in order to use the results for spatially
closed manifolds.
\end{abstract}

\pacs{04.20.Ex, 04.20.Gz,04.20.Jb}
\maketitle

\section{ INTRODUCTION}

Finding necessary and sufficient conditions for the existence of constant mean
curvature (CMC) hypersurfaces in physically reasonable spacetimes has been an
outstanding problem in classical relativity. Such slices are useful  in studying the behavior of singularities, numerical relativity,
and for calculating conserved quantities \cite{e,c}. A particular class of such 
hypersurfaces are maximal slices, namely, slices with zero extrinsic curvature. 
Although, in spatially closed spacetimes, maximal slices are often scarce unless 
 the spacetime is static, it was a common belief until  later work that not only do all 
asymptotically flat globally hyperbolic spacetimes have a maximal slice but, in fact,  have an entire 
foliation by maximal slices. In particular, spacetimes sufficiently near to Minkowski 
spacetime admit maximal slices. Additionally, spacetimes admitting 
time functions obeying certain boundary conditions admit maximal slices; the existence theorems are
due to R. Bartnik \cite{b}. There are additional results for  slices outside black hole 
regions \cite{Bartnik:1990um}. On the other hand, D. Brill, gave examples of some 
spatially closed spacetimes and an asymptotically flat spacetime that admit no maximal 
slices \cite{db,drd}. The asymptotically flat example of Brill was not smooth but only piecewise
smooth and also was for only a limited set of topologies. These restrictions were removed
in \cite{Witt:1986ng}. Hence smooth, asymptotically  flat spacetimes generically exhibit a topological obstruction to maximal slices.

Historically, a less well known problem was whether every 3-manifold admits physically reasonable 
initial data for a spacetime. The solution of this problem gives insight into not only into
the mathematical structure of the classical field equations but also has 
implications in the quantum theory. In quantum gravity, it is believed 
that the topology of spacetime is highly nontrivial on the microscopic 
scale. Since one expects that, at a minimum,  classically allowed topologies 
occur in the quantum theory,  if every 3-manifold admits physically
 reasonable initial data then it is sensible to discuss all spatial 
topologies when studying quantum gravity. 

For the first time ever it was shown in \cite{Witt:1986ng} that all 3-manifolds 
do in fact have physically reasonable initial data. Therefore, there are no obstructions
to the allowed topologies from the Einstein equations. Moreover, this result combined 
with the work on the classification of 3-manifolds which admit metrics with nonnegative 
scalar curvature implies that the generic situation is  that globally hyperbolic spacetimes do not 
admit maximal slices.  In other words, a spacetime which admits a maximal slice is the exception 
rather than the rule. The purpose of the present paper is to provide the     
details not given in the earlier paper  \cite{Witt:1986ng}. 

In  \cite{Witt:1986ng}, the case of physically reasonable matter was treated both  in the case of cosmological models and asymptotically flat spacetimes. The vacuum case for cosmological models was also treated. The asymptotically 
flat vacuum case was treated in n-dimensions in \cite{Isenberg:2002es}.
For a review of the initial data problem see \cite{Rendall:2002yr}.

Physically, all topologies are allowed as spatial topologies of the asymptotically 
flat case and cosmological case. However, it turns out that {\it topological censorship}
implies that  structures with nontrivial first homotopy are censored in the asymptotically flat and asymptotically locally 
 anti-de Sitter case: the structures collapses to form black holes or, more specifically are hidden behind 
black hole horizons \cite{Friedman:1993ty,Schleich:1994tm,Galloway:1999bp,Galloway:1999br}.
In fact the original form of topological censorship, first proven in \cite{Friedman:1993ty}, 
was motivated by the solutions given in \cite{Witt:1986ng}. 

In the present paper, it will be shown that every nonnegative function on a 
closed 3-manifold is the energy density of an initial data set with $J^a=0$. In 
particular, every closed 3-manifold has vacuum initial data. Furthermore, 
there exists initial data on every closed 3-manifold such that it is a CMC hypersurface 
for some constant. Next, it will be shown that every asymptotically flat 
$\Sigma $ has initial data. Moreover, every asymptotically flat $\Sigma $ 
has initial data with  a CMC hypersurface. 
Finally, the initial data sets constructed here can be evolved into globally hyperbolic spacetimes 
of the form ${\mathbb R}\times \Sigma $. Therefore, it makes sense to consider 
all manifolds $\Sigma$ as a possible topology of an initial data set for the Einstein equations. 

A necessary condition for a globally hyperbolic spacetime 
to admit a maximal slice is that $\Sigma $ admit a metric with $R\ge 0$. 
For closed 3-manifolds, M. Gromov and H. Lawson have shown that 
such manifolds comprise a small fraction of all closed 3-manifolds \cite{g}. 
Therefore, only a small fraction of spatially closed, globally hyperbolic spacetimes can admit 
maximal slices. In order to classify the asymptotically flat 3-manifolds $\Sigma $ 
which admit metrics with $R\ge 0$, a compactification theorem is proven 
using the Green's function of the operator $-8D^2 + R$. The compactification 
theorem reduces the classification of asymptotically flat 3-manifolds to   that
of closed ones with $R>0$. Again, it follows that most globally hyperbolic spacetimes 
do not have maximal slices. On the other hand,   
there are no topological obstructions to finding CMC hypersurfaces 
because initial data with $p$ equal to a constant  can be constructed regardless 
of the topology of $\Sigma $. 

The mathematical techniques presented here in this paper provide a set of tools useful to solving other related
problems of interest in gravitational physics. For example, the theorems proven here
apply in higher dimensions and were used to prove existence of certain vacuum solutions in gravity with compact
extra dimensions \cite{Hertog:2003ru}. The authors of \cite{Hertog:2003ru} 
acknowledge this as a private communication with the author of this paper. 

\section{INITIAL DATA SETS} 

A  {\it Cauchy surface} is a spacelike hypersurface such that every non-spacelike curve 
intersects this surface exactly once. A {\it partial Cauchy surface} is a surface that satisfies the 
weaker condition that each non-spacelike curve intersects the surface at most once. 

A spacetime is {\it globally hyperbolic} if and only if it admits a Cauchy slice. Alternately, 
a spacetime $M$, possibly with boundary, is {\it globally hyperbolic} if it is strongly causal 
and the sets $J^+(p,{M})\cap J^-(q,M)$ are compact for all 
$p,q\in M$ \footnote{The timelike future (causal future)
of a set $S$ relative to  $U$, $I^+(S,U)$ ($J^+(S,U)$), is the set of all points that can be reached from 
$S$ by a future directed  
timelike curve (causal curve) in $U$.  The interchange of the past with future in the previous
definition yields $I^-(S,U)$ ($J^-(S,U)$). }.

This definition is a generalization of that of a globally hyperbolic spacetime without boundary 
and is satisfied by asymptotically locally anti-deSitter (ALADS) spacetimes \footnote{ In fact, it is that used in the proof of topological censorship in ALADS spacetimes \cite{Galloway:1999bp}.}.  Also, note 
that the Penrose compactification of an asymptotically flat (AF) globally hyperbolic spacetime 
(which is itself globally hyperbolic by the usual definition) is globally hyperbolic in this general sense.

The {\it domain of outer communications} (DOC) is the portion of a spacetime  ${ M}$ 
which is exterior to event horizons. Precisely ${D} = I^-(\scri^+_0)\cap I^+(\scri^-_0)$ for a connected 
component $\scri_0$ for an AF spacetime and
${D} = I^-(\scri_0)\cap I^+(\scri_0)$ for an ALADS spacetime. Intuitively, the DOC is the subset of ${ M}$ that
is in causal contact with $\scri$. Note that $D$ is the interior of an $(n+1)$-dimensional 
spacetime-with-boundary ${D}' = {D }\cup \scri$ and that $D'$ is itself a globally 
hyperbolic spacetime with boundary.

An {\it event horizon} is the boundary of the  DOC. More specifically, a {\it future event horizon} is 
the boundary of the causal past of a connected component of the boundary at infinity, 
$\scri_0$, $ \dot J^{-}(\scri_0, M')$, a {\it past event horizon}
is the boundary of the causal future of $\scri_0$, $ \dot J^{+}(\scri_0, M')$ and the 
{\it event horizon} is the union of future and past event horizons.

An initial data set for the Cauchy problem in general relativity consists
of a 3-manifold $\Sigma$, riemannian metric $g_{ab}$, symmetric tensor
$p_{ab}$ (which will be the extrinsic curvature of $\Sigma$ in the evolved globally hyperbolic 
spacetime ${\mathbb R}\times \Sigma$), energy density $\rho$, and momentum
density $J^a$ which satisfies the Hamiltonian and momentum constraints
$$R-p_{ab}p^{ab}+p^2=16\pi\rho \, ,$$
and
$$D_b(p^{ab}-pg^{ab})=8\pi{J^a} \, .$$
Here $R$ is the scalar curvature of the metric $g_{ab}$, $D_b$ is the covariant
derivative defined by $g_{ab}$, and $p\equiv p_a\, ^a$.
Initial data is called physically reasonable if it is smooth (i.e. $C^\infty$), 
$\Sigma$ is geodesically complete with respect to $g_{ab}$, and the sources
obey the local energy condition $\rho \ge {\sqrt {(J_aJ^a)}}$. From here on, initial 
data will always refer to physically reasonable data. When the
energy and momentum densities correspond to classical nondissipative matter
sources or vacuum, $\rho = J^a = 0$, the coupled Einstein-matter equations
can be used to evolve the initial data into a globally hyperbolic spacetime \cite{c,h}. 
Moreover, $\Sigma$ is a spacelike hypersurface 
in the evolved globally hyperbolic 
spacetime, and the constraints are the orthogonal and parallel projections
of the 4-vector arising from contracting the field equations with the normal to 
 $\Sigma$. In the evolved globally hyperbolic spacetime, the local energy 
condition is the dominant 
energy condition, i.e. the stress-energy
tensor satisfies $T_{\alpha \beta}W^\alpha W^\beta \ge 0$, and
$T_{\alpha \beta}W^\beta T^\alpha \, _\gamma W^\gamma \le 0 $ for all
$W^\alpha$ on ${\mathbb R}\times \Sigma $ with $W_\alpha W^\alpha <0$ . 
Usually, $\Sigma$ is required to satisfy the boundary condition that it be a 
closed or asymptotically
flat 3-manifold when describing a cosmological model or an isolated system, respectively. A 3-manifold is {\it closed} if it is compact and has no boundary. 
An {\it asymptotically flat} 3-manifold $\Sigma$ is: a 3-manifold such that for some
compact $C\subset \Sigma$, $\Sigma - C$ consists of a finite number of  
disconnected components each of which is diffeomorphic to ${\mathbb R}^3$
minus a ball $B$. (Note that, the definition of asymptotically flat manifold 
used refers only to differentiable manifolds with no further structure, i.e.
no metric, connection, or other geometric structure.)
Initial data on asymptotically flat 3-manifolds is usually required to satisfy
certain fall off conditions in the asymptotic regions. The most standard
conditions are the following: Initial data on $\Sigma $ is $asymptotically$ 
$flat$ $initial$ $data$, if the metric $g_{ab}$ and the extrinsic curvature 
$p_{ab}$ satisfy ${\hat g}_{ab}- \delta_{ab}= O({1\over r})$, 
$\partial_c{\hat g}_{ab}=O({1\over {r^2}})$, 
$\partial_d \partial_c {\hat g}_{ab}=O({1\over {r^3}})$, 
${\hat p}_{ab}=O({1\over {r^2}})$, 
and $\partial_c {\hat p}_{ab}=O({1\over {r^3}})$ where ${\hat g}_{ab}$ 
 and ${\hat p}_{ab}$ are the pullbacks of $g_{ab}$ and $p_{ab}$ from 
$\Sigma - C$ onto ${\mathbb R}^3 - B$. Another type of initial data on an
asymptotically flat 3-manifold is {\it asymptotically null} initial data. 
This is used when using $p={\rm constant}\not= 0$ on $\Sigma $ and it 
will be discussed later. We will now adopt the convention that initial 
data on asymptotically flat 3-manifolds is asymptotically flat initial data unless 
stated otherwise.  

A {\it CMC hypersurface} in a globally hyperbolic spacetime ${\mathbb R}\times \Sigma $ is a 
spacelike hypersurface diffeomorphic to $\Sigma $ with $p={\rm constant}$. 
A $maximal$ $slice$ is a CMC hypersurface for which $p=0$. Since the 
Hamiltonian constraint must be satisfied on the maximal slice, it follows 
that $R\ge 0$ on $\Sigma $. Therefore, a necessary condition for a globally hyperbolic spacetime 
${\mathbb R}\times \Sigma $ to admit a maximal slice is that $\Sigma $ admit 
a metric with nonnegative scalar curvature. 

Geometrically, a maximal slice is an extremum of the area functional

\begin{equation}\nonumber
A(\Sigma )=\int _{\Sigma } d{\sigma }_g
\end{equation}

\noindent with respect to timelike pushes into the 
globally hyperbolic spacetime ${\mathbb R}\times \Sigma$.
In order to see this take the directional derivative

\begin{equation}\nonumber
\delta _nA(\Sigma )=\delta _n\int _{\Sigma } d{\sigma }_g=\int _{\Sigma } \delta _nd{\sigma }_g
={1\over 2}\int _{\Sigma } g^{ab}\delta _ng_{ab}d{\sigma }_g
\end{equation}
where $n$ is the unit normal to $\Sigma$.

Now, the extrinsic curvature in the evolved spacetime is given by the Lie derivative
\begin{equation}\nonumber
p_{ab}=-{1\over {2N}}{\cal L}_{Nn}g_{ab}=-{1\over {2N}}\delta _ng_{ab}
\end{equation}
\noindent where $N$ is the lapse function. Finally, using this expression for the extrinsic curvature,
one obtains

\begin{equation}\nonumber
\delta _nA(\Sigma )={1\over 2}\int _{\Sigma } g^{ab}\delta _ng_{ab}d{\sigma }_g
=-\int _{\Sigma } Ng^{ab}p_{ab}d{\sigma }_g
\end{equation}

Hence, $\delta _nA(\Sigma )=0$ implies $p=0$. Therefore, the maximal slice is extremal or maximal. CMC slices
can be obtained from a similar variational principle by using a Lagrange multiplier.

The goal is of this section is to construct initial data on a given manifold 
$\Sigma $. This is accomplished 
by showing that every nonnegative function on a closed 3-manifold 
is the energy density of an initial data set with $J^a=0$. In particular, 
every closed 3-manifold has vacuum initial data. Furthermore, there exists 
initial data on every closed 3-manifold so that it is a CMC hypersurface 
for some constant. Next, it will be shown that every asymptotically flat 
$\Sigma $ has initial data. Moreover, every asymptotically flat $\Sigma $ 
has initial data with $p$ equal to a nonzero constant. 

The first case considered will be if $\Sigma $ is a closed 
3-manifold. In order to construct the initial data, we need the following 
theorem due to J. Kazdan and F. Warner \cite{kw}.

\begin{thm}\label{Theorem 1} Given any closed manifold $M^n$ with $n\ge 3$ 
and a smooth function which is negative somewhere on $M^n$, there exists 
a riemannian metric with the prescribed function as its scalar curvature. 
\end{thm}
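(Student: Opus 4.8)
The plan is to realize the desired metric inside a fixed conformal class, which reduces the problem to a single semilinear elliptic equation. Fix a background metric $g_0$ and write $g=u^{4/(n-2)}g_0$ with $u>0$ smooth. The conformal transformation law for the scalar curvature,

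\begin{equation}\nonumber
R_g=u^{-\frac{n+2}{n-2}}\left(-\frac{4(n-1)}{n-2}\Delta_{g_0}u+R_{g_0}u\right),
\end{equation}

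shows that $R_g=f$ is equivalent to solving $L_{g_0}u:=-c_n\Delta_{g_0}u+R_{g_0}u=f\,u^{p-1}$ with $u>0$, where $c_n=4(n-1)/(n-2)$ and $p=2n/(n-2)$ is the critical Sobolev exponent. Thus it suffices to produce such a $u$ for one convenient choice of $g_0$.

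The first step is to choose the conformal class well, namely to take $g_0$ with $R_{g_0}\equiv-1$. Every closed manifold of dimension $n\ge3$ admits such a metric --- this is classical (Aubin): one deforms an arbitrary metric to one with negative total scalar curvature (possible since, in contrast with dimension two, $\int_M R\,dV$ carries no topological restriction), which forces the resulting conformal class to have negative Yamabe invariant, and then the elementary negative case of the Yamabe problem delivers $R_{g_0}\equiv-1$. With this choice $L_{g_0}=-c_n\Delta_{g_0}-1$ has first eigenvalue $\lambda_1=-1<0$, and the equation becomes $-c_n\Delta_{g_0}u=u+f\,u^{p-1}$.

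I would then solve this by the method of sub- and supersolutions. A small constant $\underline u=\varepsilon$ is always a subsolution, since $0\le\varepsilon(1+f\varepsilon^{p-2})$ holds automatically where $f\ge0$ and holds where $f<0$ once $\varepsilon^{p-2}\le(\max_M|f|)^{-1}$. When $f<0$ everywhere --- the easy case --- a large constant $\overline u=M$ is a supersolution because $M(1+fM^{p-2})\le0$ as soon as $M^{p-2}\ge(\min_M|f|)^{-1}$; with $\underline u\le\overline u$ and after replacing $L_{g_0}$ by $L_{g_0}+\Lambda$ with $\Lambda$ large (so that $-c_n\Delta_{g_0}+\Lambda$ obeys the maximum principle and $u\mapsto\Lambda u+u+f u^{p-1}$ is increasing on $[\varepsilon,M]$), the standard monotone iteration yields a solution $\varepsilon\le u\le M$, which elliptic regularity renders smooth and which is positive since $u\ge\varepsilon$. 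The genuine case --- $f$ negative somewhere but positive on part of $M$ --- requires a non-constant supersolution, and this is the crux of the argument: where $f>0$ one needs $L_{g_0}\overline u>0$, i.e.\ the conformal metric $\overline u^{4/(n-2)}g_0$ must have positive scalar curvature there, whereas $\int_M L_{g_0}\overline u\,dV=-\int_M\overline u\,dV<0$ forces $L_{g_0}\overline u$ to be strongly negative on some other region, which must then be a region where $f$ is itself sufficiently negative. I would construct $\overline u$ by conformally inserting tall, narrow bumps over a neighborhood of $\{f\ge0\}$ --- near a sharp interior maximum of $\overline u$ one has $\Delta_{g_0}\overline u\ll0$, hence $L_{g_0}\overline u$ large and positive --- and then verify the supersolution inequality on the complement using $R_{g_0}<0$; the delicate point is to shape the profile so that the inequality persists across the transition layer near $\{f=0\}$. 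Equivalently, one can run the continuity method along $f_t=(1-t)(-1)+t f$, each of which is negative wherever $f$ is, starting from the solution $u\equiv1$ at $t=0$; openness then comes from the implicit function theorem and closedness from uniform $C^{2,\alpha}$ bounds together with a uniform positive lower bound on $u$, and it is precisely the negativity of $R_{g_0}$ that excludes the concentration/bubbling which makes the critical exponent dangerous in the positive-curvature case.

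In every case a smooth positive solution $u$ yields the metric $g=u^{4/(n-2)}g_0$ with $R_g=f$, which proves the theorem. The main obstacle is exactly the construction of the supersolution --- equivalently, the a priori estimates --- when $f$ changes sign; the remaining ingredients are the conformal bookkeeping of the first step, the elementary existence of a negative-curvature background, and standard linear and nonlinear elliptic theory.
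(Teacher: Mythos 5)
The paper does not prove this theorem; it cites it directly from Kazdan--Warner \cite{kw} and uses it as a black box, so there is no in-paper argument to compare against. Judged against what the Kazdan--Warner proof actually does, your proposal has a genuine gap precisely at the step you yourself flag as ``the crux.'' Fixing a conformal class $[g_0]$ with $R_{g_0}\equiv -1$ and trying to solve $-c_n\Delta_{g_0}u - u = f\,u^{p-1}$ cannot work for arbitrary sign-changing $f$: it is \emph{not} true that every smooth $f$ that is negative somewhere arises as the scalar curvature of a metric conformal to a \emph{fixed} $g_0$ of negative Yamabe type. Solvability of this critical-exponent equation with sign-changing right-hand side depends on quantitative features of $f$ --- roughly, on how large $\sup f$ is relative to a Dirichlet-type first eigenvalue of the conformal Laplacian attached to the region $\{f\ge 0\}$ --- and Kazdan--Warner themselves observed, with Rauzy later giving the sharp condition, that this obstruction is nonvacuous. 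Consequently neither the ``tall narrow bump'' supersolution nor the continuity path $f_t=(1-t)(-1)+tf$ can be made to work in general: the path may leave the set of conformally realizable functions strictly before $t=1$, and no supersolution exists once the obstruction is violated. Your construction is correct when $f<0$ everywhere, but that easy case is not the content of the theorem.

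The missing ingredient is the other half of the deformation group: pullback by diffeomorphisms. The Kazdan--Warner proof combines the conformal change you use with a Moser-type approximation lemma: for any smooth $f$ with $\min_M f<0$, any $1\le q<\infty$, and any $\varepsilon>0$, there is a diffeomorphism $\phi$ of $M$ and a negative constant $c$ close to $\min f$ with $\|f\circ\phi - c\|_{L^q(M,dV_{g_0})}<\varepsilon$; this is achieved by using $\phi$ to compress the region where $f$ is not near its minimum into a set of arbitrarily small $g_0$-volume. One then solves the conformal equation with right-hand side $f\circ\phi$, which is now an $L^q$-small perturbation of the tractable constant-negative case and can be handled by the implicit function theorem or monotone iteration exactly as in your easy case, obtaining $\tilde g\in[g_0]$ with $R_{\tilde g}=f\circ\phi$. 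Setting $g=(\phi^{-1})^*\tilde g$ gives $R_g=f$. Without this diffeomorphism step the reduction to a single conformal class is over-rigid, and the proposal as written does not establish the theorem.
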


We will now use this theorem to mimic the initial data for a
Robertson-Walker spacetime. Recall that for Robertson-Walker spacetimes 
the extrinsic curvature $p_{ab}$ is proportional to the metric.

\begin{thm}\label{Theorem 2} Given any closed 3-manifold $\Sigma $ and 
smooth function $\rho $ there exists initial data on $\Sigma $ with energy 
density $\rho $ and $J^a=0$. Furthermore, $p$ is equal to a constant. 
In the case that $\rho =0$, given any nonzero constant $C$ there exists 
vacuum initial data with $p=C$.
\end{thm}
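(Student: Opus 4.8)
The plan is to imitate Robertson--Walker initial data by choosing the extrinsic curvature to be pure trace, $p_{ab}=\lambda g_{ab}$ for a constant $\lambda$ still to be fixed. Then $p=g^{ab}p_{ab}=3\lambda$ is automatically a constant, and the ``momentum'' tensor is $p^{ab}-pg^{ab}=-2\lambda g^{ab}$, whose divergence vanishes identically because $\lambda$ is constant and $D_b g^{ab}=0$. Hence for \emph{any} metric the momentum constraint holds with $J^a=0$, and the entire problem reduces to the Hamiltonian constraint. Using $p_{ab}p^{ab}=3\lambda^2$ and $p^2=9\lambda^2$ in three dimensions, that constraint becomes $R+6\lambda^2=16\pi\rho$, i.e. we must exhibit a metric on $\Sigma$ whose scalar curvature is the prescribed smooth function $R=16\pi\rho-6\lambda^2$.

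This is precisely the setting of Theorem~\ref{Theorem 1}: $\Sigma$ is a closed $3$-manifold, so $n=3\ge 3$, and all that is required is that $16\pi\rho-6\lambda^2$ be negative somewhere. Since $\Sigma$ is compact and $\rho$ is smooth, $\rho$ attains a finite maximum; choosing any $\lambda$ with $6\lambda^2>16\pi\max_\Sigma\rho$ forces $16\pi\rho-6\lambda^2<0$ everywhere, in particular somewhere. Theorem~\ref{Theorem 1} then produces a smooth metric $g_{ab}$ with that scalar curvature, and $(\Sigma,g_{ab},p_{ab}=\lambda g_{ab},\rho,J^a=0)$ satisfies both constraints. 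It is smooth, $\Sigma$ being closed it is geodesically complete, and when $\rho\ge 0$ the local energy condition $\rho\ge\sqrt{J_aJ^a}=0$ holds, so the data is physically reasonable; taking $\rho\equiv 0$ gives vacuum initial data. This establishes the first two assertions.

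For the last assertion put $\rho=0$ and let $C\ne 0$ be the desired value of $p$. Set $\lambda=C/3$, so that $p=3\lambda=C$; the Hamiltonian constraint now reads $R=-6\lambda^2=-\tfrac{2}{3}C^2$, a negative constant, which is in particular negative somewhere, so Theorem~\ref{Theorem 1} again supplies a metric $g_{ab}$ with this (constant, negative) scalar curvature. Then $(\Sigma,g_{ab},(C/3)g_{ab},0,0)$ is the advertised vacuum initial data with $p=C$.

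There is no essential difficulty here once Theorem~\ref{Theorem 1} is available: the pure-trace ansatz disposes of the momentum constraint for free, and the only thing to monitor is that the target scalar curvature $16\pi\rho-6\lambda^2$ can be driven negative by enlarging $|\lambda|$, which compactness of $\Sigma$ guarantees. The one point deserving a careful word is the difference between data that merely solves the constraints, available for every smooth $\rho$, and \emph{physically reasonable} data, which in addition needs $\rho\ge 0$; the construction delivers the former always and the latter exactly when $\rho$ is nonnegative.
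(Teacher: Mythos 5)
Your proof is correct and follows essentially the same route as the paper's: the pure-trace ansatz $p_{ab}=\lambda g_{ab}$ (the paper writes $A_0$) kills the momentum constraint identically, reduces the Hamiltonian constraint to prescribing $R=16\pi\rho-6\lambda^2$, and compactness of $\Sigma$ lets you enlarge $\lambda$ until this target is everywhere negative so that Theorem~\ref{Theorem 1} applies. The only cosmetic difference is in the final vacuum clause, where the paper obtains $R=-\tfrac{2}{3}C^2$ by rescaling a fixed metric by a constant rather than by invoking Theorem~\ref{Theorem 1} a second time; both are equally valid.
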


\begin{proof}
Since $\Sigma $ is compact and $\rho $ is smooth, $\rho $ 
attains a maximum on $\Sigma $, call it $\rho _0$. Now, define the smooth 
function $f$ by $f\equiv 16\pi \rho -6A_0^2$ where 
$6A_0^2\equiv 16\pi \rho _0 + \epsilon $ 
and $\epsilon $ is any positive number. The function $f$ is always negative 
on $\Sigma $; and therefore, theorem \ref{Theorem 1} applies to $f$. Let $g_{ab}$ be a 
metric on $\Sigma $ with scalar curvature $f$. Now, define $p_{ab}=A_0g_{ab}$. 
Clearly, $g_{ab}$, $p_{ab}$, $\rho $, and $J^a=0$ form an initial data set on 
$\Sigma $. Taking the trace of $p_{ab}$ yields $p=3A_0={\rm constant}$ which 
completes the first part of the proof. If $\rho =0$, then by rescaling the 
metric by a constant, the scalar curvature can be given any negative value 
$R=-\frac{2}{3} C^2$ where $C$ is an arbitrary constant. Let $g_{ab}$ be the 
metric with $R=-\frac{2}{ 3}C^2$, then $g_{ab}$ and $p_{ab}=\frac{C}{ 3}g_{ab}$ 
satisfy the constraints and $p=C$.
\end{proof}

Having just proven the existence theorem, it is useful to construct an explicit
 concrete example. Begin with the 3-manifold ${\mathbb R}^3$ with the metric 
given by ${\bf g}={dr^2\over {1+k^2r^2}}+r^2d\Omega ^2$ where r is the radial 
coordinate and $d\Omega ^2$ is the standard metric on the unit 2-sphere. 
The scalar curvature of the metric ${\bf g}$ is $R=-6k^2$. This space is 
just hyperbolic 3-space. By identifying points of this space via an 
appropriate group of discrete isometries, it is possible to obtain a closed
manifold. Furthermore, locally the metric of the resulting closed manifold 
is the same as hyperbolic 3-space, because the identifications are done via 
isometries. A particular example of such a manifold is the hyperbolic 
dodecahedron space. It is obtained from identifying opposite faces of a solid 
dodecahedron after a counter-clockwise rotation of $3\pi / 5$ radians. 
One choice of initial data on these spaces is $g_{ab}$, $p_{ab}=Ag_{ab}$, 
$\rho =3 / 8\pi (A^2-k^2)$, and $J^a=0$. 
Moreover, if $\rho $ is 
taken to be the energy density of dust, i.e.   
 $T_{\alpha \beta}=\rho u_\alpha u_\beta$ where $u_\alpha u^\alpha =-1$, 
then the time evolution of the initial data is a Robertson-Walker spacetime 
of negative spatial curvature. If $k^2=A^2$, then the resulting globally hyperbolic spacetime 
is spherically symmetric and vacuum. Thus, it is Minkowski spacetime with points 
identified via a subgroup of the Lorentz group. 

Since the initial data constructed in theorem \ref{Theorem 2} is a generalization of 
Robertson-Walker initial data, a natural choice of a stress-energy tensor 
is that of a dust source. With this in mind, we have the following theorem. 

\begin{thm}\label{Theorem 3} For every closed 3-manifold $\Sigma $, there 
is a spacetime ${\mathbb R}\times \Sigma $ which is physically reasonable, and 
there is at least a single CMC hypersurface. Further, there is a vacuum 
globally hyperbolic spacetime of that form.  
\end{thm}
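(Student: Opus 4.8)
The plan is to use Theorem \ref{Theorem 2} to produce initial data with $p$ constant, evolve it with a well-posed matter (or vacuum) system, and then identify the topology of the development via the splitting theorem for globally hyperbolic spacetimes.

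First I would take the closed 3-manifold $\Sigma$ and pick any smooth $\rho\ge 0$ on it; Theorem \ref{Theorem 2} then yields an initial data set $(g_{ab},\,p_{ab}=A_0 g_{ab},\,\rho,\,J^a=0)$ with $p=3A_0$ constant. Since $J^a=0$ and $\rho\ge 0$, the local energy condition $\rho\ge\sqrt{J_aJ^a}$ holds, and geodesic completeness is automatic because $\Sigma$ is closed, so this is physically reasonable data. To evolve it I would model the source as dust, $T_{\alpha\beta}=\rho\,u_\alpha u_\beta$ with $u_\alpha u^\alpha=-1$ and $u^\alpha$ initially the unit normal to $\Sigma$ (which is precisely the statement $J^a=0$). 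By the standard well-posedness results for the Einstein equations coupled to nondissipative matter \cite{c,h}, the data develops into a globally hyperbolic spacetime $(M,g)$ having $\Sigma$ as a Cauchy slice, and in $M$ the local energy condition becomes the dominant energy condition, so $M$ is physically reasonable. Geroch's splitting theorem then gives a diffeomorphism $M\cong\mathbb{R}\times\Sigma$ under which $\Sigma$ sits as $\{0\}\times\Sigma$ with extrinsic curvature $A_0 g_{ab}$ and constant mean curvature $p=3A_0$; this is the required CMC hypersurface.

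For the vacuum assertion I would instead invoke the $\rho=0$ case of Theorem \ref{Theorem 2}, obtaining vacuum initial data with $p=C\ne 0$, evolve it with the vacuum Einstein equations (now no matter model is needed), and again apply Geroch's theorem to conclude that the development is a vacuum globally hyperbolic spacetime of the form $\mathbb{R}\times\Sigma$ containing the CMC slice $\Sigma$.

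The hard part --- really the only step beyond bookkeeping --- is the evolution, and the point to stress is that it requires no global-in-time control: the maximal Cauchy development of dust (or vacuum) initial data on a closed slice is globally hyperbolic and, by the splitting theorem, diffeomorphic to $\mathbb{R}\times\Sigma$ regardless of whether shell-crossing or curvature singularities eventually form. One subtlety worth checking is simply that dust qualifies as an admissible nondissipative source for the cited existence theorems and that the initial matter configuration is consistent with $J^a=0$; both hold, so no stronger matter model is necessary.
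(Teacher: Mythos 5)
Your proposal follows the paper's argument essentially verbatim: take the Theorem \ref{Theorem 2} data with $p_{ab}=A_0 g_{ab}$, model the source as dust, evolve via the standard well-posedness results of \cite{c,h}, and observe that the initial slice is by construction a CMC hypersurface; the vacuum case is the $\rho=0$ specialization. The only addition is your explicit mention of Geroch's splitting theorem to pin down the $\mathbb{R}\times\Sigma$ topology, which the paper leaves implicit but which is the correct justification for that step.
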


\begin{proof} If $\rho $ is defined as in the proof of theorem \ref{Theorem 2} and is 
taken to be the energy density of dust, i.e. 
$T_{\alpha \beta}=\rho u_\alpha u_\beta $ where $u_\alpha u^\alpha =-1$, 
then the initial data can be evolved into a globally hyperbolic spacetime. This is proven 
by showing that the coupled Einstein-matter equations form a strictly 
hyperbolic Leray system, and then applying the existence theorems \cite{c,h}. 
If $\rho $ is identically zero, then the evolution exists and gives a globally hyperbolic vacuum 
spacetime for same the reasons as in the dust filled case. Furthermore, in 
both cases the initial data has $p={\rm constant}$, so the initial hypersurface 
is a CMC hypersurface in the evolved globally hyperbolic spacetime. 
\end{proof}

The above theorem took $\rho $  to be the energy density of dust.
 However, one has the same result for  other sources such as the energy-momentum tensor 
for a minimally coupled scalar field $\phi$. It is given by 

$$T_{\alpha \beta } = \nabla_\alpha \phi \nabla_\beta \phi -
1/2 g_{\alpha \beta} \nabla_\gamma \phi \nabla^\gamma \phi\ .$$

Assuming an initially time-independent scalar field, this gives the
energy-momentum current vector 
$J_\alpha=T_{0\alpha}=-1/2 g_{0\alpha} (\nabla \phi)^2$ ($\alpha=1,2,3$). 
One is  free to choose a gauge such that $g_{0\alpha}=0$. With this choice, 
our assumption that $J^\alpha=0$ is satisfied. For the energy density $T_{00}$, we find
  
$$T_{00}=- g_{00} \frac 1 2 (\nabla \phi)^2 = \rho \ .$$

Additionally, a cosmological constant can be also used as a source; pick $\rho $ = constant
and $J^\alpha=0$.

The technique just used to construct initial data on closed 3-manifolds 
cannot be applied to asymptotically flat spacetimes for two reasons: If the 
extrinsic curvature is proportional to the metric, then it will not approach 
zero at infinity and if other asymptotic boundary conditions are applied 
in order to make $p_{ab}=Ag_{ab}$ hold at infinity, then an evolution theorem 
for such initial data sets would have to be proven. Furthermore, one would 
have to show that an asymptotically flat slice also existed in the evolved 
spacetime. Both of these statements are rather difficult to prove and may not 
even hold in general. Thus for asymptotically flat 3-manifolds a new 
procedure is needed for constructing initial data. Since by definition 
every asymptotically flat 3-manifold arises from the removal of a finite 
number of points from a closed 3-manifold, one would like to find out how 
to remove points in such a way that the metric and other fields have the  
correct asymptotic behavior. Motivated by the matching techniques used 
in describing the gravitational collapse of dust, we will remove a finite 
number of balls from a closed 3-manifold with initial data on it, and then 
smoothly glue in a spacelike hypersurface of the Schwarzschild spacetime 
producing the desired asymptotic behavior.

The technique just described will now be used to prove a lemma which 
guarantees the existence of initial data for manifolds admitting a special 
type of metric.

\begin{lem}\label{Lemma 4} Let $S$ be a closed 3-manifold with initial 
data. Suppose that in the neighborhood of some point $i_0$ of $S$ the initial 
data satisfies the following conditions: the metric $g_{ab}$ is spherically 
symmetric with scalar curvature $R=-6k^2$, $p_{ab}=Ag_{ab}$, $J^a=0$, and 
$\rho =3 / 8\pi (A^2-k^2)$.
Then $S-\{ i_0\}$ has asymptotically flat initial data.
\end{lem}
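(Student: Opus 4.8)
The strategy is to put asymptotically flat data on $\Sigma:=S-\{i_0\}$ by a single conformal rescaling of the given metric that blows the point $i_0$ up into an asymptotically flat end, together with a cut--off of the given extrinsic curvature near $i_0$; the sources are then read off from the constraints and shown to be physically reasonable. Observe at the outset that $\Sigma$ is asymptotically flat in the purely topological sense of the definition, a punctured ball about $i_0$ being diffeomorphic to ${\mathbb R}^3$ minus a ball.

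First I would normalize the local geometry. Since near $i_0$ the metric is spherically symmetric with constant scalar curvature $R\equiv-6k^2$, it is isometric to a geodesic ball in the space form of curvature $-k^2$, so on a neighbourhood $U=\{0\le r<r_0\}$ of $i_0$ one may take
\begin{equation}\nonumber
g=\frac{dr^2}{1+k^2r^2}+r^2\,d\Omega^2,\qquad p_{ab}=A\,g_{ab},\qquad J^a=0,\qquad \rho=\frac{3}{8\pi}(A^2-k^2)\ge0 .
\end{equation}
Choose a smooth positive function $\psi$ on $\Sigma$ with $\psi(r)=1/r$ on $\{r<r_0/4\}$ and $\psi\equiv1$ on $\Sigma\setminus U$, and set $\tilde g_{ab}=\psi^4 g_{ab}$. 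Off $U$ this is $g_{ab}$; on $\{r<r_0/4\}$ the substitution $s=1/r$ gives $\tilde g=\tfrac{s^2}{s^2+k^2}\,ds^2+s^2\,d\Omega^2$, a complete end on which $s\to\infty$, and an expansion in Cartesian coordinates $x^i=s\,\omega^i$ yields $\hat{\tilde g}_{ab}-\delta_{ab}=O(s^{-2})$ together with the decay of the first two derivatives demanded by the definition (indeed faster than required). Hence $(\Sigma,\tilde g)$ is geodesically complete, satisfies the metric fall--off at the new end, and coincides with the original data elsewhere.

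For the extrinsic curvature take $\tilde p_{ab}=\chi\,p_{ab}$ with $\chi$ smooth, $\chi\equiv0$ on $\{r<r_0/4\}$ and $\chi\equiv1$ on $\Sigma\setminus U$, and define $\tilde\rho,\tilde J^a$ by the Hamiltonian and momentum constraints for the pair $(\tilde g,\tilde p)$; all fields are then smooth. Off $U$ we have $(\tilde g,\tilde p)=(g,p)$, so $\tilde\rho=\rho\ge0$ and $\tilde J^a=0$. On $\{r<r_0/4\}$ the slice is time--symmetric: $\tilde p_{ab}=0$, whence $\tilde J^a=0$ and $\hat{\tilde p}_{ab}=0$ so its fall--off is trivial, while $\tilde g=r^{-4}g$ has scalar curvature $\tilde R=2k^2 s^{-4}\ge0$, so $16\pi\tilde\rho=\tilde R\ge0$; the new end therefore carries a physically reasonable, asymptotically vacuum source (the density $\propto s^{-4}$ being realizable, e.g.\ by a scalar field, and identically zero when $k=0$). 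Thus physical reasonableness is automatic on all of $\Sigma$ except the compact collar $K=\{r_0/4\le r\le r_0\}$.

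The one delicate step --- and the core of the argument --- is the dominant energy condition $\tilde\rho\ge\sqrt{\tilde J_a\tilde J^a}$ on $K$, where $(\tilde g,\tilde p)$ genuinely differs from $(g,p)$. There $16\pi\tilde\rho=\psi^{-5}(-8\Delta_g\psi+R\psi)+6A^2\chi^2\psi^{-8}$ (the first term being $\tilde R$), while $\tilde J^a$ is a first--order expression in $\nabla\psi$ and $\nabla\chi$ that vanishes wherever both are constant. The obstruction is structural: switching $\tilde p$ on from zero feeds $\tilde\rho$ only quadratically in $\chi$ but $\tilde J$ linearly in $\nabla\chi$, and capping the factor $1/r$ down to a constant forces $\tilde R<0$ somewhere on $K$, so both effects must be controlled simultaneously. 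I expect this to be the hard part; the plan is to close it by taking $r_0$ small and choosing the profiles of $\psi$ and $\chi$ so that the cut--off of $\chi$ occurs where $\psi^{-5}(-8\Delta_g\psi+R\psi)$ is strictly positive and dominates the momentum term pointwise on $K$ (alternatively, by replacing the explicit cut--off with a global solution on $\Sigma$ of a Lichnerowicz--type equation carrying a prescribed non--negative source and a pole at $i_0$). Granting such an estimate, $(\Sigma,\tilde g,\tilde p,\tilde\rho,\tilde J)$ is physically reasonable asymptotically flat initial data on $S-\{i_0\}$, and the lemma follows.
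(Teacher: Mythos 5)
Your construction is incomplete at exactly the point you flag as ``the core of the argument'': the dominant energy condition $\tilde\rho\ge\sqrt{\tilde J_a\tilde J^a}$ on the transition collar $K$ is never established, and the lemma is not proved without it. The difficulty is not a routine technicality. Your ansatz $\tilde p_{ab}=\chi\,p_{ab}$ is pure trace with a non-constant coefficient, so the momentum constraint forces $\tilde J^a\propto \partial^a\bigl(\chi\psi^{-4}\bigr)\neq0$ wherever either profile varies --- including the sub-region where $\chi\equiv1$ but $\psi$ is being capped from $1/r$ down to $1$. There $\tilde R=\psi^{-5}(-8\Delta_g\psi+R\psi)$ tends to $R=-6k^2<0$ as $\psi\to1$, so the conformal term you hope will dominate is not even positive near the outer edge of $K$; meanwhile $|\tilde J|$ is first order in the profile derivatives and does not shrink with $\chi$, and shrinking $r_0$ scales those derivatives like $1/r_0$, so ``take $r_0$ small'' does not obviously close the estimate. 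The alternative you sketch (a Lichnerowicz-type equation with a pole at $i_0$) fixes only the Hamiltonian constraint; because the trace of $p$ must go from $3A\neq0$ in the interior to $0$ at the new infinity, the momentum constraint does not decouple, and you are back to producing a nonzero $\tilde J$ that must be dominated by $\tilde\rho$ --- precisely the step left unproven. (Your computation of the blown-up end itself, $\tilde g=\tfrac{s^2}{s^2+k^2}ds^2+s^2d\Omega^2$ with $\tilde R=2k^2s^{-4}$, is fine; the gap is entirely in the collar.)

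For comparison, the paper's proof avoids any such estimate by never leaving spherical symmetry and by using a different free function. Writing ${\bf g}=\xi\,dr^2+r^2d\Omega^2$, ${\bf p}=\alpha\,dr^2+\beta r^2d\Omega^2$, it imposes $J^r=0$ identically by choosing $\alpha=\xi(\beta'r+\beta)$, prescribes a dust profile $\rho=\tfrac{3}{8\pi}(A^2-k^2)(1-\eta)$ that is cut down to zero by a monotone $\eta$, and integrates the Hamiltonian constraint exactly for $\xi$ and $\beta^2$; the result interpolates smoothly between the given data at $r\ge r_1$ and time-symmetric Schwarzschild data at $r\le r_0$, and the asymptotically flat end is obtained by extending the Schwarzschild slice across the throat at $r=2M(r_0)$. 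In that construction $\rho\ge0$ and $J^a=0$ hold by inspection, so the local energy condition is automatic --- which is exactly the ingredient your conformal-blow-up-plus-cutoff route still owes.
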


\begin{proof} The spherical symmetry in a neighborhood of $i_0$ means that 
in spherical coordinates centered at $i_0$ the metric takes the form 
${\bf g}=\xi dr^2 + r^2d\Omega ^2$ for $r<r_2$ ($r_2$ fixed) where $\xi $ 
only depends on $r$. Since the metric is geodesically complete and $R=-6k^2$, 
it follows that $\xi ={(1+k^2r^2)}^{-1}$. Now, the goal is to smoothly match 
the above spherically symmetric initial data to the initial data for the 
Schwarzschild spacetime by using a small amount of dust. However, before doing 
this the constraints for general spherically symmetric initial data will be 
written.

A general spherically symmetric metric and extrinsic curvature can be written 
as ${\bf g}=\xi dr^2 + r^2d\Omega ^2$ and ${\bf p}=\alpha dr^2 +\beta r^2d\Omega ^2$ 
where $\xi$, $\alpha$, and $\beta$ only depend on $r$. The constraints can be 
written as 
$$\rho =\frac{1} {16\pi}\Bigl[{{2\xi ^{-2}\xi'}\over r} +{{2(1-\xi^{-1})}\over {r^2}}+4\alpha \beta \xi ^{-1}+2\beta ^2 \Bigr]$$ 
and
$$J^r={1\over {4\pi}}\Bigl[-\beta '\xi ^{-1}+{{\xi ^{-1}}\over r}(\alpha \xi ^{-1}-\beta )\Bigr]\ .$$
Now, assuming that $J^r=0$, the constraints reduce to
$$\rho = \frac{1}  {8\pi r^2}  \frac{d} {dr}\Bigl[\beta ^2r^3-\xi^{-1}r+r\Bigr]$$
and $\alpha =\xi (\beta 'r+\beta )$. Integrating the Hamiltonian constraint 
from $r$ to $r_1$ yields $\beta ^2r^3-\xi ^{-1}r+r+\int_r^{r_1} 8\pi \rho r^2dr\equiv 2C$. Therefore, 
$$\beta ^2={\xi ^{-1}-\Bigl(1-{2M(r)\over r}\Bigr)\over {r^2}}$$ 
where $M(r)\equiv C-\int_r^{r_1} 4\pi \rho r^2dr$.

Now, the match to the initial data for Schwarzschild initial data will be 
performed. Choose any numbers $r_0$ and $r_1$ such that 
$r_2>r_1>r_0>(A^2-k^2)r_1^3$. Let $\eta $ be a smooth monotonically decreasing 
function which is zero for $r_2>r\ge r_1$ and one for $r\le r_0$. Now, choose 
$\rho $ to be equal to ${3\over {8\pi }}(A^2-k^2)(1-\eta )$. At $r=r_1$, the 
quantities $\xi $, $\alpha $, $\beta $, and $\rho $ should correspond to the 
initial data on $S$, and in order for this to hold $C$ must be equal to 
${(A^2-k^2)\over 2}r_1^3$. On the other hand the metric and other expressions 
should become the Schwarzschild initial data for $r\le r_0$ which implies 
$\beta =0$ and $\alpha =0$. This means $\xi ^{-1}=1-{2M(r)\over r}$ for 
$r\le r_0$. Since the metric is required to equal the metric with $R=-6k^2$ for $r\ge r_1$ 
, $\xi ^{-1}$ must equal $1+k^2r^2$ for $r\ge r_1$. In order to smoothly 
interpolate between the two metrics, a natural choice of the metric is 
$$\xi =\Bigl(1-{2M(r)\over r}\Bigr)^{-1}\eta +(1-\eta)\bigl(1+k^2r^2\bigr)^{-1}$$
for $r_0\le r\le r_1$. Both $\xi $ and the induced $\beta ^2$ are smooth and 
positive. Further, they go smoothly to the appropriate functions at the points 
$r_0$ and $r_1$. Finally, choosing $\alpha = \xi (\beta 'r+\beta)$ the 
constraints are satisfied by construction, and the initial data is smooth. 
Moreover, the initial data is equal to the Schwarzschild initial data for 
$r\le r_0$ and the initial data on $S$ for $r\ge r_1$. Therefore, in order 
to obtain asymptotically flat initial data, smoothly extend the Schwarzschild 
initial data across the throat at $r=2M(r_0)$.
\end{proof}

The lemma just proven only applies to closed 3-manifolds possessing a special 
type of metric. Examples of 3-manifolds satisfying the hypothesis of the 
lemma are closed flat and hyperbolic 3-manifolds. It is clear that if a closed 
3-manifold has a finite number of points at which the conditions are satisfied,
 then the gluing procedure can be applied to each of the points to produce an 
asymptotically flat 3-manifold with many asymptotic regions. 
Since the definition of asymptotic flatness implies that every asymptotically 
flat manifold arises from the removal of a finite number of points from a 
closed manifold, we only need to prove every closed 3-manifold has initial 
data satisfying the hypothesis of lemma \ref{Lemma 4} in order to prove that 
every asymptotically flat 3-manifold has initial data. These ideas are employed
 to prove the following general existence theorem.

\begin{thm}\label{Theorem 5} Every asymptotically flat 3-manifold has initial data.
\end{thm}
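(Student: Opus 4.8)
The plan is to reduce Theorem~\ref{Theorem 5} to Lemma~\ref{Lemma 4} by showing that every asymptotically flat $3$-manifold $\Sigma$ arises from a closed $3$-manifold $S$ carrying initial data that meets the hypotheses of the lemma in the neighborhood of finitely many points. By the definition of asymptotic flatness, there is a compact $C\subset\Sigma$ so that $\Sigma-C$ is a finite disjoint union of ends, each diffeomorphic to ${\mathbb R}^3$ minus a ball; compactifying each end by a point produces a closed $3$-manifold $S$ with $\Sigma\cong S-\{i_1,\dots,i_N\}$. So it suffices to produce, on an arbitrary closed $3$-manifold $S$, initial data that is spherically symmetric with $R=-6k^2$, $p_{ab}=Ag_{ab}$, $J^a=0$, and $\rho=\tfrac{3}{8\pi}(A^2-k^2)$ on small disjoint balls around each $i_j$; then Lemma~\ref{Lemma 4}, applied successively at $i_1,\dots,i_N$ (the modifications are supported in disjoint coordinate balls, so they do not interfere), gives the desired asymptotically flat initial data on $\Sigma$.

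The key step is therefore a refinement of Theorem~\ref{Theorem 2}: first I would choose disjoint coordinate balls $B_j$ around the $i_j$ in $S$, and fix a model spherically symmetric metric on each $B_j$ of the form ${\bf g}=(1+k^2r^2)^{-1}dr^2+r^2d\Omega^2$ (a piece of hyperbolic space) for a common constant $k$, together with $p_{ab}=Ag_{ab}$ and the dust density $\rho=\tfrac{3}{8\pi}(A^2-k^2)$ there, where $A$ is chosen with $A^2>k^2$ so $\rho\ge 0$. Then I would prescribe a smooth nonnegative function $\rho$ on all of $S$ agreeing with $\tfrac{3}{8\pi}(A^2-k^2)$ on each $B_j$ (extend by a partition of unity, keeping it nonnegative), let $\rho_0=\max_S\rho$, set $6A_0^2=16\pi\rho_0+\epsilon$, and invoke Theorem~\ref{Theorem 1} to get a metric with scalar curvature $f=16\pi\rho-6A_0^2<0$ on $S$. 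The subtlety is that Theorem~\ref{Theorem 1} controls only the scalar curvature, not the metric itself, so the resulting metric will not be the prescribed hyperbolic form on $B_j$. I would handle this exactly as in the proof of Theorem~\ref{Theorem 2} combined with Lemma~\ref{Lemma 4}: use the freedom in choosing $\rho$ so that near each $i_j$ the prescribed scalar curvature is a negative constant $-6k^2$; then appeal to the local rigidity already used in the proof of Lemma~\ref{Lemma 4}—a geodesically complete spherically symmetric metric with $R=-6k^2$ is forced to be $\xi=(1+k^2r^2)^{-1}$—to arrange that, after a conformal or bump modification supported in $B_j$, the metric is genuinely spherically symmetric with $R=-6k^2$ in a smaller ball $B_j'\subset B_j$, which is all the lemma needs.

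The main obstacle I expect is precisely this gluing/compatibility issue: Kazdan--Warner (Theorem~\ref{Theorem 1}) is a prescribed-scalar-curvature result and gives no control on the local geometry, whereas Lemma~\ref{Lemma 4} requires exact spherical symmetry on a neighborhood of each $i_j$. The cleanest route is to avoid running Theorem~\ref{Theorem 1} on the nose and instead argue by a deformation: take the closed manifold $S$, carry out the matching of Lemma~\ref{Lemma 4} abstractly around each $i_j$ using a \emph{fixed} hyperbolic ball as the seed, and only then fill in the rest of $S$ with a metric of negative scalar curvature compatible with the dust density $\rho$ there. Concretely, one first defines $\rho$ to be constant near the $i_j$ and arbitrary smooth nonnegative elsewhere, then applies Theorem~\ref{Theorem 1} to a function $f$ that equals $-6k^2$ near the $i_j$; the remaining task is a standard (if technical) local surgery turning ``$R=-6k^2$ near $i_j$'' into ``spherically symmetric with $R=-6k^2$ near $i_j$,'' after which Lemma~\ref{Lemma 4} finishes the argument on $\Sigma=S-\{i_1,\dots,i_N\}$. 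Everything downstream of that—evolving into a globally hyperbolic ${\mathbb R}\times\Sigma$—is then just the dust/vacuum existence theorems already cited.
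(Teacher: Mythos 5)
Your reduction of the theorem to Lemma~\ref{Lemma 4} via one-point compactification of each end is exactly the paper's reduction, and you correctly identify the crux: Lemma~\ref{Lemma 4} needs the metric on the closed manifold to be \emph{spherically symmetric with $R=-6k^2$} near each puncture, not merely to have $R=-6k^2$ there. But the route you propose to produce such data on the closed manifold~$S$ leaves that crux unresolved. Running Kazdan--Warner (Theorem~\ref{Theorem 1}) on a function $f$ that equals $-6k^2$ near each $i_j$ gives a metric whose scalar curvature is $-6k^2$ there, but gives no control whatever on the metric tensor itself; in particular it will not be rotationally invariant in any coordinate ball around $i_j$. The ``local rigidity'' you invoke (a geodesically complete spherically symmetric metric with $R=-6k^2$ must have $\xi=(1+k^2r^2)^{-1}$) does not help here, because rigidity characterizes the metric \emph{within} the class of spherically symmetric metrics --- it cannot promote a non-symmetric metric with constant scalar curvature to a symmetric one. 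The ``standard (if technical) local surgery'' you appeal to is exactly the missing step, and it is not standard: if you repair the local geometry by interpolating to the hyperbolic model metric in a ball, the scalar curvature in the transition annulus changes uncontrollably, your fixed $\rho$ no longer satisfies the Hamiltonian constraint, and you are forced to redefine $\rho$ --- at which point the Kazdan--Warner step has done nothing for you.

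The paper's proof avoids the issue by reversing the logical order, and this is the idea your proposal is missing. It does \emph{not} prescribe scalar curvature at all. Take an arbitrary metric $g_{ab}$ on $\widetilde\Sigma$, choose coordinate balls $N_j$ around the points, and define
$\bar g_{ab}=(1-\eta)\,\psi_j^{-1}g_{ab}+\eta\, h_{ab}$ there, where $h_{ab}$ is the spherically symmetric hyperbolic metric with $R=-6k^2$ and $\eta$ is a bump function equal to $1$ near $i_j$. Then $\bar g_{ab}$ is a smooth metric on $\widetilde\Sigma$ that is \emph{exactly} the hyperbolic model near each $i_j$. Only afterward does one read off $\overline R$, pick $A$ with $6A^2=|\overline R_0|+\epsilon$, and \emph{define} $\rho\equiv(\overline R+6A^2)/16\pi\geq 0$, $p_{ab}=A\bar g_{ab}$, $J^a=0$. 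The constraints then hold by construction, and near each $i_j$ one has $\rho=\tfrac{3}{8\pi}(A^2-k^2)$, which is precisely the hypothesis of Lemma~\ref{Lemma 4}. Because the matter density is chosen \emph{after} the metric and is allowed to absorb whatever scalar curvature the interpolation produced, there is no compatibility problem to solve, and Theorem~\ref{Theorem 1} plays no role in this proof. Your proposal, repaired in the only natural way, collapses to this argument --- so the gap is not fatal, but as written the central step (the surgery) is unproved and the Kazdan--Warner detour is a red herring.
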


\begin{proof} By definition, every asymptotically flat 3-manifold $\Sigma $ 
has a compact subset $C$ such that $\Sigma -C$ has a finite number disconnected 
components each of which is diffeomorphic to ${\mathbb R}^3$ minus a ball $B$. One 
can easily compactify ${\mathbb R}^3 - B$ in a smooth way to obtain $S^3-B$. Using 
this compactification for each asymptotic region of $\Sigma $, it follows that 
$\Sigma $ is diffeomorphic to a closed 3-manifold $\widetilde \Sigma$ minus a 
finite number of points. (Remember that when we refer to manifolds, they only 
have a differentiable structure and no other structure.) This means that every 
asymptotically flat 3-manifold $\Sigma $ arises by removing points from a 
closed 3-manifold $\widetilde \Sigma $.

Now, let $\widetilde \Sigma $ be a closed 3-manifold and $\{x_1,x_2,\dots ,x_n\}$ 
be any set of a finite number of points in $\widetilde \Sigma $. Pick any metric $g_{ab}$ 
on $\widetilde \Sigma $. Next, pick a neighborhood $N_j$ about each point $x_j$ 
such that $N_j$ is diffeomorphic to a ball in ${\mathbb R}^3$ with radius $r_3$ 
via a diffeomorphism ${\psi _j}\colon N_j\to B$. Choose a smooth monotonically 
decreasing function $\eta $ which is equal to zero for $r_3>r>r_2$ and one 
for $r_2>r_1>r$. Let ${\tilde g}_{ab}^j$ be the metric defined on $B$ by 
$${\tilde g}_{ab}^j\equiv (1-\eta){\psi _j^{-1}}g_{ab}|_{N_j} + \eta h_{ab}$$
where $\psi _j^{-1}g_{ab}|_{N_j}$ is the pullback of $g_{ab}|_{N_j}$ onto 
$B$ and $h_{ab}$ is the spherically symmetric metric with constant scalar 
curvature $-6k^2$ on ${\mathbb R}^3$. Finally, define a new metric ${\bar g}_{ab}$ 
on $\widetilde \Sigma $ to be equal to $g_{ab}$ on ${\widetilde \Sigma }- \bigcup _jN_j$ and $\psi _j {\tilde g}^j_{ab}$ on each $N_j$. Then ${\bar g}_{ab}$ is 
spherically symmetric and has constant scalar curvature in neighborhoods of 
each of the points $x_j$. 

Next, initial data will be constructed. Let $\overline R$ be the scalar curvature
 of ${\bar g}_{ab}$. The scalar curvature ${\overline R}$ is smooth and 
$\widetilde \Sigma $ is compact, this means $\overline R$ attains a minimum 
denote it by ${\overline R}_0$. Now, define a smooth function 
$\rho \equiv {{{\overline R}+6A^2}\over {16\pi }}$ where 
$6A^2\equiv |{\overline R}_0| + \epsilon$, $\epsilon >0$. Since $\rho \ge 0$, 
the choice $J^a=0$ means the local energy condition is satisfied. Choose 
${\bar g}_{ab}$, $p_{ab}=A{\bar g}_{ab}$, $J^a=0$, and 
$\rho = {{{\overline R}+6A^2}\over {16\pi }}$ as initial data on $\widetilde \Sigma $.

Finally, apply lemma \ref{Lemma 4} at each point in $\{ x_1,x_2,\dots ,x_n\}$. This can be 
done because the proof of lemma \ref{Lemma 4} only uses the local structure of the metric 
and other fields. Therefore, ${\widetilde \Sigma }-\{ x_1,x_2,\dots ,x_n\}$ 
has asymptotically flat initial data.
\end{proof}

The initial data constructed in the above theorem has $p={\rm constant}$ 
in a compact region but vanishing in the asymptotic regions. However, it is  
sometimes useful to use initial data with $p$ constant everywhere, for 
example when studying the behavior of singularities \cite{e}. Although, 
it will be shown that initial data with $p$ everywhere zero does not always 
exist for all manifolds, the following theorem is an existence theorem for 
initial data with $p$ equal to a nonzero constant on an arbitrary asymptotically
 flat manifold. For initial data of this form, different asymptotic boundary 
conditions are imposed on the metric and other fields in 
order that the Hamiltonian constraint hold at infinity. Initial data with 
$p$ equal to a constant is a particular example of {\it asymptotically} 
{\it null} initial data. The reason for the terminology is that the hypersurface
 reaches null infinity in the evolved spacetime. Usually, one requires the 
metric and extrinsic curvature to approach the metric and extrinsic curvature 
of a CMC hypersurface of the Schwarzschild spacetime. However, at the present 
time there are no standardized conditions on the rate at which the metric and 
extrinsic curvature approach the Schwarzschild asymptotically null initial data,
 so we will prove existence of asymptotically null initial data with arbitrary 
topology and $p={\rm constant}$ under the strongest conditions, namely, that 
outside a compact set the initial data in each of the asymptotic regions is 
equal to the initial data of a CMC hypersurface in the Schwarzschild spacetime. 
  
\begin{thm}\label{Theorem 6} Every asymptotically flat 3-manifold has 
asymptotically null initial data with $p={\rm constant}$.
\end{thm}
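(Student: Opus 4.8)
The plan is to mirror the structure of the proof of Theorem~\ref{Theorem 5}, replacing the Schwarzschild time-symmetric ($p_{ab}=0$ asymptotically) gluing of Lemma~\ref{Lemma 4} with a gluing to a constant-mean-curvature slice of the Schwarzschild spacetime carrying $p=\mathrm{constant}\neq 0$. First I would recall that any asymptotically flat $\Sigma$ arises by deleting finitely many points $\{x_1,\dots,x_n\}$ from a closed 3-manifold $\widetilde\Sigma$; as in Theorem~\ref{Theorem 5}, I pick a metric $\bar g_{ab}$ on $\widetilde\Sigma$ that is spherically symmetric with constant scalar curvature $R=-6k^2$ in a neighborhood of each $x_j$, and I take $p_{ab}=A\bar g_{ab}$, $J^a=0$, $\rho=(\overline R+6A^2)/16\pi$ with $6A^2=|\overline R_0|+\epsilon$, exactly the data of Theorem~\ref{Theorem 5}. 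The only new work is a local replacement lemma: in the spherically symmetric region $r<r_2$ around each $x_j$, smoothly interpolate between this seed data and the CMC-Schwarzschild data, now keeping the trace $p=3A$ fixed throughout rather than letting $\beta\to0$.

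Concretely, I would redo the spherically symmetric constraint analysis of Lemma~\ref{Lemma 4}, but without the simplifying assumption $\beta=\alpha=0$ at the inner radius. With ${\bf g}=\xi\,dr^2+r^2d\Omega^2$ and ${\bf p}=\alpha\,dr^2+\beta r^2 d\Omega^2$, imposing the trace condition $p=\xi^{-1}\alpha+2\beta=3A$ everywhere and $J^r=0$ (which gave $\alpha=\xi(\beta' r+\beta)$) reduces the Hamiltonian constraint to a first-order ODE for $\beta$ (equivalently for the mass function $M(r)$), just as before but with the source $\rho$ now also containing the $p$-dependent terms $4\alpha\beta\xi^{-1}+2\beta^2$ reorganized via the trace relation. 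I would then choose a smooth monotone cutoff $\eta$ supported near $r_1$, set $\beta=A$ and $\xi^{-1}=1+k^2r^2$ for $r\ge r_1$ (matching the seed data), and for $r\le r_0$ set $\beta$ and $\xi$ equal to the values for a genuine CMC slice of Schwarzschild with the chosen $p=3A$; in between, interpolate $\xi$ by the same convex-combination device $\xi=(\text{Schw})^{-1}\eta+(1-\eta)(1+k^2r^2)^{-1}$ inverted, let $\alpha=\xi(\beta' r+\beta)$ be determined, and read off $\rho$ from the Hamiltonian constraint. One checks $\rho\ge0$ (shrinking $r_0,r_1$ and using that the correction is $O(r^2)$ small, as in the inequality $r_2>r_1>r_0>(A^2-k^2)r_1^3$ of Lemma~\ref{Lemma 4}) so the dominant energy condition with $J^a=0$ holds, and that everything is $C^\infty$ and spherically symmetric across the match. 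Finally, past $r\le r_0$ one extends along the CMC-Schwarzschild slice out to its asymptotic end, which by construction is the prescribed asymptotically null end, giving an asymptotically flat manifold diffeomorphic to $\Sigma$ with smooth initial data having $p=3A=\mathrm{constant}$ globally.

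The main obstacle I anticipate is not the topology — that is handled verbatim as in Theorem~\ref{Theorem 5} — but verifying that a CMC slice of Schwarzschild with the \emph{same} constant $p=3A$ dictated by the closed-manifold seed data actually exists with an asymptotically null end of the required form, and that the interpolating ODE for $\beta$ (or $M(r)$) stays in the regime where $\xi>0$ and $\beta^2$ computed from the Hamiltonian constraint is nonnegative. Since $A$ here is a free parameter controlled only by $\epsilon>0$, and the Schwarzschild hyperboloidal CMC family is parametrized by its mean curvature, one has enough freedom to arrange the match; the delicate point is the sign of $\rho$ in the transition annulus $r_0\le r\le r_1$, which I would control by taking $r_1$ small so that the hyperbolic-type seed term dominates, exactly the role played by the condition $r_0>(A^2-k^2)r_1^3$ in Lemma~\ref{Lemma 4}. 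I expect that, as in Lemma~\ref{Lemma 4}, the constraints are ``satisfied by construction'' once $\xi$ and $\beta$ are chosen smooth and positive with the correct boundary jets at $r_0$ and $r_1$, so the proof reduces to exhibiting such an interpolation and checking the energy inequality, which is routine.
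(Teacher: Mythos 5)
Your overall strategy -- reuse the closed-manifold data of theorem \ref{Theorem 5} and replace the gluing of lemma \ref{Lemma 4} with a gluing to a CMC Schwarzschild slice -- is the same as the paper's. However, your execution contains an inconsistency that, once resolved, collapses to the paper's simpler construction. You propose to impose both $p=\xi^{-1}\alpha+2\beta=3A$ and $J^r=0$ (hence $\alpha=\xi(\beta'r+\beta)$) while freely interpolating $\beta$ by a cutoff between the seed value $A$ and some ``genuine CMC Schwarzschild'' value. But substituting the second relation into the first gives $\beta'r+3\beta=3A$, i.e.\ $\tfrac{d}{dr}(r^3\beta)=3Ar^2$, whose general solution is $\beta=A+c/r^3$. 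Matching the seed data $\beta=A$ at $r=r_1$ forces $c=0$, so $\beta\equiv A$ and $\alpha=A\xi$ on the whole annulus: the extrinsic curvature is \emph{forced} to remain $p_{ab}=Ag_{ab}$ exactly. There is no freedom to interpolate $\beta$, and any attempt to do so while demanding $p=3A$ will generate $J^r\neq 0$, which you have not budgeted for.

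The paper takes advantage of this rigidity from the start: it keeps $p_{ab}=Ag_{ab}$ everywhere, so the momentum constraint is automatically satisfied (it reads $D_b(-2Ag^{ab})=0$), and the Hamiltonian constraint becomes a first-order linear ODE in $\xi^{-1}$ driven by $\rho$. One then cuts off $\rho$ by a smooth monotone $\gamma$ (so $\bar\rho=\rho\gamma\ge 0$ is built in, eliminating your worry about the sign of $\rho$ in the transition annulus) and integrates to get $\xi^{-1}=1-2M(r)/r+A^2r^2$. For $r<r_0$ this is vacuum spherically symmetric data with $p_{ab}=Ag_{ab}$, which by Birkhoff is automatically a CMC slice of Schwarzschild -- so the existence of the target Schwarzschild slice that you flag as ``the main obstacle'' is not an issue at all. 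Your alternative of choosing $\xi$ by a convex combination (as in lemma \ref{Lemma 4}) and then reading off $\rho$ would require a separate verification that $\rho\ge 0$, which the paper's order of operations avoids.
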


\begin{proof} Let $\widetilde \Sigma $ be a closed 3-manifold and $\{ x_1,x_2,\dots ,x_n\}$ 
a finite set of points in $\widetilde \Sigma$. Now, let ${\bar g}_{ab}$ be the 
metric defined in the proof of theorem \ref{Theorem 5}. Pick the initial data on 
$\widetilde \Sigma $ to be the initial data from the proof of theorem \ref{Theorem 5}, i.e. 
${\bar g}_{ab}$, $p_{ab}=A{\bar g}_{ab}$, $J^a=0$, and 
$\rho ={{{\overline R}+ 6A^2}\over {16\pi }}$.
Recall the initial data is spherically symmetric in neighborhoods of each of 
the points $\{ x_1,x_2,\dots ,x_n\}$. The goal is to match this initial data 
to the Schwarzschild initial data for a CMC hypersurface. Before doing this, 
the constraints for spherically symmetric initial data with $p$ equal to a 
constant will be derived.

Given a spherically symmetric metric and extrinsic curvature, they can be 
written as ${\bf g}=\xi dr^2+r^2d\Omega ^2$ and ${\bf p}=\alpha dr^2 + \beta r^2d\Omega ^2$, 
where the coefficients only depend on $r$. The constraints can be written as 
$$\rho = {1\over {16\pi }}{\Bigl [ {{2\xi ^{-2}\xi '}\over r}+{{2(1-\xi ^{-1})}\over {r^2}}+ 4\alpha \beta \xi ^{-1} + 2\beta ^2\Bigr ]}$$
and 
$$J^r = {1\over {4\pi }}{\Bigl [-\beta '\xi ^{-1} + {{\xi ^{-1}}\over {r}}{(\alpha \xi ^{-1} - \beta )}\Bigr ]}\ .$$
Now, assuming that the extrinsic curvature is proportional to the metric, i.e. 
$p_{ab}=Ag_{ab}$, it follows that $J^r=0$ and the Hamiltonian constraint 
becomes 
$$\rho = {1\over {16\pi }}\Bigl[{{{2\xi ^{-2}}\xi '}\over r}+{{2(1-\xi ^{-1})}\over {r^2}} +6A^2\Bigr]={1\over {8\pi r^2}}\Bigl[1-{d\over {dr}}\bigl(r\xi ^{-1}\bigr)\Bigr]+{3\over {8\pi}}A^2\ .$$    
In our case, pick one of the points $x_j$ for which the the initial data on 
$\widetilde \Sigma $ is spherically symmetric, then $\rho ={3\over {8\pi }}\bigl(A^2-k^2\bigr)$ 
in a neighborhood of $x_j$. As in lemma \ref{Lemma 4}, let $r_2$ be a fixed number for 
which the metric and extrinsic curvature can be expressed as 
${\bf g}=\xi dr^2 +r^2d\Omega ^2$ and ${\bf p}=\alpha dr^2 +\beta r^2d\Omega ^2$
 for all $r<r_2$. Choose any numbers $r_0$ and $r_1$ such that 
$r_2>r_1>r_0>\bigl (A^2-k^2\bigr )r_1^3$. For the above initial data 
$\rho ={3\over {8\pi }}\bigl(A^2-k^2\bigr)$ for $r<r_2$. Let $\gamma $ be a 
smooth monotonically increasing function which is equal to one for $r>r_1$ 
and zero for $r<r_0$. Now, define ${\bar \rho}\equiv \rho \gamma $. 
Substituting $\bar \rho $ into the Hamiltonian constraint and integrating 
from $r$ to $r_1$ yields $\xi ^{-1}= 1-{2M(r)\over r} +A^2r^2$ where 
$M(r)\equiv {{\bigl(A^2-k^2\bigr)}\over 2}r_1^3- \int_r^{r_1}4\pi {\bar \rho}r^2dr$. 
By definition $\xi $ is smooth, and it is easily shown to be positive. For 
$r>r_1$, $\xi ^{-1}=1+k^2r^2$. Now, if $r<r_0$, then $\xi ^{-1}=1-{2M({r_0})\over r}+A^2r^2$ 
which is the metric for a CMC hypersurface in the Schwarzschild spacetime. 
Finally, extend $\xi $ across the throat by using Kruskal coordinates. 
This gives an asymptotic region as in the proof of lemma \ref{Lemma 4} . Since the above 
arguments were all only local, they can be applied at each of the points 
$\{ x_1,x_2,\dots ,x_n\}$. Therefore, there is initial data on 
${\widetilde \Sigma}-\{ x_1,x_2,\dots ,x_n\}$ with $p=3A$.
\end{proof}

Since the initial data constructed in theorems \ref{Theorem 5} and \ref{Theorem 6} has momentum density 
equal to zero, $J^a=0$, one choice for the matter source is dust, and just as 
in the case of the closed manifolds the initial data can be evolved into a globally hyperbolic 
spacetime. Each asymptotic region of one these evolved spacetimes is just the 
asymptotic region of the Schwarzschild spacetime, while the interior region 
is a piece of one of the spatially closed globally hyperbolic spacetimes constructed 
in theorem \ref{Theorem 3}. 
Using dust as the source of $\rho $ yields the following theorem.

\begin{thm}\label{Theorem 7} For every asymptotically flat 3-manifold 
$\Sigma $, there is a globally hyperbolic spacetime ${\mathbb R}\times \Sigma $ which is physically 
reasonable. Furthermore, one can always find a globally hyperbolic spacetime ${\mathbb R}\times \Sigma $
 with a CMC hypersurface. 
\end{thm}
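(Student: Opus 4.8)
The plan is to evolve the initial data produced in Theorems~\ref{Theorem 5} and~\ref{Theorem 6}, in exactly the manner used for the closed case in Theorem~\ref{Theorem 3}. For the first statement I would take the physically reasonable asymptotically flat initial data $(\Sigma ,g_{ab},p_{ab},\rho ,J^a=0)$ of Theorem~\ref{Theorem 5} — note that this data is already smooth, has $g_{ab}$ geodesically complete, and has $\rho \ge 0$, $J^a=0$ — and interpret $\rho$ as the energy density of dust at rest relative to $\Sigma$, i.e.\ $T_{\alpha \beta}=\rho u_\alpha u_\beta$ with $u^\alpha$ the future unit normal to $\Sigma$ on the initial slice. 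The Hamiltonian and momentum constraints then hold by construction and the local energy condition is trivially satisfied; by the discussion in Section~II it becomes the dominant energy condition in any development. The coupled Einstein--dust equations form a strictly hyperbolic Leray system, so the existence theorems of \cite{c,h} furnish a smooth globally hyperbolic development of the data in which $\Sigma$ sits as a Cauchy slice; since any globally hyperbolic spacetime is diffeomorphic to a product over a Cauchy surface, this development is of the form ${\mathbb R}\times \Sigma$ and is physically reasonable. One may equally use a minimally coupled scalar field or a cosmological constant as the source, as noted after Theorem~\ref{Theorem 3}; this avoids the minor nuisance that the dust four-velocity is undefined where $\rho$ vanishes.

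For the second statement I would instead start from the asymptotically null initial data of Theorem~\ref{Theorem 6}, for which $p=3A$ is a nonzero constant everywhere on $\Sigma$ and each asymptotic end is exactly Schwarzschild CMC initial data. Evolving this data by the same procedure — dust (or a scalar field) supplying $\rho$ on the compact set where it is nonzero, vacuum outside — produces a globally hyperbolic ${\mathbb R}\times \Sigma$ in which the initial slice has constant mean curvature $p=3A$ and is therefore a CMC hypersurface in the sense of Section~II. A useful structural remark, worth recording, is that because the matter is supported in a compact set, each asymptotic end of the development coincides, by finite propagation speed and uniqueness, with the exterior region of a Schwarzschild solution, while the interior is a truncation of one of the closed globally hyperbolic spacetimes built in Theorem~\ref{Theorem 3}.

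The main obstacle is the well-posedness step on a noncompact $\Sigma$: one must verify that the Leray-hyperbolic formulation used in Theorem~\ref{Theorem 3} still applies when $\Sigma$ is open and the data is only asymptotically flat (respectively asymptotically null), and that the development obtained is globally hyperbolic as a whole rather than merely near each point. The cleanest route is to pass to the \emph{maximal} globally hyperbolic development, which exists and by definition admits $\Sigma$ as a Cauchy slice; possible dust shell-crossings or other matter caustics lie strictly to the future or past of $\Sigma$ and do not affect this conclusion. One then restricts to that maximal development (or to a globally hyperbolic neighbourhood of $\Sigma$ within it) before invoking the product splitting. The matching of the compactly supported core to the vacuum Schwarzschild ends is the one place where a little care is needed, but it is entirely controlled by the domain-of-dependence property and involves no new analysis beyond what is already in Lemma~\ref{Lemma 4} and Theorems~\ref{Theorem 5} and~\ref{Theorem 6}.
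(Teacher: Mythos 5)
Your proposal is correct and follows the same route as the paper: take the initial data produced in Theorems~\ref{Theorem 5} and~\ref{Theorem 6}, supply dust (or an equivalent source) with $T_{\alpha\beta}=\rho u_\alpha u_\beta$, and invoke the same Leray-hyperbolic existence theorems used in Theorem~\ref{Theorem 3} to evolve to a globally hyperbolic ${\mathbb R}\times\Sigma$, with the second statement obtained by starting from the $p={\rm constant}$ data of Theorem~\ref{Theorem 6}. The additional remarks you make about the Schwarzschild asymptotic ends and passing to the maximal globally hyperbolic development are consistent with (and elaborate on) the paper's brief treatment.
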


\begin{proof}
Choose as the matter source $T_{\alpha \beta}=\rho u_\alpha u_\beta $ where $u_\alpha u^\alpha =-1$. 
Then invoking the same existence theorems as in theorem \ref{Theorem 3} and using the initial 
data of theorem \ref{Theorem 5} yields the first part 
of the theorem. To obtain the second result just repeat the procedure using 
the initial data from theorem \ref{Theorem 6}. 
\end{proof} 

The fact that all closed and asymptotically flat 3-manifolds are allowed by 
the classical field equations as the spatial topologies of globally hyperbolic spacetimes will 
be used in the next section to prove the nonexistence of maximal slices in 
general. The nonexistence of maximal slices for these globally hyperbolic spacetimes only depends 
on the topology of the 3-manifold and not on the matter sources. The results 
of this section also show that CMC hypersurfaces for some $p\not= 0$ exist 
for all topologies. Hence, there are no topological obstructions to general 
CMC hypersurfaces.  
 
\section{ TOPOLOGICAL OBSTRUCTIONS}

Suppose that a globally hyperbolic spacetime ${\mathbb R}\times \Sigma $ has a maximal slice and 
obeys the dominant energy condition, then $p=0$ and $\rho \ge 0$ on the slice. 
Furthermore, the constraints must also hold on the slice, in particular 
$R={16\pi }+ p_{ab}p^{ab} - p^2$. Combining these facts together implies that 
$\Sigma $ has a metric with $R\ge 0$. Therefore, a necessary condition for the 
existence of a maximal slice is that $\Sigma $ admit a metric with $R\ge 0$. 
Using a natural method of counting 3-manifolds, the set of all closed 
3-manifolds which admit a metric with $R\ge 0$ comprise 
a small fraction of all closed 3-manifolds. Since theorem \ref{Theorem 3} proves that all 
closed 3-manifolds occur as hypersurfaces of globally hyperbolic spacetimes, it follows that 
only a small fraction of spatially closed globally hyperbolic spacetimes admit maximal slices. 
For asymptotically flat globally hyperbolic spacetimes, a maximal slice implies that $R\ge 0$ 
on the asymptotically flat 3-manifold. It can be shown that an asymptotically 
flat 3-manifold with $R\ge 0$ has a smooth compactification with $R>0$; and 
therefore, only a small fraction of asymptotically flat globally hyperbolic spacetimes 
have maximal slices. 
The results for closed 3-manifolds will now be discussed, and the compactification 
theorem will be proven at the end of this section. 

Given a closed manifold there can be topological obstructions to placing a 
metric on it with $R\ge 0$. The most well known example of such an obstruction 
is the Euler characteristic in the case of 2-manifolds. The idea is assume 
some 2-manifold has a metric with $R\ge0$, then integrate the scalar curvature 
over $M$. Next, applying the Gauss-Bonnet theorem $\chi (M)={1\over {4\pi}}\int_M RdA$, 
it follows that $\chi (M)\ge 0$. There are only four closed 2-manifolds 
 with nonnegative Euler characteristic, namely, the sphere, projective plane, 
torus, and Klein bottle. Since there are a countable number of distinct closed
 2-manifolds and only four of them which admit metrics with $R\ge0$, this 
proves not only are there obstructions to admitting a metric with $R\ge0$ 
but also that manifolds with $R\ge0$ are rare. 

An example of a topological obstruction which is more closely related to the 
obstruction for admitting a metric with $R\ge 0$ on manifolds of dimension 
greater than two is the first Betti number. More precisely, if a closed 
manifold has a metric with positive Ricci curvature, then the first Betti 
number vanishes. First, define $\Omega ^p({M^n})$ to be the vector space 
of all p-forms on $M^n$. We have the following elliptic complex 

\begin{align}
& 0\ar \Omega ^0(M^n)\dar \Omega ^1(M^n) \dar \dots\dar
\Omega ^p(M^n) \nonumber \\  & \dar \Omega^{p+1}(M^n) \dar\dots\dar\Omega
^{n-1}(M^n)\dar\Omega ^n(M^n)\ar 0
\end{align}

where $d$ is the exterior derivative. The $p^{th}$ cohomology $H^p({M^n})$ 
is defined by 
$${Ker\bigl(\Omega ^p(M^n)\dar\Omega ^{p+1}(M^n)\bigr)}/{Im\bigl(
\Omega ^{p-1}(M^n)\dar\Omega ^p(M^n)\bigr)}\ ,$$ 
in other words, it is the space of closed p-forms modulo exact p-forms. For closed 
n-manifolds de Rham's theorem says that $H^p(M^n)$ is isomorphic to the real 
singular cohomology, i.e. the usual cohomology which is calculated using only 
the topology of $M^n$. Picking a metric on $M^n$ allows the choice 
of a unique representative of each cohomology class. In order to choose the 
unique representative, first define an inner product of forms 
$$\bigl(\alpha ,\beta \bigr)=\int_{M^n}\, \alpha_{ab\dots c}\beta ^{ab\dots c}\,dV\ .$$
The adjoint of $d$ with respect to this inner product denoted by $\delta $ can 
be written as $\delta \alpha =-\nabla ^a\alpha _{ab\dots c}$. Now, define 
the energy of a form to be 
$$E(\alpha )=\int_{M^n}\, \alpha _{ab\dots c}\alpha ^{ab\dots c}\, dV\ .$$
Given a cohomology class $\bigl[\alpha _0\bigr] \in H^p({M^n})$, then the class 
 can be represented by $\alpha = \alpha _0 + d\beta $ where $\beta $ ranges 
over all (p-1)-forms. The goal is given a fixed $\alpha _0$ find a $\beta $ 
such that $\alpha $ has least energy. Taking the variation, the Euler-Lagrange 
equations imply that $\delta \alpha =0$. Since $\alpha $ was closed to begin 
with, it follows that $\delta \alpha =0$ and $d\alpha =0$. Observe that for 
2-forms the above equations are just the vacuum Maxwell equations for a positive
 definite metric. Let the Laplacian of a p-form $\alpha $ be defined by  
$\Delta \alpha= (\delta d + d\delta )\alpha$. The conditions that $\delta \alpha=0$ 
and $d\alpha =0$ are equivalent to $\Delta \alpha =0$, because if $\alpha $ is $harmonic$ then  
$$0=\bigl(\Delta \alpha ,\alpha \bigr)=\bigl(\delta\alpha ,\delta\alpha\bigr)+ \bigl(d\alpha ,d\alpha \bigr) \ .$$
The above arguments imply that the kernel of $\Delta _p$ or the space of 
harmonic 
p-forms is isomorphic to the $p^{th}$-cohomology vector space $H^p(M^n)$; this 
result is known as Hodge's theorem. 

Finally, Hodge theory is applied to the problem of finding obstructions to 
admitting metrics with positive Ricci curvature, $R_{ab}\xi ^a\xi ^b>0$ 
if $\xi $ is not zero everywhere. One can verify that 
$$\Delta \xi _a=-\nabla ^2\xi _a + R_a\, ^b \xi _b$$ 
by using the facts that $d\omega =\nabla _{[a}\omega _{bc\dots d]}$ and 
$\delta \alpha =-\nabla ^a\alpha _{ab\dots c}$. Suppose that 
$\xi \in ker\Delta $, then 
$$0=\int_{M^n}\, \xi ^a\Delta \xi _a\, dV= -\int_{M^n}\, 
\xi ^a\nabla ^2\xi _a\, dV + \int_{M^n}\, R_{ab}\xi ^a\xi ^b\, dV$$ 
but 
$$-\int_{M^n}\, \xi ^a\nabla ^2\xi _a\, dV+\int_{M^n}\, R_{ab}\xi ^a\xi ^b\, dV
=\int_{M^n}\, \nabla ^b\xi ^a\nabla _b\xi _a\, dV+\int_{M^n}\, R_{ab}\xi ^a\xi ^b\, dV>0$$ 
assuming positive Ricci curvature. This means $\xi =0$. Therefore, positive Ricci 
curvature implies that the kernel of $\Delta $ is trivial and that $H^1(M^n)$ 
is zero. Since the first Betti number $b_1(M^n)$ is the dimension of $H^1(M^n)$,
 this result is equivalent to $b_1(M^n)=0$. For the n-torus $b_1(T^n)=n$, 
therefore the n-torus admits no metric with positive Ricci curvature. Another 
example of a manifold which admits no metric with positive Ricci curvature 
is any closed manifold of the form $N\times T^n$ where $N$ is an arbitrary 
manifold. This is true because $b_1(N\times T^n)=b_1(N)+b_1(T^n)=b_1(N)+n>0$. 
With the above examples in mind, the case of positive scalar curvature will 
now be considered. 

The key to finding an obstruction to positive Ricci curvature was finding 
a generalized Laplacian which split into the {\it usual} Laplacian and a linear 
term involving the curvature. The object is to find an operator which satisfies 
 these conditions with the linear term being the scalar curvature. 
Given a manifold 
with spinors on it, i.e. a spin manifold, there is such an operator, namely, 
the Dirac operator, denoted by $\drc$. The {\it Weitzenb\"ock formula} for the 
Dirac operator is $\drc^{\, 2}\psi =-\nabla ^2\psi +{1\over 4}R\psi $. 
The operator $\drc^{\, 2}$ is the 
Laplacian operator for spinors, and the kernel of $\drc^{\, 2}$ is the space of 
{\it harmonic spinors}. If $R>0$, then the Weitzenb\"ock formula implies that 
$ker\, \drc^{\, 2}=0$. However, a difficulty with the 
Dirac operator is that there 
is no analogue of Hodge's theorem. In fact, the space of harmonic spinors in 
general depends on the metric. To avoid this problem, one might consider the 
index of the Dirac operator or an operator constructed from it because the 
Atiyah-Singer index theorem guarantees that the index is a topological 
invariant. However, in the case of interest, namely 3-manifolds, all such 
invariants vanish, so no information on which 3-manifolds admit metric with 
$R>0$ can be obtained directly. In order to over come this difficulty, replace 
the closed 3-manifold $\Sigma $ with the 4-manifold $\Sigma \times S^1$. Observe
 that although such a manifold admits no metric with positive Ricci curvature, 
the scalar curvature is much weaker; and in fact, if $\Sigma $ admits a metric 
with $R>0$, then $\Sigma \times S^1$ admits a metric with $R>0$. To prove this, 
let the metric on $\Sigma \times S^1$ be given by $ds^2=g_{ij}dx^idx^j+d\theta ^2$ 
where $g$ is the metric on $\Sigma $ which has $R>0$, then this product metric 
has positive scalar curvature. Now, look at the Dirac operator on closed 
4-manifolds with $R>0$. First, on any even dimensional manifold the bundle of 
spinors $S$ can be decomposed into the sum of $S_+\bigoplus S_-$, namely, the 
spinors of $+{1\over 2}$-helicity and $-{1\over 2}$-helicity. Now, define the 
operator $\drc^{\, +}$ to be the restriction of $\drc$ to $S_+$. 
It is easily proven that $\drc^ {\, +}$ maps $S_+$ into $S_-$ and that the 
adjoint of $\drc ^{\, +}$ is the restriction of $\drc$ to $S_-$ 
which is denoted by $\drc^{\, -}$. The index of a differential operator is 
the dimension of the kernel minus the dimension of the cokernel of the 
operator. When the operator has an adjoint the cokernel of the operator is the 
kernel of the operator's adjoint. Thus, the index of 
$\drc^{\, +}$ is $ind(\drc^{\, +})=dim\,ker\, \drc^{\, +} - dim\, ker\, 
\drc^{\, -} $.  
Since the 4-manifold being considered has $R>0$, then $ker\, \drc^{\, 2}=0$ but 
$\drc$ is self-adjoint so $ker\, \drc=ker\, \drc^{\, 2}=0$. 
Furthermore, $S=S_+\bigoplus S_-$ implies
 that $ker\, \drc=ker\, \drc^{\, +}\bigoplus ker\, \drc^{\, -}$ which 
means $R>0$ implies $ker\, \drc^{\, +}=0$ 
and $ker\, \drc^{\, -}=0$. Hence, $R>0$ implies $ind(\drc^{\, +})=0$. 
Identifying the topological invariant associated with $ind(\drc^{\, +})=0$ 
is a rather involved procedure, so the 
details will not be given here. The invariant is the $\widehat A$-genus and in 
four dimensions it is proportional to the signature. The {\it signature} of a 
closed 4n-manifold $M^{4n}$ is the signature of the bilinear form 
$(\alpha ,{}^*\beta )$ where $\alpha ,\beta \in H^{2n}\bigl(M^{4n}\bigr)$, 
${}^*\beta $ is the dual form of $\beta$, and $(\ ,\ )$ is the inner product of 
forms defined above. The signature of closed manifolds of other 
dimensions is taken to be zero. Now, with the 
invariant in hand let us return to the manifold $\Sigma \times S^1$; the 
signature $\tau $ of $\Sigma \times S^1$ is 
$\tau (\Sigma \times S^1)=\tau (\Sigma )\tau (S^1)$. Since 
$\tau (\Sigma)=\tau (S^1)=0$, it follows that $\tau (\Sigma \times S^1)=0$ 
which implies the 
$\widehat A$-genus vanishes. The problem of the $\widehat A$-genus vanishing 
is caused by the fact that the signature obeys the product rule. In order 
to obtain a nontrivial invariant, M. Gromov and H. B. Lawson define a family of 
generalized Dirac operators on general bundles of spinors over $\Sigma \times 
S^1$. These generalized Dirac operators still satisfy generalized Weitzenb\"ock 
formulae which imply; if $R>0$, then the generalized $\widehat A$-genera all 
must vanish. The difference between the classical $\widehat A$-genus and the 
generalized $\widehat A$-genera is that the generalized ones do not satisfy 
the product rule. Using the generalized $\widehat A$-genera M. Gromov and H. B. 
Lawson were able to prove the following classification theorem \cite{g}.

\begin{thm}\label{Theorem 8} A closed orientable 3-manifold $M^3$ (if it is 
nonorientable, then take its double cover) which has a $K(\pi ,1)$ as 
a prime factor\footnote{A $K(\pi ,1)$ is a closed 3-manifold with a contractible
universal covering space. For example, the 3-torus is a $K(\pi
,1)$. See reference \cite{jh} for more details.}
 in its prime decomposition\footnote{The {\it connected sum} of two 3-manifolds $M_1$ and
$M_2$ is the 3-manifold $M_1\#M_2$ obtained from removing a ball from
each, and then gluing the resulting manifolds together along their
boundaries. A closed 3-manifold is {\it prime} if $M=M_1\#M_2$ implies
that $M_1$ or $M_2$ is a 3-sphere. Every 3-manifold has a unique prime
decomposition into the connected sum of a finite number of prime
factors. Also see reference \cite{jh}.}
admits no metric 
with $R>0$. In fact, any metric on $M^3$ with $R\ge 0$ must be flat. 
\end{thm}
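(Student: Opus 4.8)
The plan is to execute the Dirac-operator argument already outlined above, upgraded to the \emph{twisted} operators of Gromov and Lawson. By the parenthetical in the statement we may take $M^3$ orientable, hence parallelizable and in particular spin. First I would reduce to the case that $M^3$ is itself a $K(\pi,1)$: if $M^3=K\#N$ with $K$ an aspherical prime factor, collapsing the complement of a ball in $N$ to a point gives a degree-one map $c\colon M^3\to K$ which is onto on $\pi_1$, so the classifying map $f\colon M^3\to B\pi_1(K)\simeq K$ carries $[M^3]$ to $[K]\neq 0$ in $H_3(K)$. This homological ``largeness'' of $\pi_1(M^3)$ is precisely the input the obstruction below requires; equivalently, one invokes the Gromov--Lawson fact that the connected sum of an enlargeable manifold with any closed manifold of the same dimension is again enlargeable. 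So assume $M^3$ is aspherical.

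Following the excerpt, pass to the closed spin $4$-manifold $X=M^3\times S^1$, which inherits a metric of $R>0$ from any such metric on $M^3$. The untwisted index $\operatorname{ind}\drc^{\,+}$ on $X$ equals the $\widehat A$-genus, which is proportional to the signature $\tau(M^3\times S^1)=\tau(M^3)\,\tau(S^1)=0$ and hence useless. Instead I would twist $\drc$ by a family of Hermitian bundles $E\to X$ of arbitrarily small curvature, pulled back from the ``almost flat'' bundles furnished by the enlargeability of $M^3$ --- concretely, bundles pulled back along $\varepsilon$-contracting maps of nonzero degree from (a cover of) $M^3$ onto $S^3$. Comparing $\drc^{\,+}_E$ with the corresponding trivial-bundle operator, the relative Atiyah--Singer index theorem yields $\operatorname{ind}\drc^{\,+}_E\neq 0$ for suitable $E$. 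Establishing this --- that a $K(\pi,1)$ $3$-manifold (equivalently $M^3\times S^1$) is enlargeable, and that the relative index theorem is available on the possibly noncompact complete covers involved --- is the technical heart of the Gromov--Lawson argument and the step I expect to be the main obstacle; the rest is formal.

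The Weitzenb\"ock estimate then finishes it. The twisted Lichnerowicz formula reads $\drc^{\,2}_E=-\nabla^2+\tfrac14 R+\mathfrak R^E$, where the extra term $\mathfrak R^E$ is the Clifford action of the curvature of $E$ and obeys a pointwise bound $\|\mathfrak R^E\|\le c\,\|F^E\|$ with $c$ a dimensional constant. Taking $E$ contracting enough that $\|\mathfrak R^E\|_\infty<\tfrac14\inf_X R$ forces $\drc^{\,2}_E$, hence $\drc_E$, to have trivial kernel and cokernel, so $\operatorname{ind}\drc^{\,+}_E=0$, contradicting the previous paragraph. Therefore $M^3$ admits no metric with $R>0$.

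For the rigidity clause, let $g$ be a metric on $M^3$ with $R\ge 0$. If $R>0$ somewhere, then since $R\ge 0$ the Rayleigh quotient $\int_{M^3}(8|\nabla u|^2+Ru^2)\big/\int_{M^3}u^2$ is nonnegative and vanishes only for constant $u$ with $R\equiv 0$, so the first eigenvalue $\lambda_1$ of $-8D^2+R$ is strictly positive; the conformal rescaling $g\mapsto u_1^{4}g$ by the positive ground state $u_1$ then has $R\equiv\lambda_1 u_1^{-4}>0$, contradicting the first part. Hence $R\equiv 0$. A scalar-flat metric on a $3$-manifold admitting no conformally positive metric must in fact be Ricci-flat --- the obstruction to pushing $R$ below zero along the deformation $h=\mathrm{Ric}$ is $\int_{M^3}|\mathrm{Ric}|^2$, which is positive unless $\mathrm{Ric}\equiv 0$ --- and in dimension three Ricci-flat implies flat. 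So $g$ is flat. (Consistently, a closed flat orientable $3$-manifold is aspherical and prime, so the conclusion forces $M^3$ to be exactly such a manifold.)
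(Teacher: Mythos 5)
The paper itself gives no proof of this theorem; it is quoted directly from Gromov and Lawson \cite{g}, and the preceding discussion (Hodge theory, the Weitzenb\"ock formula, the classical $\widehat A$-genus, the product with $S^1$) is explicitly heuristic, ending with the remark that it is the \emph{twisted} $\widehat A$-genera, not the classical one, that carry the obstruction. So there is no in-house argument to compare against. Your sketch correctly identifies the structure of the cited Gromov--Lawson argument (reduction to the aspherical prime factor via degree-one maps and closure of enlargeability under connected sums, twisted Dirac operators and the relative index theorem, the Weitzenb\"ock estimate, and the conformal--variational rigidity step), and you are right to flag the enlargeability of aspherical $3$-manifolds and the relative index theorem on noncompact covers as the load-bearing technical facts.

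Two details in the sketch are off. The twisting bundles must carry a nontrivial top Chern class and be pulled back along $\varepsilon$-contracting maps of nonzero degree onto an \emph{even}-dimensional sphere; for $X=M^3\times S^1$ the target is $S^4$. Contracting maps from covers of $M^3$ onto $S^3$ supply one factor of such a map, but the bundles are not pulled back from $S^3$. In the rigidity step, the relevant deformation is $h=-\mathrm{Ric}$; using the contracted Bianchi identity to kill the divergence terms gives $DR\cdot(-\mathrm{Ric})=\lvert\mathrm{Ric}\rvert^2\ge 0$ pointwise, which is strictly positive only where $\mathrm{Ric}\neq 0$. To pass from this to a nearby metric with $R>0$ everywhere you need the Bourguignon/Kazdan--Warner implicit-function argument (surjectivity of the linearized scalar curvature map when $R\equiv 0$ and $\mathrm{Ric}\not\equiv 0$), not merely a sign on $\int\lvert\mathrm{Ric}\rvert^2$. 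These are gaps of detail in an avowedly high-level sketch, not errors of strategy; the route you describe is the correct one and matches the source the paper cites.
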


The work of W. Thurston \cite{t} on hyperbolic 3-manifolds implies most 
3-manifolds are in fact $K(\pi ,1)$'s in the following way: First, 
a {\it knot} is the continuous embedding of a circle in a 3-manifold and a 
{\it link} is a finite number of disjoint knots. Next, one 
preforms {\it Dehn surgery} 
 along a link in a manifold by removing tubular neighborhoods of each knot, 
and then gluing back the removed solid tori differently. More precisely, one 
identifies the boundary of each hole left  with the boundary of
 another solid torus via a homeomorphism of the boundary different from 
the one defined by the inclusion of the removed torus in the manifold.  
W. B. R. Lickorish proved that every closed orientable 3-manifold can be 
obtained from Dehn surgery on the 3-sphere. More recently, W. Thurston has 
proven that every closed orientable 3-manifold is obtained from the 3-sphere 
$S^3$ by Dehn 
surgery along links $L$ such that the 3-sphere minus $L$ is a hyperbolic 
3-manifold, call such links {\it hyperbolic}. Furthermore, all but a finite 
number of 3-manifolds obtained obtained from $S^3$ by Dehn surgeries along a 
given hyperbolic link $L$ are closed hyperbolic 3-manifolds. Hence, in this way
 of counting 3-manifolds most closed 3-manifolds are hyperbolic. 
Since all hyperbolic manifolds are covered by ${\mathbb R}^3$, it follows that they 
are all $K(\pi ,1)$'s. Therefore, most 3-manifolds are $K(\pi ,1)$'s.  
 
The above result combined with theorem \ref{Theorem 8} implies most 3-manifolds never admit a 
metric with $R>0$. Furthermore, there are only 
ten flat closed 3-manifolds which means most closed 3-manifolds do not admit 
metric with $R\ge 0$. Combining these ideas with the existence theorems for 
globally hyperbolic spacetimes with arbitrary spatial topology yields the following result. 

\begin{thm}\label{Theorem 9} Most spatially closed globally hyperbolic 
spacetimes ${\mathbb R}\times \Sigma $ never admit a maximal slice. 
\end{thm}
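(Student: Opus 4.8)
The plan is to assemble Theorem \ref{Theorem 9} from three ingredients already in place: (i) the necessary condition that a globally hyperbolic spacetime ${\mathbb R}\times\Sigma$ with a maximal slice forces $\Sigma$ to admit a metric with $R\ge 0$ (established at the start of this section via the Hamiltonian constraint $R=16\pi\rho+p_{ab}p^{ab}-p^2\ge 0$ when $p=0$ and $\rho\ge 0$); (ii) Theorem \ref{Theorem 3}, which realizes \emph{every} closed $3$-manifold $\Sigma$ as the Cauchy slice of a physically reasonable globally hyperbolic spacetime of the form ${\mathbb R}\times\Sigma$; and (iii) Theorem \ref{Theorem 8} together with the Thurston/Lickorish counting argument, which shows that all but a ``small fraction'' of closed $3$-manifolds (in the Dehn-surgery enumeration) are $K(\pi,1)$-prime-factor manifolds admitting no metric with $R\ge 0$ — the only manifolds with $R\ge 0$ surviving being those all of whose prime factors carry $R>0$, plus the ten flat ones.

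First I would make the notion of ``most'' precise by adopting the enumeration of closed orientable $3$-manifolds via Dehn surgery on hyperbolic links in $S^3$, exactly as described above Theorem \ref{Theorem 8}: for each fixed hyperbolic link $L$, all but finitely many surgeries yield closed hyperbolic manifolds, which are $K(\pi,1)$'s. Call a closed $3$-manifold \emph{exceptional} if it admits a metric with $R\ge 0$. By Theorem \ref{Theorem 8}, an exceptional manifold can have no $K(\pi,1)$ prime factor unless it is flat, and there are only ten flat closed $3$-manifolds; hence within each surgery family the exceptional manifolds form a finite (indeed ``small'') subset, while the hyperbolic (hence non-exceptional) ones are cofinite.

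Next I would run the contrapositive for spacetimes: if ${\mathbb R}\times\Sigma$ admits a maximal slice and obeys the dominant energy condition, then by (i) $\Sigma$ is exceptional. Conversely, by Theorem \ref{Theorem 3}, every closed $3$-manifold $\Sigma$ — exceptional or not — does occur as the slice of some globally hyperbolic ${\mathbb R}\times\Sigma$; so the spacetimes whose slice is non-exceptional genuinely exist and, by (i), none of them can carry a maximal slice. Combining, the spatially closed globally hyperbolic spacetimes admitting a maximal slice all have exceptional $\Sigma$, and exceptional $\Sigma$ are rare in the above sense. Therefore most spatially closed globally hyperbolic spacetimes ${\mathbb R}\times\Sigma$ admit no maximal slice. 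I would close by noting that if $\Sigma$ is nonorientable one passes to the orientable double cover, as in Theorem \ref{Theorem 8}, and that the argument is insensitive to the matter content because the obstruction is purely topological.

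The main obstacle here is not a hard estimate but the honest bookkeeping of what ``most'' means: one must be careful that the Dehn-surgery enumeration is genuinely the ``natural method of counting'' invoked in the statement, and that ``exceptional'' manifolds really do form a negligible subcollection under it — this rests entirely on Theorem \ref{Theorem 8} (excluding $K(\pi,1)$ prime factors) plus finiteness of the flat closed $3$-manifolds plus Thurston's cofiniteness result, none of which I need to reprove. Everything else is a short syllogism: necessary condition $+$ universal realizability $+$ rarity of the obstruction-free topologies.
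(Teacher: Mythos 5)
Your proof follows exactly the same three-step syllogism as the paper's own: the Hamiltonian-constraint necessary condition $R\ge 0$, Theorem \ref{Theorem 3} for universal realizability, and Theorem \ref{Theorem 8} with the Thurston/Lickorish counting for rarity. You simply spell out the bookkeeping of ``most'' in more detail than the paper does; the argument is the same.
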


\begin{proof} Existence of a maximal slice implies $\Sigma $ admits a metric 
 with $R\ge 0$, and theorem \ref{Theorem 3} implies all closed 3-manifolds occur as 
hypersurfaces. Therefore, theorem \ref{Theorem 8} implies most 
globally hyperbolic spacetimes ${\mathbb R}\times \Sigma $ 
admit no maximal slice. 
\end{proof}

Finally, the classification of asymptotically flat 3-manifolds which admit 
metrics with $R\ge 0$ will be discussed. This classification theorem is proven 
by showing that every asymptotically flat 3-manifold with a metric having 
$R\ge 0$ has a smooth compactification with $R>0$. The classification is 
completed by applying theorem \ref{Theorem 8} to the compactifications. Before proving the 
compactification theorem, several technical propositions are required which 
are now presented. 

The first technical proposition needed is the maximum principle for second order
 elliptic operators. One important feature of this version of the maximum principle is there 
are no restrictions on the sign of $c$, the coefficient of the zeroth order term
 of the operator. The maximum principle will not be proven but the interested 
reader is referred to M. Spivak's book on Differential Geometry \cite{s}. 

\begin{thm}\label{Theorem 10} Let $L$ be a elliptic differential operator defined by 
$$Lu= \sum_{i,j=1}^n \, a_{ij}{{\partial^2u}\over {\partial x_i\partial x_j}} + 
\sum_{i=1}^n\, b_i{{\partial u}\over {\partial x_ i}} + cu$$
on an open subset $U \subseteq {\mathbb R}^n$. The coefficients $a$, $b$, and $c$ 
are locally bounded; a is symmetric; and in the neighborhood of any point of 
$U$ there are two positive constants $m$ and $M$ such that 
$$m\sum_{i=1}^n\xi _i^2\le \sum_{i,j=1}^n\, a_{ij}\xi _i\xi _j\le M\sum_{i=1}^n\xi _i^2$$ 
for all $\xi \in {\mathbb R}^n$. If $u\in C^2(U)$ with $Lu\ge 0$ and $u\le 0$, then 
 $u(x_0)=0$ for some $x_0\in U$ implies $u\equiv 0$ in $U$. 
\end{thm}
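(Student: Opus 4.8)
This is the classical strong maximum principle of E.~Hopf, and the plan is to derive it from the Hopf boundary point lemma in the usual three steps, preceded by a reduction that is needed precisely because no sign is assumed for $c$.

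\textbf{Step 1 (reduction to a nonpositive zeroth-order term).} Since $u\le 0$ everywhere, I would split $c=c^{+}-c^{-}$ with $c^{\pm}=\max(\pm c,0)\ge 0$ and replace $L$ by $\tilde L u\equiv\sum_{i,j}a_{ij}\partial_i\partial_j u+\sum_i b_i\partial_i u-c^{-}u$. Because $c^{+}u\le 0$ one has $\tilde L u=Lu-c^{+}u\ge Lu\ge 0$, while the zeroth-order coefficient of $\tilde L$ is $-c^{-}\le 0$; local boundedness of the coefficients and the ellipticity constants $m,M$ are plainly inherited. Thus it suffices to prove the conclusion for operators whose zeroth-order term has the favorable sign, retaining $\tilde L u\ge 0$, $u\le 0$, and $u(x_0)=0$.

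\textbf{Step 2 (boundary point lemma).} Suppose $B=B_{R}(y)\subset U$ and $u<0=u(x_1)$ for some $x_1\in\partial B$, with $\tilde L u\ge 0$ in $B$. On an annulus $A=\{\rho<|x-y|<R\}$, $0<\rho<R$, I would use the standard barrier $v(x)=e^{-\alpha|x-y|^{2}}-e^{-\alpha R^{2}}$, which is positive on $A$ and vanishes on the outer sphere. A direct computation gives $\tilde L v=e^{-\alpha|x-y|^{2}}\bigl[4\alpha^{2}\sum_{i,j}a_{ij}(x_i-y_i)(x_j-y_j)-2\alpha\sum_i a_{ii}-2\alpha\sum_i b_i(x_i-y_i)\bigr]-c^{-}v$, and with $\sum a_{ij}(x_i-y_i)(x_j-y_j)\ge m\rho^{2}>0$, $\sum a_{ii}\le nM$, and the local bounds on $b$ and $c^{-}$, choosing $\alpha$ large makes $\tilde L v>0$ on $\bar A$. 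Then $w=u+\varepsilon v$ satisfies $\tilde L w\ge\varepsilon\tilde L v>0$ in $A$; moreover $w=u\le 0$ on the outer sphere, and $w\le 0$ on the compact inner sphere once $\varepsilon$ is small (there $u\le-\delta<0$). A would-be interior maximum of $w$ with $w>0$ is excluded since there $\nabla w=0$, the Hessian is $\le 0$, and $-c^{-}w\le 0$, forcing $\tilde L w\le 0$; hence $w\le 0$ on $\bar A$ with equality at $x_1$. Differentiating radially outward at $x_1$ then yields $\partial u/\partial\nu(x_1)\ge-\varepsilon\,\partial v/\partial\nu(x_1)=2\alpha R\,\varepsilon\,e^{-\alpha R^{2}}>0$.

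\textbf{Step 3 (propagation).} Put $\Omega=\{x\in U:u(x)=0\}$; it contains $x_0$ and is relatively closed in $U$. If $\Omega\ne U$, pick $p\in U\setminus\Omega$ with $\mathrm{dist}(p,\Omega)<\mathrm{dist}(p,\partial U)$ and enlarge a ball about $p$ until it first meets $\Omega$, at a point $q$; the open ball lies in $U\setminus\Omega$, so $u<0$ there and $u(q)=0$, and Step 2 applies to this ball, giving $\partial u/\partial\nu(q)>0$. But $q$ is an interior point of $U$ at which $u$ attains its maximum $0$, so $\nabla u(q)=0$ --- a contradiction. Hence $\Omega=U$, i.e.\ $u\equiv 0$ in $U$ (with $U$ connected, which is the case in the applications below). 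The one step that needs genuine care rather than bookkeeping is Step 2: one must verify that the barrier inequality $\tilde L v>0$ is achievable using only the \emph{local} boundedness of $b,c$ and the ellipticity constants on the compact closed annulus, and that the comparison argument goes through with merely bounded (not continuous) coefficients. Both are classical --- this is exactly the content of the maximum-principle discussion in Spivak \cite{s} (or of the strong maximum principle in Gilbarg and Trudinger) --- and it is there that the hypotheses are actually consumed; Steps 1 and 3 are essentially formal once it is in place.
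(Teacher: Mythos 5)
The paper does not actually prove this theorem: it states explicitly that ``the maximum principle will not be proven'' and refers the reader to Spivak \cite{s}, so there is no in-paper proof to compare against. Your argument is the classical Hopf proof of the strong maximum principle (essentially the one Spivak and Gilbarg--Trudinger both give), and it is correct: the reduction $\tilde L u = Lu - c^{+}u$ exploits $u\le 0$ exactly as needed to dispose of the unfavorable sign of $c$; the barrier computation on the closed annulus uses only the local bounds on $a,b,c$ and the ellipticity constants $m,M$ (and it is worth noting that the comparison step does not invoke a maximum principle but only the pointwise second-derivative test at a putative interior maximum, so merely bounded, not continuous, coefficients suffice); and the ball-enlargement propagation is standard. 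The one caveat is the one you flagged yourself: $u\equiv 0$ on all of $U$ requires $U$ to be connected, a hypothesis the theorem statement omits but which is implicit (and harmless in the paper's applications, which are to manifolds, where one works chart by chart and the set $\{u=0\}$ is then shown clopen in each component).
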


The next proposition is an existence lemma for a family of ``bump functions'' 
 with a prescribed type of asymptotic behavior. The existence of such a one 
parameter family of functions is an essential ingredient in the proof of the 
compactification.

\begin{lem}\label{Lemma 11} Given $c>0$, there is a family of smooth decreasing 
functions $\alpha_\rho$ for $\rho >c$ such that $\alpha _\rho =1$ for $r\le \rho $,  
$\alpha _\rho =0$ for $r\ge 2\rho $, $|\alpha _\rho '|\le {A\over r}$, and 
$|\alpha _\rho ''|\le {A\over {r^2}}$ where $A$ is a constant independent of 
$\rho $ and $\alpha _\rho '\equiv {d\over {dr}}\alpha _\rho (r)$. 
\end{lem}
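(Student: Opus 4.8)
The plan is to build the family $\alpha_\rho$ by rescaling a single fixed model function, so that the uniform bounds on the derivatives fall out automatically from the chain rule together with the fact that the support of the transition region is comparable to $r$. First I would fix once and for all a smooth monotone function $\beta\colon[1,2]\to[0,1]$ with $\beta\equiv 1$ near $r=1$ and $\beta\equiv 0$ near $r=2$, and extend it by $\beta=1$ on $[0,1]$ and $\beta=0$ on $[2,\infty)$; this $\beta$ is smooth on $[0,\infty)$ and all of its derivatives are bounded by some absolute constant $A_0$. Then for each $\rho>c$ define $\alpha_\rho(r)\equiv\beta(r/\rho)$. By construction $\alpha_\rho=1$ for $r\le\rho$ and $\alpha_\rho=0$ for $r\ge 2\rho$, and $\alpha_\rho$ is smooth and decreasing because $\beta$ is.

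Next I would verify the derivative bounds. By the chain rule $\alpha_\rho'(r)=\rho^{-1}\beta'(r/\rho)$ and $\alpha_\rho''(r)=\rho^{-2}\beta''(r/\rho)$. These are supported in $\rho\le r\le 2\rho$, so on that set $\rho^{-1}\le 1/r$ and $\rho^{-2}\le 1/r^2$ (using $r\ge\rho$), hence
\begin{equation}\nonumber
|\alpha_\rho'(r)|\le A_0\,\rho^{-1}\le \frac{A_0}{r},\qquad |\alpha_\rho''(r)|\le A_0\,\rho^{-2}\le \frac{A_0}{r^2}.
\end{equation}
Off the support of $\beta'$ (resp.\ $\beta''$) the left-hand sides are zero, so the inequalities hold everywhere. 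Taking $A=A_0$, which depends only on the fixed model $\beta$ and not on $\rho$, gives exactly the claimed bounds.

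There is no real obstacle here: the only point that needs a moment's care is confirming that $\beta$ extended by constants past $r=1$ and $r=2$ is genuinely $C^\infty$ at the junctions, which is guaranteed by choosing $\beta$ to be constant in a neighborhood of each endpoint rather than merely at the endpoint. The condition $\rho>c>0$ is used only to ensure the family is indexed away from $r=0$, so that the bounds $A/r$, $A/r^2$ are finite on the relevant region; it plays no other role. One could equally well smooth out the explicit cutoffs used earlier in the paper (the functions $\eta$, $\gamma$) in the same fashion, but the rescaling construction makes the uniformity in $\rho$ transparent, which is the whole content of the lemma.
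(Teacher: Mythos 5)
Your rescaling construction is a genuinely different and considerably more streamlined route than the paper's. The paper builds $\alpha_\rho$ through several layers of auxiliary functions ($\gamma_L$, $\gamma_R$, $\alpha_L$, $\alpha_R$, $\gamma_\rho$) defined via integrals of $\gamma/t$, and the $1/r$ bound on the derivative emerges from the $1/t$ inside the integrand; your approach instead pins the bound directly on the scaling factors produced by the chain rule together with the fact that $r$ and $\rho$ are comparable on the transition annulus. Both work, but your argument is shorter, makes the uniformity in $\rho$ transparent, and generalizes immediately to higher-order derivatives.

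There is, however, a small but real error in the inequality direction in your verification step. You wrote that on the support of $\beta'$ (i.e.\ $\rho \le r \le 2\rho$) one has $\rho^{-1} \le 1/r$ ``using $r \ge \rho$.'' This is backwards: $r \ge \rho$ gives $1/r \le 1/\rho$, which is the wrong way. What you actually need is the \emph{upper} bound $r \le 2\rho$, which gives $\rho^{-1} \le 2/r$ and $\rho^{-2} \le 4/r^2$. The conclusion still holds, but with $A = 4A_0$ (or any constant at least $\max\{2,4\}\,A_0$) rather than $A = A_0$. Since the lemma only asks for a constant $A$ independent of $\rho$, this is harmless, but the chain of inequalities as written would not survive a referee's check and should be corrected before use.
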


\begin{proof}
 Let $\gamma _{{}_L}$ be any smooth function which is equal to 
one for $\rho +\epsilon \le r\le {4\over 3}\rho -\epsilon$ where $\epsilon =
{1\over 6}c$, and zero otherwise. 
Likewise, $\gamma _{{}_R}$ is one for ${5\over 3}\rho +\epsilon \le r\le 2\rho -\epsilon $ 
and zero otherwise. Now, define smooth increasing and decreasing functions by 
the following expressions 
$$\alpha _{{}_L}(r)\equiv {{\int_\rho ^r{{\gamma _{{}_L}}\over t}dt}\over {\int _\rho ^{{4\over 3}\rho }{{\gamma _{{}_L}}\over t}dt}}$$ 
and 
$$\alpha _{{}_R}(r)\equiv {{\int_r^{2\rho }{{\gamma _{{}_R}}\over t}dt}\over {\int_{{5\over 3}\rho }^{2\rho }{{\gamma _{{}_R}}\over t}dt}}\ .$$
Observe that $|\alpha _{{}_L}'|\le \bigl({\int _\rho ^{{4\over3}\rho }{{\gamma _{{}_L}}\over t}dt}\bigr)^{-1}r^{-1}$ 
and $|\alpha _{{}_R}'|\le \bigl({\int _{{5\over 3}\rho}^{2\rho}{{\gamma _{{}_R}}\over t}dt}\bigr)^{-1}r^{-1}$. 
Furthermore, 
$$\int _\rho ^{{4\over 3}\rho}{{\gamma _{{}_L}}\over t}dt\ge \int _{\rho +\epsilon }^{{4\over 3}\rho -\epsilon}{{dt}\over t}=\log {{{4\over 3}\rho -\epsilon}\over {\rho +\epsilon }}\ge \log {4\over 3}$$  
and 
$$\int_{{5\over 3}\rho }^{2\rho }{{\gamma _{{}_R}}\over t}dt\ge\int_{{5\over 3}\rho +\epsilon }^{2\rho -\epsilon }{{dt}\over t}=\log {{2\rho -\epsilon }\over {{5\over 3}\rho +\epsilon }}\ge \log {6\over 5}\ .$$ 
Hence the derivatives of $\alpha _{{}_L}$ and $\alpha _{{}_R}$ having the following behavior 
$|\alpha _{{}_L}'|\le \bigl({\log {4\over 3}}\bigr)^{-1}r^{-1}$ and 
$|\alpha _{{}_R}'|\le \bigl({\log {6\over 5}}\bigr)^{-1}r^{-1}$. Next, define 
another smooth function by the following expression

\begin{displaymath}
\gamma _\rho (r)\equiv 
\begin{cases}
\alpha _{{}_L}(r), &if \rho \le r\le \frac{4} {3}\rho ;\\ 1, &if
\frac4 3 \rho \le r \le {5\over 3}\rho ;\\ \alpha _{{}_R}(r), &if
\frac 5 3 \rho \le r\le 2\rho;\\ 0, &otherwise.
\end{cases}
\end{displaymath}

It follows that $|\gamma _\rho '|\le \bigl({\log {4\over 3}}\bigr)^{-1}r^{-1}$. 
Finally, let 
$$\alpha _\rho (r)\equiv {{\int_r^{2\rho }{{\gamma _\rho }\over t}dt}\over {\int_\rho ^{2\rho}{{\gamma _\rho}\over t}dt}}\ .$$ 
By definition $\alpha _\rho $ is smooth, $\alpha _\rho =1$ for $r\le \rho $, 
and $\alpha _\rho =0$ for $r\ge 2\rho $. Observe that 
$$\int_\rho ^{2\rho }{{\gamma _\rho }\over t}dt\ge \int_{{4\over 3}\rho }^{{5\over 3}\rho }{{dt}\over t}=\log {5\over 4}\ ,$$
this implies $|\alpha _\rho '|\le \bigl({\log {5\over 4}}\bigr)^{-1}r^{-1}$. Furthermore, 
$$|\alpha _\rho ''|\le \bigl({\log {5\over 4}}\bigr)^{-1}\Bigl[{{|\gamma '|}\over r}+{{|\gamma |}\over {r^2}}\Bigr]\le \bigl({\log {5\over 4}}\bigr)^{-1}\Bigl[\bigl({\log {4\over 3}}\bigr)^{-1}+1\Bigr]r^{-2}\ .$$ 
Therefore, $|\alpha _\rho '|\le {A\over r}$ and $|\alpha _\rho ''|\le {A\over r^2}$ 
where $A= \bigl({\log {5\over 4}}\bigr)^{-1}\bigl[\bigl({\log {4\over 3}}\bigr)^{-1} +1\bigr]$. 
\end{proof}

The next lemma is an existence theorem for the ground state of the operator 
$-D^2 + V$. This lemma is used in constructing a metric with positive scalar 
curvature on the compactified manifold.   

\begin{lem}\label{Lemma 12} If $M$ is a closed n-manifold and $L= -D^2 + V$ where 
$V$ is a smooth function on $M$, then there is a smooth function $\psi _0>0$ 
and real number $\lambda _0$ such that $L\psi _0=\lambda _0\psi _0$. 
\end{lem}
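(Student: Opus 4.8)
The plan is to realize $\lambda_0$ as the bottom of the spectrum of the self-adjoint operator $L = -D^2 + V$ acting on $L^2(M)$, and to use the positivity-preserving property of the associated heat semigroup to conclude the ground state eigenfunction can be taken strictly positive. First I would observe that since $M$ is closed and $V$ is smooth (hence bounded), $L$ with domain $C^\infty(M)$ is essentially self-adjoint and bounded below on $L^2(M)$; moreover $L + c$ is strictly positive and elliptic for $c$ large, so by elliptic theory on a compact manifold it has compact resolvent. Consequently the spectrum of $L$ is discrete, bounded below, and consists of eigenvalues of finite multiplicity accumulating only at $+\infty$. Let $\lambda_0 = \min \mathrm{spec}(L)$ and let $\psi_0$ be a corresponding (a priori only $L^2$) eigenfunction; elliptic regularity ($L\psi_0 = \lambda_0\psi_0$ with smooth coefficients) immediately upgrades $\psi_0$ to be $C^\infty$.

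The substantive step is to show $\psi_0$ can be chosen with $\psi_0 > 0$ everywhere. Here I would argue as follows. Because $\lambda_0$ is the infimum of the Rayleigh quotient $(\psi, L\psi)/(\psi,\psi)$, and because the Dirichlet-type energy $\int_M |D\psi|^2 + V\psi^2$ does not increase when $\psi$ is replaced by $|\psi|$ (the standard fact that $|D|\psi|| \le |D\psi|$ a.e.), the function $|\psi_0|$ is also a minimizer, hence also an eigenfunction with eigenvalue $\lambda_0$: so without loss of generality $\psi_0 \ge 0$ and $\psi_0 \not\equiv 0$. Now $\psi_0 \ge 0$ satisfies $-D^2\psi_0 + V\psi_0 = \lambda_0 \psi_0$, i.e. $(-D^2 + V - \lambda_0)\psi_0 = 0$ with $\psi_0 \ge 0$. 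Rewriting this as $D^2\psi_0 = (V - \lambda_0)\psi_0$ and applying the strong maximum principle of Theorem~\ref{Theorem 10} in local coordinates to $u = -\psi_0 \le 0$ with the operator $L' u = D^2 u - (V-\lambda_0) u$ (which has $L'u = 0 \ge 0$): if $\psi_0$ vanished at some point then $u$ attains its maximum value $0$ in the interior, forcing $u \equiv 0$ on that coordinate chart, and a connectedness/covering argument then forces $\psi_0 \equiv 0$ on all of $M$, a contradiction. Hence $\psi_0 > 0$ everywhere.

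The main obstacle is packaging the spectral-theory input cleanly: I need essential self-adjointness and discreteness of the spectrum on a closed manifold, which is standard but should be cited rather than reproved, and I need the replacement-by-$|\psi_0|$ argument, which requires knowing $|\psi_0| \in H^1$ with $\||D|\psi_0|\| \le \|D\psi_0\|$ — a routine but slightly technical Sobolev-space fact. Alternatively, one can bypass $L^2$ spectral theory entirely and run a direct variational argument: minimize $E(\psi) = \int_M (|D\psi|^2 + V\psi^2)$ over $\{\psi \in H^1(M) : \|\psi\|_{L^2} = 1\}$, extract a minimizing sequence, pass to a weak $H^1$ limit (using Rellich compactness for the $L^2$ constraint), note the minimizer may be taken $\ge 0$ as above, derive the Euler–Lagrange equation $L\psi_0 = \lambda_0\psi_0$ with $\lambda_0 = E(\psi_0)$, bootstrap to smoothness by elliptic regularity, and finish with the same maximum-principle step. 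I would present the variational version, since it is self-contained and uses only tools already in play (elliptic regularity, Rellich, and Theorem~\ref{Theorem 10}); the only place care is genuinely needed is the final strict-positivity argument via the strong maximum principle, and that is where I would spend the detail.
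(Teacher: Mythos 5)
Your proposal, in the variational form you say you would actually present, matches the paper's own argument step for step: minimize the Rayleigh quotient $\int(|D\psi|^2+V\psi^2)/\int\psi^2$ over $H_1(M)$, extract a minimizing sequence, use Rellich compactness and weak lower semicontinuity to get a minimizer $\psi_0\in H_1(M)$, derive the Euler--Lagrange equation $L\psi_0=\lambda_0\psi_0$ weakly, bootstrap to smoothness, replace $\psi_0$ by $|\psi_0|$ to get nonnegativity, and finish with the strong maximum principle (Theorem~\ref{Theorem 10}) applied to $u=-\psi_0$. This is essentially the paper's proof, so no further comparison is needed.
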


\begin{proof}
Let $H_1(M)$ be the Sobolev space\footnote{The Sobolev spaces  $H_k\bigl(M^n\bigr)$  for any nonnegative integer 
$k$ can be defined as the completion of smooth functions in the norm  
$$||f||_{(k)}= \sqrt{\Bigl( \sum_{i=0}^k\, || D^{(i)}f ||_2^2 \Bigr)}$$  
where $||\ ||_2$ is the $L^2$ norm, $D^{(0)}f\equiv f$, and  
$D^{(i)}f\equiv \bigl( D_{a_1a_2\dots a_i}fD^{a_1a_2\dots a_i}f 
\bigr) ^{1/2}$.  } of all $L^2(M)$ 
functions whose generalized derivative\footnote{The {\it generalized derivative} with respect to $x^i$ of a locally 
integrable function on an open subset $U\subseteq {\mathbb R}^n$ is the locally 
integrable function $\partial _if$ such that 
$$\int_U\, \phi \partial_if\, d^nx=-\int_U\, f{{\partial \phi}\over {\partial x^i}}\, d^nx$$ 
for all $\phi \in C^\infty(U)$. 
 } is also in $L^2(M)$. $H_1(M)$ 
is actually a Hilbert space with inner product given by 
$(\phi ,\psi )=\int_M\, (D_a\phi D^a\psi + \phi \psi )dV$ . 
Now, define a functional in the following way: 
$$I(\psi )={{\int_M\bigl(D_a\psi D^a\psi + V\psi ^2\bigr)dV}\over{\int_M
\psi^2\, dV}}$$ 
for $\psi$ in $H_1(M)$ and not identically zero. Let ${\cal S}\! =\bigl\{\, 
b\in {\mathbb R}|\ b\le I(\psi)\ {\rm for\ all\ }\psi\in H_1(M)\ {\rm and}\ \psi 
\not\equiv 0\bigr\}$. Since the gradient term is nonnegative and $V$ is 
bounded, $I$ is bounded from below and consequently ${\cal S}$ is nonempty. 
Further, 
${\cal S}$ is bounded above. Since ${\cal S}$ is a nonempty set of real numbers 
 which is bounded above it has a least upper bound, call it $\lambda _0$. 
By definition, it follows that    
$$\lambda _0 = \inf_{ {\psi \in H_1(M)}- \{ 0\}}I(\psi)\ .$$ 
Furthermore, it follows that there is a sequence $\{ \psi _k\}$ in $H_1(M)$ 
such that $I(\psi _k)\rightarrow \lambda _0$ as $k\rightarrow \infty$ and 
$\int\nolimits {\psi _k}^2\, dV=1$ . Clearly, $\{ \psi_k \}$ 
is a bounded subset of 
$H_1(M)$. The embedding $H_1(M)\subset L^2(M)$ is {\it compact}, i.e. every 
bounded set is mapped to a compact one. Therefore, there is a subsequence 
$\{ \psi_{k_i}\}$ of $\{ \psi_k\}$ such that $\psi_{k_{i}}\rightarrow\psi _0$ 
in $L^2(M)$. 
Moreover, $I(\psi _{k_i})\rightarrow\lambda_0$ as $k_i\rightarrow
\infty$ because all subsequences of a convergent sequence converge to the same 
limit. 
Next, it will be shown that $\psi_0\in H_1(M)$ and 
$I(\psi_0)=\lambda_0$. 

Since $\{ \psi_{k_i}\}$ is a bounded subset of $H_1(M)$, $\{ \psi _{k_i}\}$ 
is compact in the weak topology. This is because bounded subsets of Hilbert 
spaces are weakly compact. Hence there is a subsequence $\{ \psi_{k_j}\}$ 
of $\{ \psi_{k_i}\}$ which converges weakly in $H_1(M)$ to $\phi _0$. Now, the 
embedding $H_1(M)\subset L^2(M)$ is continuous so it must also be continuous 
in the weak topology. Therefore, $\{ \psi_{k_j}\}$ converges weakly in $L^2(M)$ 
to $\phi _0$. However, $\{ \psi_{k_i}\}$ converges in $L^2(M)$ to $\psi _0$ by 
the previous arguments. Hence $\psi_{k_i}\rightarrow\psi_0$ weakly in $L^2(M)$.
 This means $\psi _0=\phi _0$ in $L^2(M)$ because $\{ \psi_{k_j}\}$ is a 
subsequence of the weakly convergent sequence $\{ \psi_{k_i}\}$. 
Therefore, $\psi_0$ is in $H_1(M)$. 
From now on, let the sequence $\{ \psi_{k_j}\}$ be denoted by $\{ \psi_m\}$. 
Weak convergence of $\{ \psi_m\}$ means that for all linear functionals, $A$,  
$A(\psi_m)\rightarrow A(\psi_0)$ as $m\rightarrow\infty$.  
In particular, weak convergence and Schwarz's inequality imply 
$(\psi_0, \psi_0)\le \lim_{m\rightarrow\infty}(\psi_m,\psi_0) \le 
\lim_{m\rightarrow\infty}||\psi_m||||\psi_0||$, where $||\ ||$ and $(\ ,\ )$  
are the norm and inner product on $H_1$. Hence, $(\psi_0,\psi_0)\le 
\lim_{m\rightarrow\infty} (\psi_m,\psi_m)$. 
Further, the $L^2(M)$ convergence of $\{ \psi_m\}$ and the fact that $V$ is 
bounded imply $\int\nolimits (V-1){\psi_m}^2\, dV\rightarrow\int\nolimits
(V-1){\psi_0}^2\, dV$. This is proven by using H\"older's inequality\footnote{
 H\"older's inequality is: Given $f\in L^p$ and $g\in L^q$, then 
$$\int |fg|dV\le \Bigl(\int |f|^p\, dV\Bigr)^{1\over p}\Bigl(\int |g|^q\, dV
\Bigr)^{1\over q}$$ for ${1\over p}+{1\over q}=1$ with $1\le p\le \infty$.} 
and the triangle inequality, namely,  

\begin{align}
& \Big|\int\nolimits (V-1){\psi_m}^2\, dV -\int\nolimits
(V-1){\psi_0 }^2\, dV\Big| \le K\int\nolimits
|{\psi_m}^2-{\psi_0}^2|dV  \le   K
||\psi_m-\psi_0||_2 \nonumber \\ & ||\psi_m+\psi_0||_2 \le K||\psi_m -\psi_0||_2
(||\psi_m||_2 + ||\psi_0||_2) \le K||\psi_m
-\psi_0||_2(1+||\psi_0||_2 ) \nonumber
\end{align}

where $|V-1|\le K$ and $||\ ||_2$ is the $L^2(M)$ norm. Choosing $V=2$, this 
argument also implies that $||\psi_0||_2=1$.  
Using the above limits, it follows that 

\begin{align*}
& \lambda_0=\lim_{m\rightarrow\infty}I(\psi_m)
=\lim_{m\rightarrow\infty} \frac{ (\psi_m,\psi_m)+\int\nolimits
(V-1) { \psi_m}^2\, dV } {||\psi_m||_2^2}  \\ & \ge
\frac{(\psi_0,\psi_0)+\int\nolimits (V-1) { \psi_0}^2\, dV} {||\psi_0
||_2^2} =I(\psi_0)\ge \lambda_0\ .
\end{align*}

Therefore, $\psi_0$ is in $H_1(M)$ and an extremum of $I$. 

The smoothness of $\psi_0$ will now be demonstrated. Since the functional $I$ 
attains an extremum, the derivative at $\psi_0$ must vanish in all 
directions, i.e.   
$$D_\phi I(\psi_0)=\int_M(D_a\psi_0D^a\phi + V\psi_0\phi -\lambda_0\psi_0\phi )
dV=0 $$ 
In particular, it vanishes for smooth $\phi$ which means 
$$-D^2\psi_0 + V\psi_0=\lambda_0\psi_0$$ 
in the sense of distributions or in other words $\psi_0$ is a weak solution. 
Now, rewrite the equation as $D^2\psi_0=(
\lambda_0-V)\psi_0$ which means $D^2\psi_0$ is in $L^2(M)$ because $\psi_0$ 
is in $H_1(M)$ and $V$ is smooth. Further, $|D_a\psi_0|$ is in $L^2(M)$ and  
$V$ is smooth so it follows that $|D_aD^2\psi_0|$ is in $L^2(M)$. By using 
induction, one can prove that $D^{2k}\psi_0$ and $|D_aD^{2k}\psi_0|$ are in 
$L^2(M)$. Now, one can prove that the following identity  
$$D_aD_bT_{cd\dots e}-D_bD_aT_{cd\dots e}=R_{abc}^{\ \ \ \ f}T_{fd\dots e} 
+R_{abd}^{\ \ \ \ f}T_{cf\dots e}+\dots +R_{abe}^{\ \ \ \ f}T_{cd\dots f} 
\  ,$$ 
holds. Using this identity and the facts about the derivatives of $\psi_0$ 
, it follows that 
$$\int\nolimits D_aD_b\psi_0D^aD^b\psi_0\, dV=\int\nolimits \bigl(D^2\psi_0
\bigr)^2\, dV-\int\nolimits R_{ab}D^a\psi_0D^b\psi_0\, dV $$ 
and the righthand side is finite. Therefore, $\psi_0\in H_2(M)$. Repeating 
this procedure for $D_aD_bD_c\psi_0$, it follows that 

\begin{align*}
& \int\nolimits D_aD_bD_c\psi_0D^aD^b D^c\psi_0\, dV= \int\nolimits
D_cD^2 \psi_0D^cD^2\psi_0\, dV -\int\nolimits R_{be}D^eD_c\psi_0D^b
D^c\psi_0\, dV - \\ & 2\int\nolimits R_{abce}D^aD^e\psi_0D^bD^c\psi_0\, dV -
\int\nolimits \bigl(D^aR_{abcd}\bigr)D^d\psi_0D^bD^c\psi_0\, dV - \\ &
\int\nolimits R_c^{\, d}D_d\psi_0 R_e^{\, c}D^e\psi_0\, dV-
2\int\nolimits R_c^{\ d}D_d\psi_0D^cD^2\psi_0\, dV
\end{align*}

and again the righthand side is finite. Therefore, $\psi_0\in H_3(M)$. 
By using this bootstrap technique and induction, one can prove that any 
weak solution to the differential equation is in $H_k(M)$ for all $k$. 
However, the Sobolev embedding theorem for compact n-manifolds says that 
$$ H_k(M)\subset C^r(M)$$ 
for $k>{n\over 2} +r$. Therefore, all weak solutions are smooth. 
  
Finally, let $\psi_0$ be any solution, then the above argument implies it is 
smooth. Since $\psi_0$ is smooth, it follows that $|\psi_0|$ is also in 
$H_1(M)$. Furthermore, $I(\psi_0)=I(|\psi_0|)$ so $|\psi_0|$ is also a 
solution. Further, the regularity of weak solutions implies $|\psi_0|$ must be 
smooth. Therefore, $\psi_0$ cannot change sign which 
means $\psi_0$ can be chosen to be nonnegative. In order to prove that 
$\psi_0>0$, let $u= -\psi_0$ and then apply the maximum principle 
theorem \ref{Theorem 10}. 
\end{proof} 
  
The next lemma is used in producing asymptotically flat 3-manifolds from 
closed ones. Although, one usually proves this using operator techniques on 
Hilbert spaces, it will proven via a variational principle below. 

\begin{lem}\label{Lemma 13} If $\lambda _0>0$ for the operator $L=-D^2+V$, then 
$L$ is an isomorphism of $C^\infty (M)$ to itself. 
\end{lem}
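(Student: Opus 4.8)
The plan is to prove injectivity and surjectivity of $L$ separately, working in the Sobolev space $H_1(M)$ and then invoking elliptic regularity, re-using the machinery assembled in the proof of Lemma~\ref{Lemma 12}; as the remark in the text indicates, the standard route is via spectral theory of the self-adjoint operator $L$ on $L^2(M)$, but I will stay with a variational argument. Write $B(u,v)\equiv\int_M(D_au\,D^av+Vuv)\,dV$ for the symmetric bilinear form on $H_1(M)$ attached to $L$; by construction the Rayleigh quotient of Lemma~\ref{Lemma 12} is $I(\psi)=B(\psi,\psi)/\|\psi\|_2^2$, so the hypothesis $\lambda_0>0$ is precisely the inequality $B(\psi,\psi)\ge\lambda_0\|\psi\|_2^2$ for every $\psi\in H_1(M)$. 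Injectivity is then immediate: if $u\in C^\infty(M)$ satisfies $Lu=0$, integrating $u\,Lu$ over the closed manifold $M$ and using $-D^2$ is formally self-adjoint gives $0=B(u,u)\ge\lambda_0\|u\|_2^2$, so $u\equiv 0$.

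For surjectivity I would first upgrade $B(\psi,\psi)\ge\lambda_0\|\psi\|_2^2$, which only controls $\|\psi\|_2$, to genuine coercivity on $H_1(M)$. This is elementary once $\lambda_0>0$: since $V$ is smooth on the compact manifold $M$ it is bounded, say $|V|\le K$, and then for any $u\in H_1(M)$,
\begin{equation}\nonumber
\|Du\|_2^2=B(u,u)-\int_M Vu^2\,dV\le B(u,u)+K\|u\|_2^2\le\Bigl(1+\tfrac{K}{\lambda_0}\Bigr)B(u,u),
\end{equation}
so $\|u\|_{H_1}^2=\|u\|_2^2+\|Du\|_2^2\le c^{-1}B(u,u)$ with $c^{-1}=\tfrac{1}{\lambda_0}+1+\tfrac{K}{\lambda_0}$. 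Given any $f\in C^\infty(M)\subset L^2(M)$, I would then minimize $J(u)=\tfrac12 B(u,u)-\int_M fu\,dV$ over $H_1(M)$: coercivity keeps minimizing sequences bounded in $H_1(M)$, hence (after passing to a subsequence) weakly convergent in $H_1(M)$ and, by the compact embedding $H_1(M)\subset L^2(M)$ used in Lemma~\ref{Lemma 12}, strongly convergent in $L^2(M)$; weak lower semicontinuity of $\|Du\|_2^2$ together with strong $L^2$ convergence of the lower-order terms shows the limit $u_0\in H_1(M)$ minimizes $J$, and its Euler--Lagrange equation reads $B(u_0,\phi)=\int_M f\phi\,dV$ for all $\phi\in H_1(M)$, i.e.\ $-D^2u_0+Vu_0=f$ weakly.

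It remains to show the weak solution $u_0$ is smooth, and here the work is already done in Lemma~\ref{Lemma 12}: the identical bootstrap — differentiate the equation, apply the Ricci commutation identity, and use smoothness of $f$ and $V$ to place $u_0$ in every $H_k(M)$, then invoke the Sobolev embedding $H_k(M)\subset C^r(M)$ for $k>\tfrac n2+r$ — gives $u_0\in C^\infty(M)$, so $Lu_0=f$ and $L$ is onto $C^\infty(M)$. Combined with injectivity, $L$ is a bijection of $C^\infty(M)$; that it is a topological isomorphism follows from the elliptic estimate $\|u\|_{(k+2)}\le C_k(\|Lu\|_{(k)}+\|u\|_2)$, where injectivity and a standard compactness argument remove the $\|u\|_2$ term, bounding $L^{-1}$ on each $H_k(M)$ and hence in the $C^\infty$ topology. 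I do not expect a serious obstacle here: the only ingredient beyond Lemma~\ref{Lemma 12} is the coercivity estimate and the direct-method minimization, and the point requiring the most care is simply noticing that $\lambda_0>0$ converts immediately into an honest $H_1$-coercivity bound, after which smoothness of the minimizer is inherited verbatim from the earlier regularity argument.
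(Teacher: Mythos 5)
Your argument is correct and, at bottom, follows the same route as the paper. Both proofs rest on the same three ingredients: the Rayleigh-quotient inequality $B(\phi,\phi)\ge\lambda_0\|\phi\|_2^2$ inherited from $\lambda_0>0$, which gives injectivity immediately; the equivalence of $B(\cdot,\cdot)$ with the standard $H_1(M)$ inner product (your explicit coercivity computation versus the paper's stated two-sided bound $K^{-1}\|\phi\|\le\|\phi\|_L\le K\|\phi\|$); and the regularity bootstrap of Lemma~\ref{Lemma 12} to lift the weak solution to $C^\infty(M)$. The one place you diverge is in producing the weak solution: the paper treats $(H_1(M),(\cdot,\cdot)_L)$ as a Hilbert space in its own right and applies the Riesz representation theorem to the bounded functional $F(\psi)=\int_M f\psi\,dV$, whereas you minimize $J(u)=\tfrac12 B(u,u)-\int_M fu\,dV$ by the direct method, re-running the weak-compactness and compact-embedding machinery from Lemma~\ref{Lemma 12}. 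These are two realizations of the same variational fact --- the direct method here is essentially a concrete proof of Riesz for this particular inner product --- so neither is more general, but the paper's version is tighter because the compactness bookkeeping is absorbed into the abstract theorem rather than redone by hand. Your closing paragraph on the topological isomorphism via elliptic estimates is a correct observation but goes beyond what the lemma asserts and what the paper proves; the statement only requires a bijection of $C^\infty(M)$, and that is established once injectivity, weak solvability, and smoothness of the weak solution are in hand.
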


\begin{proof}
First, $\lambda_0>0$ and its variational characterization imply  
$$0< \lambda_0 ||\phi ||_2^2\le \int_M\, \bigl(D_a\phi D^a\phi + V\phi^2\bigr)
dV$$  
for all $\phi \in C^\infty (M)$ not identically zero. This means $L$ is one to 
 one. Now, define an inner product on $C^\infty (M)$ by 
$$(\phi ,\psi )_L\equiv \int_M\, \bigl(D_a\phi D^a\psi + V\phi \psi \bigr)dV$$ \
Furthermore, there is a constant $K>0$ such that $K^{-1}||\phi ||\le ||\phi ||_L
\le K||\phi ||$ where $||\ ||$ is the standard norm on $H_1(M)$. Therefore, the 
completion of $C^\infty (M)$ in the norm $||\ ||_L$ is not only a Hilbert space 
but it is $H_1(M)$.  

Next, pick a smooth function $f$ and define the functional 
$F(\psi )\equiv \int f\psi dV$ for all $\psi \in H_1(M)$.  Observe that 
$$|F(\psi )|\le \int_M\, |f\psi |dV\le ||f||_2||\psi ||_2\le C||\psi ||_L $$ 
where $C$ is a constant. Therefore, $F$ is a bounded linear functional on 
$H_1(M)$. The Riesz representation theorem\footnote{Riesz Representation Theorem: Given any bounded linear functional 
$F$ on a Hilbert space ${\cal H}$ there is a uniquely determined vector $f$ 
in ${\cal H}$ such that $F(x)=(x,f)$ for all $x\in {\cal H}$. Furthermore, 
$||F||=||f||$. Note: ${\cal H}$ need not be separable.}  implies there is a unique 
$\phi \in H_1(M)$ such that $(\psi ,\phi)=F(\psi )$ for all $\psi 
\in H_1(M)$. Therefore, 
$$\int_M\, \bigl(D_a\psi D^a\phi + V\psi \phi -f\psi \bigr)dV=0$$ 
for all $\psi$, in other words $\phi$ is a weak solution of $L\phi =f$. 
Since $f$ and $V$ are smooth, the regularity argument of lemma \ref{Lemma 12}
implies $\phi$ is smooth.
\end{proof}

The following Sobolev inequality is needed to control the norm of the scalar 
curvature as the metric is deformed using a family of bump functions. This 
inequality is well known and proven in several different papers \cite{sy}. 
However, the proof will be given here for the sake of completeness. 

\begin{lem}\label{Lemma 14} If $\Sigma $ is an asymptotically flat 3-manifold 
with metric $g_{ab}$ satisfying $g_{ab}=\delta _{ab}+h_{ab}$ outside a compact 
set $C$ where $|h_{ab}|\le {A\over r}$, $|\partial _ch_{ab}|\le {A\over r^2}$, 
and $|\partial _d\partial _ch_{ab}|\le {A\over r^3}$; then 
$$\Bigl(\int_\Sigma |f|^6\, dV\Bigr)^{1\over 6}\le K\Bigl(\int_\Sigma D_afD^af\, dV\Bigr)^{1\over 2}$$ 
for all $f\in C_o^\infty\bigl(\Sigma \bigr)$. Furthermore, if two metrics agree 
 on $C'\supset C$ and are bounded by the same constant $A$, then $K$ is the 
same for both of them. 
\end{lem}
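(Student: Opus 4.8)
The plan is to establish the Sobolev inequality first on $\mathbb{R}^3$ with the Euclidean metric $\delta_{ab}$, and then transfer it to $(\Sigma,g_{ab})$ by a cutoff-and-patch argument that exploits the asymptotic decay $g_{ab}=\delta_{ab}+h_{ab}$ with $|h_{ab}|\le A/r$. I would begin by recalling the classical Gagliardo--Nirenberg--Sobolev inequality in $\mathbb{R}^n$ at the critical exponent: for $n=3$ one has $\bigl(\int_{\mathbb{R}^3}|f|^6\,dx\bigr)^{1/6}\le K_0\bigl(\int_{\mathbb{R}^3}|\nabla f|^2\,dx\bigr)^{1/2}$ for all $f\in C_0^\infty(\mathbb{R}^3)$, where $6=2n/(n-2)$. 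This is the standard critical Sobolev embedding $\dot H^1(\mathbb{R}^3)\hookrightarrow L^6(\mathbb{R}^3)$, whose proof (via the layer-cake/coarea argument of Gagliardo and Nirenberg, or Federer--Fleming) I would merely cite, since the stated intent is completeness rather than originality. The point is that the Euclidean constant $K_0$ depends only on the dimension.

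Next I would handle the transfer to $\Sigma$. Write $\Sigma=C\cup(\Sigma-C)$ where on $\Sigma-C$ each end is identified with $\mathbb{R}^3$ minus a ball and $g_{ab}=\delta_{ab}+h_{ab}$. Using the pointwise bound $|h_{ab}|\le A/r$, the volume element $dV=\sqrt{\det g}\,dx$ and the inverse metric $g^{ab}$ differ from their Euclidean counterparts by factors bounded by $1+c(A)/r\le 1+c(A)/r_0$ once $r\ge r_0$ for $r_0$ large enough, so on the ends the Riemannian quantities $\int |f|^6\,dV$ and $\int g^{ab}D_afD_bf\,dV$ are comparable, up to a fixed multiplicative constant depending only on $A$, to the flat ones $\int|f|^6\,dx$ and $\int|\nabla f|^2\,dx$. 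On the compact core $C$ (together with a collar), the metric is uniformly equivalent to a fixed smooth metric, so the local Sobolev inequality there is immediate with a constant depending only on the geometry of $C$ (and this is where one sees that agreement on $C'\supset C$ forces the same constant). The slightly technical part is the partition of unity: pick a smooth cutoff $\chi$ supported in a neighborhood of $C$ with $1-\chi$ supported in the asymptotic region, write $f=\chi f+(1-\chi)f$, apply the two regimes separately, control the cross term $\int |\nabla\chi|^2 f^2$ by absorbing it using an interpolation/Poincaré estimate on the compact overlap (where $f$ need not have compact support, but the overlap is relatively compact), and combine via the triangle inequality in $L^6$ and in $L^2$.

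The main obstacle I expect is bookkeeping the dependence of the final constant $K$ only on $A$ (and on the fixed compact piece), not on the detailed global geometry or on the particular $f$. The cross-term from differentiating the cutoff must be handled carefully: one wants $\int_{C'}|\nabla\chi|^2 f^2\,dV\le \epsilon\int D_afD^af\,dV + C(\epsilon,A)\int D_afD^af\,dV$ type bounds, which really amounts to a compactly-supported Poincaré/Sobolev estimate on the overlap annulus. Because $\chi$ can be chosen once and for all (it depends only on $C'$, which is common to the two metrics being compared), its gradient is a fixed function, and on the relatively compact overlap all metrics agreeing on $C'$ and bounded by $A$ give uniformly equivalent norms — this is exactly the mechanism that yields the ``same $K$ for both'' claim. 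Once these uniform equivalences are in place, the inequality follows by summing the core estimate and the $N$ end estimates (the number of ends being finite), completing the proof.
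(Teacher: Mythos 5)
Your proposal takes a genuinely different route from the paper. Both reduce first to a Sobolev inequality on the exterior region $\mathbb{R}^3$ minus a ball and transfer it to the ends of $\Sigma$ via the uniform metric equivalence coming from $|h_{ab}|\le A/r$; after that the two arguments diverge. The paper proves the global inequality indirectly: it posits a normalized sequence $f_n$ with $\int_\Sigma|f_n|^6\,dV=1$ and $\int_\Sigma D_af_nD^af_n\,dV<1/n$, applies the end Sobolev inequality to force $f_n\to 0$ in $L^6$ on the ends, applies a mean-zero Sobolev-Poincar\'e inequality (built from the first nonzero Laplacian eigenvalue) on a compact piece $C_1$ to get $f_n-\beta_n\to 0$ in $L^6$ there with $\beta_n=\int_{C_1}f_n\,dV$, and then observes that the two overlap regions force $\beta_n\to 0$, hence $f_n\to 0$ globally in $L^6$, a contradiction. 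You instead cut $f$ with a fixed $\chi$, apply a Dirichlet Sobolev inequality to $\chi f$ on a compact domain, apply the end Sobolev to $(1-\chi)f$ (which, vanishing near the inner boundary, can be extended by zero to $\mathbb{R}^3$), and add. The direct route is more constructive, and it arguably makes the ``same $K$'' claim more transparent: your constant is visibly assembled from a fixed core constant, a fixed cutoff, and an $A$-dependent end constant, whereas the paper's compactness argument shows a constant exists without exhibiting it, and the same-$K$ assertion there is obtained by observing that the whole argument is run uniformly over metrics agreeing on $C'$ with the same asymptotic bound.

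There is, however, one genuine imprecision in your cross-term control that you should fix. You propose bounding $\int_A|\nabla\chi|^2f^2\,dV$ by a ``compactly-supported Poincar\'e/Sobolev estimate on the overlap annulus,'' and you also display an absorption inequality whose right-hand side is twice the gradient term, which does not parse as an absorption. The real issue is that $f$ restricted to the annulus $A=\mathrm{supp}\,\nabla\chi$ is neither compactly supported there nor mean-zero, so no Poincar\'e inequality purely on $A$ can bound $\int_A f^2$ by $\int_A D_afD^af\,dV$. The estimate that actually closes the argument is
$$\int_A f^2\,dV\le |A|^{2/3}\Bigl(\int_{\mathrm{end}}|f|^6\,dV\Bigr)^{1/3}\le |A|^{2/3}K_{\mathrm{end}}^2\int_\Sigma D_afD^af\,dV,$$
where the first step is H\"older on the finite-measure annulus and the second is the exterior-domain Sobolev inequality applied to $f$ itself on the end. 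Note that this step requires the Sobolev inequality on $\mathbb{R}^3\setminus B$ for functions that vanish at infinity but \emph{not} on the inner boundary $\partial B$ (since $f$ does not vanish there), a fact the paper also uses implicitly when it restricts its normalized sequence to $\Sigma-C$. Once the cross-term is handled this way, no $\epsilon$-absorption is needed and your direct patching argument goes through.
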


\begin{proof}
The asymptotic behavior of the metric implies that there is 
a constant $\lambda $ depending on $A$ such that 
$\lambda ^{-1}\delta _{ab}\xi^a\xi^b\le g_{ab}\xi^a\xi^b\le \lambda \delta_{ab}\xi^a\xi^b$ on $\Sigma -C$.  
Now, the above Sobolev inequality does hold on ${\mathbb R}^3$ minus a ball with 
the standard metric. This follows from the same inequality on ${\mathbb R}^3$. 
The asymptotic bounds on the metric and the Sobolev inequality for the flat 
metric on ${\mathbb R}^3-B$ imply 

\begin{align*}
& \Bigl(\int_{\Sigma -C}\, |f|^6\, dV\Bigr)^{1\over 6}\le \lambda
 ^{1\over 4}\Bigl(\int_{\Sigma -C}\, |f|^6\, d^3x\Bigr)^{1\over 6}
 \le {\overline K}\lambda ^{1\over 4}\Bigl(\int_{\Sigma -C}\,
 \delta^{ab}D_afD_bf\, dV\Bigr)^{1\over 2} \le \\ & {\overline K}\lambda
 ^{2\over 3}\Bigl(\int_{\Sigma -C}\, D_afD^af\, dV\Bigr)^{1\over 2}
 \end{align*}

where ${\overline K}$ is the Sobolev constant for ${\mathbb R}^3-B$ times the number
 of asymptotic regions. This establishes the Sobolev inequality on the 
asymptotic regions. Now, suppose the inequality fails on $\Sigma $; then there 
is a function $f_n\in C_o^\infty\bigl(\Sigma \bigr)$ for positive integer $n $
such that $\int_\Sigma |f_n|^6\, dV=1$ and $\int_\Sigma \, D_af_nD^af_n\, dV<{1\over n}$.  
Now, apply the Sobolev inequality to the sequence $f_n$ restricted to $\Sigma -C$ 
to obtain $\bigl(\int_{\Sigma -C}\, |f_n|^6\, dV\bigr)^{1\over 6}< {{{\overline K}\lambda ^{3\over 2}}\over n^{1\over 2}}$.  
Taking the limit as $n\rightarrow \infty $ yields $f_n\rightarrow 0$ 
in the $L^6$ norm on $\Sigma -C$. 

One other inequality is needed to complete the proof, namely, the following:  
Given any compact 3-manifold (with or without boundary) $N$ and any smooth 
function on it which satisfies $\int_N\, fdV=0$, then there is a constant 
$\widetilde K$ independent of $f$ such that 
$$\Bigl(\int_N\, |f|^6\, dV\Bigr)^{1\over 6}\le {\widetilde K}\Bigl(\int_N\, D_afD^af\, dV\Bigr)^{1\over 2}\ .$$ 
One can prove this inequality by choosing a covering of $N$ by charts and a 
partition of unity. Next, apply the Sobolev inequality 
on ${\mathbb R}^3$ to each chart and sum the resulting inequalities using the 
partition of unity; for details see \cite{a}. This yields the inequality 
$$\Bigl(\int_N\, |f|^6\, dV\Bigr)^{1\over 6}\le {\widehat K}\Bigl[\Bigl(\int_N\, D_afD^af\, dV\Bigr)^{1\over 2} + \Bigl(\int_N\, |f|^2\, dV\Bigr)^{1\over 2}\Bigr]\ .$$ 
The additional term is due to the gradients of the functions in the partition
of unity. In order to express the last integral in terms of the $L^2$ norm of 
the gradient of $f$, the first nontrivial eigenvalue of the Laplacian on $N$ is 
used. More precisely, let 
$$\lambda_1=\inf_{\phi }{{\int_N\, D_a\phi D^a\phi \, dV}\over {\int_N\, \phi^2\, dV}}\, ,$$ 
subject to the constraint $\int_N \phi dV=0$. Since $\lambda _1$ is a 
minimum of the functional on the righthand side, it follows that 
$$\int_N\, f^2\, dV\le {1\over {\lambda_1}}\int_N\, D_afD^af\, dV\, ,$$ 
for all smooth $f$ which satisfy $\int_N  fdV=0$. Combining this inequality 
with the previous one yields the desired inequality. 
 
Now, let $\beta _n\equiv \int_{C_1}\, f_n\, dV$ where $C\subset C_1\subseteq C'$
 and $C_1$ is compact, then 
$$\Bigl(\int_{C_1}\, |f_n-\beta _n|^6\, dV\Bigr)^{1\over 6}\le {\widetilde K}\Bigl(\int_{C_1}\, D_a{f_n}D^a{f_n}\, dV\Bigr)^{1\over 2}<{{\widetilde K}\over {n^{1\over 2}}}\ ,$$ 
because of the inequality in the previous paragraph. Taking the limit as 
$n\rightarrow\infty$ yields $f_n\rightarrow\beta _n$ in the $L^6$ norm on $C_1$ 
 which means $f_n\rightarrow\beta _n$ in the $L^6$ norm on $\bigl(\Sigma - C\bigr)\bigcap C_1$.  
However, $f_n\rightarrow 0$ on $\Sigma -C$ from our arguments on the asymptotic
 regions, which implies	$\beta _n\rightarrow 0$. Therefore, $f_n\rightarrow 0$ 
in the $L^6$ norm on $\Sigma $, a contradiction to 
$\int_\Sigma \, |f_n|^6\, dV=1$. Therefore, the Sobolev 
inequality must hold for all $f\in C_o^\infty\bigl(\Sigma \bigr)$. 

Finally, if two metrics agree on $C'$ and are asymptotically bounded by the 
same constants, then the Sobolev constant is the same for both of them. This 
is true because the metrics agree on a compact set and must have the same 
Sobolev constant on that region. Next, in the asymptotic regions both metrics 
have the same bounding constants. Moreover, the argument which established 
the inequality on the asymptotic regions implies both metrics have the same 
constant in the asymptotic regions. Therefore, it follows that $K$ can be 
chosen to be the same for both metrics. 
\end{proof}

The final proposition is an existence theorem for conformal metrics 
with $R=0$. This result was proven by several authors \cite{sy,cb}, 
however, the proof is given here for the sake of completeness and the technique 
used here is simpler. The norm $||\ ||_{3\over 2}$ is the norm on 
$L^{3\over 2}$.  

\begin{lem}\label{Lemma 15} If $\Sigma $ is an asymptotically flat 3-manifold and 
$L= -D^2 + V$, then $L\phi =f$ has a unique solution which is smooth and 
$O({1\over r})$ whenever $V$ and $f$ are smooth, $O({1\over r^3})$, and 
$||V_{-}||_{3\over 2} < {1\over K^2}$ where $K$ is a Sobolev constant (from lemma 
\ref{Lemma 14}) and $V_{-}$ is the absolute value of the negative part of $V$. 
\end{lem}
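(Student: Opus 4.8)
The plan is to solve $L\phi = f$ on the asymptotically flat manifold $\Sigma$ by a variational method, mimicking the argument of Lemma \ref{Lemma 13} but working in the right function space for the noncompact setting. First I would set up the Hilbert space: let $\mathcal{H}$ be the completion of $C_o^\infty(\Sigma)$ in the norm $\|\phi\|_{\mathcal H}^2 = \int_\Sigma D_a\phi D^a\phi\, dV$. The Sobolev inequality of Lemma \ref{Lemma 14} shows this norm controls the $L^6$ norm, so elements of $\mathcal H$ are genuine functions in $L^6(\Sigma)$, and $\mathcal H$ is a Hilbert space with inner product $\langle\phi,\psi\rangle = \int_\Sigma D_a\phi D^a\psi\, dV$. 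The next step is to show that the bilinear form $B(\phi,\psi) = \int_\Sigma (D_a\phi D^a\psi + V\phi\psi)\, dV$ is bounded and coercive on $\mathcal H$. Boundedness of the $V$-term follows from Hölder with exponents $3/2$ and $3$ together with the embedding $\mathcal H \hookrightarrow L^6 \hookrightarrow L^3$ on the relevant region, using that $V = O(1/r^3)$ makes $V \in L^{3/2}$. Coercivity is where the hypothesis $\|V_-\|_{3/2} < 1/K^2$ enters: estimating
\begin{equation}\nonumber
\int_\Sigma V\phi^2\, dV \ge -\int_\Sigma V_-\phi^2\, dV \ge -\|V_-\|_{3\over 2}\Bigl(\int_\Sigma |\phi|^6\, dV\Bigr)^{1\over 3} \ge -K^2\|V_-\|_{3\over 2}\int_\Sigma D_a\phi D^a\phi\, dV,
\end{equation}
so $B(\phi,\phi) \ge (1 - K^2\|V_-\|_{3\over 2})\|\phi\|_{\mathcal H}^2$, which is a positive multiple of $\|\phi\|_{\mathcal H}^2$ precisely under the stated smallness condition.

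With coercivity and boundedness in hand, I would check that $\psi \mapsto \int_\Sigma f\psi\, dV$ is a bounded linear functional on $\mathcal H$: since $f = O(1/r^3)$ is smooth it lies in $L^{6/5}(\Sigma)$, and Hölder with the pair $6/5, 6$ together with the Sobolev inequality bounds $|\int f\psi| \le \|f\|_{6/5}\|\psi\|_6 \le K\|f\|_{6/5}\|\psi\|_{\mathcal H}$. The Lax–Milgram theorem (or equivalently, completing the square and invoking Riesz representation as in Lemma \ref{Lemma 13}) then produces a unique $\phi \in \mathcal H$ with $B(\phi,\psi) = \int_\Sigma f\psi\, dV$ for all $\psi \in \mathcal H$, i.e.\ a weak solution of $L\phi = f$. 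Uniqueness in $\mathcal H$ is immediate from coercivity ($B(\phi,\phi)=0 \Rightarrow \phi = 0$).

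It remains to upgrade the weak solution to a smooth one and to extract the decay rate $O(1/r)$. Interior smoothness is local elliptic regularity: the bootstrap argument already carried out in the proof of Lemma \ref{Lemma 12} applies verbatim on any compact piece of $\Sigma$, since $V$ and $f$ are smooth, giving $\phi \in C^\infty(\Sigma)$. For the decay, I would work in a single asymptotic end, where $-D^2$ is a small perturbation of the flat Laplacian on $\mathbb R^3 \setminus B$ and $V\phi - f = O(1/r^3)\cdot\phi + O(1/r^3)$; since $\phi \in L^6$ there it already decays in an integrated sense, and then one represents $\phi$ against the flat Green's function $\tfrac{1}{4\pi|x-y|}$ (the same device flagged in the introduction, the Green's function of $-8D^2+R$) to get the pointwise bound $|\phi(x)| \le C/|x|$. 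The source term contributes $\int \frac{1}{|x-y|}\,O(1/|y|^3)\,dy = O(1/|x|)$ and the curvature/perturbation terms are handled by a contraction/iteration estimate exploiting that all the coefficients decay like $1/r^3$. I expect this last step—the pointwise $O(1/r)$ decay, as opposed to the mere existence and smoothness of the solution—to be the main obstacle, since it requires a genuine barrier or Green's-function estimate in the asymptotic region rather than a soft functional-analytic argument; the coercivity estimate above, by contrast, is the one place the hypothesis $\|V_-\|_{3/2} < 1/K^2$ is used and it is essentially forced on us by the structure of the Sobolev inequality.
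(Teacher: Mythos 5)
Your proposal is correct and follows essentially the same route as the paper: the paper builds the Hilbert space $H_L$ directly from the symmetric form $(\phi,\psi)_L=\int(D_a\phi D^a\psi+V\phi\psi)\,dV$, uses the Sobolev inequality of Lemma~\ref{Lemma 14} together with $\|V_-\|_{3/2}<1/K^2$ to show this is a genuine inner product whose completion embeds in $L^6$, and then invokes Riesz representation for the bounded functional $\psi\mapsto\int f\psi\,dV$ (Hölder with $6/5,6$), followed by the local bootstrap of Lemma~\ref{Lemma 12} and the fall-off conditions on $V,f$ — which is exactly your coercivity-plus-Lax--Milgram argument restated, since $B$ symmetric and coercive means Lax--Milgram reduces to Riesz in the $B$-inner product. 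The one small slip in your write-up is the phrase ``$L^6\hookrightarrow L^3$ on the relevant region'' (that embedding fails on an unbounded end); what you actually use, $\|\phi^2\|_3=\|\phi\|_6^2$, is the correct identity and is what the paper uses too, and your remark that the pointwise $O(1/r)$ decay is the step requiring genuine additional work is fair — the paper asserts it without detail.
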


\begin{proof}
Let 
$$||\phi ||_L^2\equiv \int \bigl( D_a\phi D^a\phi + V\phi ^2
\bigr) dV$$ for all $\phi \in C_o^\infty (\Sigma )$. Using H\"older's 
inequality and the fact that $V=V_{+}-V_{-}$ where $V_{+}$ 
and $V_{-}$ are absolute values of the nonnegative and negative 
parts of $V$, respectively, it follows that 
$$\int_\Sigma D_a\phi D^a\phi\, dV - ||V_{-}||_{3\over 2}||\phi ^2||_3 
\le \int_\Sigma \bigl( D_a\phi D^a\phi - V_{-}\phi^2\bigr)dV\le 
\int_\Sigma \bigl( D_a\phi D^a\phi + V\phi^2\bigr)dV \ $$  
where $||\ ||_{{3\over 2}}$ and $||\ ||_3$ are the $L^{{3\over 2}}$ and 
$L^3$ norms respectively. Next, observe that observe that 
$||\phi ||_{2p}^2=||\phi ^2||_{p}$ for $L^{{2p}}$ and 
$L^p$ norms, respectively so

$$\int_\Sigma D_a\phi D^a\phi\, dV - ||V_{-}||_{3\over 2}||\phi ||_6^2 =
\int_\Sigma D_a\phi D^a\phi\, dV - ||V_{-}||_{3\over 2}||\phi ^2||_3 
\le 
\int_\Sigma \bigl( D_a\phi D^a\phi + V\phi^2\bigr)dV \  $$ 

Lemma \ref{Lemma 14} implies that 
$$0<\bigl( {1\over K^2} - ||V_{-}||_{3\over 2}\bigr) ||\phi||_6^2\le 
||\phi ||_L^2 \ .$$  
Therefore, $||\ ||_L$ is a norm and its completion, $H_L$, is contained in 
$L^6(\Sigma )$. Furthermore, $H_L$ is a Hilbert space because $||\phi ||_L^2 
=(\phi ,\phi )_L$ where 
$$(\phi ,\psi )_L \equiv \int_{\Sigma } \bigl( D_a\phi D^a\psi + V\phi \psi 
\bigr) dV\ .$$ 

Next, let $F(\psi )\equiv \int f\psi dV$; then H\"older's inequality 
implies 
$$|F(\psi )|\le \int_{\Sigma } |f\psi |dV\le ||f||_{6\over 5}||\psi ||_6 
\le C||\psi ||_L $$ 
where $C$ is a constant. Hence $F$ is a bounded linear functional on $H_L$. 
The Riesz representation theorem implies there exists a unique 
$\phi \in H_L$ such that $(\psi ,\phi )_L=F(\psi )$ for all $\psi \in H_L$. 
Therefore, $L\phi =f$ weakly. 

Finally, $\phi \in L^6$ which means $\phi$ is locally in $L^6$.\footnote{ A function is locally in $L^p(\Sigma )$ if it is in $L^p(K)$ for 
all compact sets, $K$, contained in $\Sigma $. The space of all such functions 
is denoted by $L_{\rm loc}^p(\Sigma )$ and the topology is given by requiring 
sequences to converge in $L^p(K)$ for all $K$. }
Since $L_{\rm loc}^6(\Sigma )\subset L_{\rm loc}^2(\Sigma )$, it follows that 
$\phi \in L_{\rm loc}^2(\Sigma )$. This means the regularity argument of 
lemma \ref{Lemma 12} applies locally. Hence, $\phi $ is locally smooth on $\Sigma$ 
and therefore smooth everywhere. Furthermore, using the fall off 
conditions on $V$ and $f$, one can show $\phi$ has ${1\over r}$ fall off. 
\end{proof}
 
The compactification theorem will now be proven using a gluing procedure. 
The technique involves deforming the metric to a flat metric in the asymptotic 
regions, and regulating the norm of the scalar curvature so that the operator 
$L=-8D^2 + R $ remains positive. Then, the asymptotic regions are compactified 
by smoothly gluing in a ball onto each of the regions. Next, positivity of the 
operator $L$ is used to construct a Green's function of the operator $L$ on the 
 compactified manifold. Finally, the properties of the Green's function are used
 to find a smooth metric on the compactification with $R>0$.
   
\begin{thm}\label{Theorem 16} Every asymptotically flat 3-manifold $\Sigma $ which 
admits a metric with nonnegative scalar curvature has a smooth compactification 
 $\widetilde \Sigma $ which admits a metric with positive scalar curvature. 
\end{thm}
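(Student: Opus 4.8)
The plan is to reduce the asymptotically flat case to the closed case already handled in Lemma~\ref{Lemma 12}: first make $g_{ab}$ exactly flat outside a compact set while keeping the operator $L=-8D^2+R$ (the three‑dimensional conformal Laplacian) positive; then conformally close up the flat ends to points by a conformal factor solving $Lu=\mu$; and finally run an eigenvalue argument on the resulting closed $3$‑manifold. The point of $L$ is that under $\widetilde g=u^4 g$ with $u>0$ one has $R_{\widetilde g}=u^{-5}Lu$, so producing a conformal metric with $R>0$ amounts to producing a positive $u$ with $Lu>0$, and $R\ge0,\ R\not\equiv0$ corresponds to $Lu\ge0,\ Lu\not\equiv0$.

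First I would deform $g_{ab}$, which has $R\ge0$, to a metric $\bar g_{ab}$ flat outside a compact set. In each end, writing $g_{ab}=\delta_{ab}+h_{ab}$ with $|h_{ab}|\le A/r$, $|\partial_c h_{ab}|\le A/r^2$, $|\partial_d\partial_c h_{ab}|\le A/r^3$, set $\bar g_{ab}=\delta_{ab}+\alpha_\rho h_{ab}$ using the cutoff $\alpha_\rho$ of Lemma~\ref{Lemma 11}, so $\bar g=g$ for $r\le\rho$ and $\bar g=\delta$ for $r\ge2\rho$. Since $|\alpha_\rho'|\le A/r$ and $|\alpha_\rho''|\le A/r^2$, the perturbation $\alpha_\rho h$ still has first derivatives $O(1/r^2)$ and second derivatives $O(1/r^3)$; hence $\bar R=R\ge0$ for $r\le\rho$, $\bar R=0$ for $r\ge2\rho$, and $|\bar R|=O(1/r^3)$ on the transition annuli. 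As these annuli have volume $O(\rho^3)$, this yields $\|\bar R_-\|_{3/2}=O(\rho^{-1})\to0$. Choosing $\rho$ large enough that $\|\bar R_-\|_{3/2}<8/K^2$, where $K$ is the Sobolev constant of $\bar g$ from Lemma~\ref{Lemma 14} (which stays bounded as $\rho\to\infty$ because $\bar g$ is flat near infinity), the estimate in the proof of Lemma~\ref{Lemma 15} gives $\int_\Sigma(8\,D_a\phi D^a\phi+\bar R\phi^2)\,dV\ge(8/K^2-\|\bar R_-\|_{3/2})\|\phi\|_6^2>0$ for nonzero $\phi\in C_o^\infty(\Sigma)$, i.e.\ $L=-8D^2+\bar R$ is positive.

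Second, I would pick a nonnegative $\mu\in C_o^\infty(\Sigma)$ with $\mu\not\equiv0$ and $\mathrm{supp}\,\mu\subset\{r\le\rho\}$, and use Lemma~\ref{Lemma 15} (applicable since $\bar R$ and $\mu$ are smooth, compactly supported, and $\|\bar R_-\|_{3/2}$ is small) to solve $Lu=\mu$ with $u$ smooth and $O(1/r)$. Coercivity of $L$ forces $u\ge0$ (test the energy identity against the negative part of $u$), and Theorem~\ref{Theorem 10} then gives $u>0$. In each end $\bar g=\delta$ and $\mu=0$, so $u$ is harmonic there; being positive, harmonic and $O(1/r)$ on ${\mathbb R}^3$ minus a ball, its Kelvin transform extends smoothly across the point at infinity $\infty_i$, so $u=c_i/r+O(1/r^2)$, and comparison with the barrier $(\min_{r=2\rho}u)\,2\rho/r$ shows $c_i>0$. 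Let $\widetilde\Sigma$ be the smooth compactification obtained by adjoining the points $\infty_i$. In the inversion chart $y=x/|x|^2$ near $\infty_i$ one has $\bar g=|y|^{-4}\delta_y$, so $\widetilde g:=u^4\bar g=(c_i+O(|y|))^4\delta_y$ extends to a smooth Riemannian metric on all of $\widetilde\Sigma$, with $\widetilde R=u^{-5}Lu=u^{-5}\mu\ge0$, strictly positive on $\mathrm{supp}\,\mu$; in particular $\widetilde R\ge0$ and $\widetilde R\not\equiv0$.

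Finally, on the closed manifold $(\widetilde\Sigma,\widetilde g)$ I would apply Lemma~\ref{Lemma 12} to $\mathcal L=-8D^2+\widetilde R$, obtaining a smooth $\psi_0>0$ and $\lambda_0\in{\mathbb R}$ with $\mathcal L\psi_0=\lambda_0\psi_0$ and $\lambda_0=\inf\int(8\,D_a\phi D^a\phi+\widetilde R\phi^2)\,dV/\int\phi^2\,dV$. If $\lambda_0=0$, the minimizer $\psi_0$ would satisfy $\int(8\,D_a\psi_0 D^a\psi_0+\widetilde R\psi_0^2)\,dV=0$, forcing $\psi_0$ constant and $\widetilde R\equiv0$, a contradiction; hence $\lambda_0>0$, and $\psi_0^4\widetilde g$ is a smooth metric on $\widetilde\Sigma$ with scalar curvature $\psi_0^{-5}\mathcal L\psi_0=\lambda_0\psi_0^{-4}>0$, which proves the theorem. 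The hard part is the curvature estimate of the second paragraph: everything downstream requires $L$ to remain positive after the deformation, which is exactly why Lemma~\ref{Lemma 11} was stated with the scale‑correct bounds $|\alpha_\rho'|\le A/r$, $|\alpha_\rho''|\le A/r^2$ rather than $A/\rho$, $A/\rho^2$. A secondary subtlety is establishing $c_i>0$, since otherwise $\widetilde g$ would degenerate at the points $\infty_i$.
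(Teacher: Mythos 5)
Your proof is correct, and it takes a genuinely different route to the compactification than the paper does, even though both begin identically: deform $g_{ab}$ to $\bar g_{ab}=\delta_{ab}+\alpha_\rho h_{ab}$ via the Lemma~\ref{Lemma 11} cutoffs, estimate $\|\bar R_-\|_{3/2}=O(\rho^{-1})$ against the uniform Sobolev constant of Lemma~\ref{Lemma 14}, and take $\rho$ large to make $L=-8D^2+\bar R$ coercive so Lemma~\ref{Lemma 15} applies. After that you diverge. The paper solves $L(1+\psi)=0$ with $\psi=O(1/r)$, proves $\overline G=1+\psi>0$ by a continuity-in-$\lambda$ argument (not the coercivity/truncation argument you use), then introduces a \emph{separate} cutoff conformal factor $\phi=\gamma+(1-\gamma)/r$ to close up the ends, verifies via a spherical-harmonic expansion that $\widetilde G=\phi^{-1}\overline G$ is a Green's function for $-8\widetilde D^2+\widetilde R$ with poles at the added points, and finally forces $\lambda_0>0$ through the Green's function identity $8\sum_k\psi_0(i_k)=(\lambda_0/c)\langle\widetilde G,\psi_0\rangle$. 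You instead solve $Lu=\mu$ for a nonnegative, compactly supported, not-identically-zero $\mu$, show $u>0$ directly by testing against $u_-$ and invoking Theorem~\ref{Theorem 10}, observe $u$ is harmonic and $O(1/r)$ in the flattened ends so that its Kelvin transform closes up each end smoothly with $u=c_i/r+O(r^{-2})$, $c_i>0$ (barrier comparison), and then use $u^4$ itself as the compactifying conformal factor. This makes $\widetilde R=u^{-5}\mu\ge0$ with $\widetilde R\not\equiv0$ on the closed manifold, so $\lambda_0>0$ falls out of the variational characterization in Lemma~\ref{Lemma 12} with a one-line argument (if $\lambda_0=0$ the minimizer is constant and $\widetilde R\equiv0$). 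The trade-off: the paper's route is heavier but exhibits the compactified Green's function explicitly, which has independent geometric content (it is the converse direction exploited in Corollary~\ref{Corollary 17}); your route is more economical, avoids the delta-function bookkeeping and the auxiliary conformal factor $\phi$ entirely, and replaces the continuity argument for positivity with a cleaner energy-truncation argument. Both are sound.
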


\begin{proof}
Let $g_{ab}$ be a metric on $\Sigma $ with $R\ge 0$. Now, 
define a one parameter family of metrics ${\bar g}_{ab}(t)\equiv
\alpha _tg_{ab} + (1-\alpha _t)\delta _{ab}$ where $\alpha _t$ is a smooth 
function on $\Sigma $ 
which is equal to one in the interior region of $\Sigma $; and on the asymptotic
 regions has the behavior $\alpha =1$ for $r\le t$, $\alpha =0$ for $r\ge 2t$, 
$|\alpha '|\le {C\over r}$, and $|\alpha ''|\le {C\over r^2}$ where $C$ is a 
constant independent of $t$. The existence of $\alpha $ follows from lemma \ref{Lemma 11}. 
Using the fact that $g_{ab}$ is asymptotically flat and the behavior of $\alpha $,  
it follows that ${\bar g}_{ab}(t)$ is a one parameter family of asymptotically 
flat metrics. The goal is to find a value of $t$ for which the corresponding 
metric is conformal to one which has $R=0$ and then use the conformal factor 
as a Green's function on the compactified manifold. In general, one can never 
have $R\ge 0$ for any metric in the family unless $\Sigma ={\mathbb R}^3$ and 
$g_{ab}$ is flat. 

In order to find the conformal factor, the procedure is to pick $t$ large 
enough, and therefore make $R_{-}$ small enough, so that lemma \ref{Lemma 15} applies with 
$V={1\over 8}R$ and $f=-V$. In order to prove the bound of lemma \ref{Lemma 15} recall 
that the family of metrics has the form ${\bar g}_{ab}(t)= \delta_{ab} 
+ \alpha_th_{ab}$ in the asymptotic regions, where $h_{ab}=O({1\over r})$. 
Further, there is a constant $A$ such that $|\alpha h_{ab}|\le {A\over r}$, 
$|\partial _c\bigl(\alpha h_{ab}\bigr)|\le {A\over r^2}$, and 
$|\partial _d\partial _c\bigl(\alpha h_{ab}\bigr)|\le {A\over r^3}$ for all 
large t. These results combined with the additional observation that $R_{-}=0$ 
unless $t\le r\le 2t$ imply that there are two constants ${\bar A}$ and 
$B$ both independent of $t$ such that 
$$\Bigl(\int_{\Sigma }\, |R_{-}|^{3\over 2}\, dV\Bigr)^{2\over 3}
\le \Bigl(\int_{\Sigma }\, |R|^{3\over 2}\, dV\Bigr)^{2\over 3}
\le {\bar A}\Bigl(
\int_t^{2t}\, {\Bigl({1\over r^3}\Bigr)}^{3\over 2}\, r^2 dr\Bigr)^{2\over 3}
\le {B\over t}\ .$$ 
Now, fix a $t$ large enough so that ${{K^2B}\over t}<8$; denote this metric 
by ${\bar g}_{ab}$. 
For the metric ${\bar g}_{ab}$, lemma \ref{Lemma 15} can be applied and it follows that 
there is a smooth $\psi =O({1\over r})$. Let ${\overline G}\equiv 1+\psi $, 
then ${\overline G}$ satisfies 
$$-8{\overline D}^2\, {\overline G} + {\overline R}\, {\overline G}=0$$ 
on $\Sigma$ with respect to ${\bar g}_{ab}$. Next, it will be shown that 
${\overline G}$ is positive. 
 
Define a family of operators $L_{\lambda }= -8{\overline D}^2 + {\lambda }
{\overline R} $ for $0\le \lambda \le 1$. Lemma \ref{Lemma 15} still applies and it 
follows that a family of solutions ${\overline G}_{\lambda }=1+
{\psi }_{\lambda }$ exists. Let $m_{\lambda }$ denote the minimum of 
${\overline G}_{\lambda }$; then $m_{\lambda }$ is a continuous function 
of $\lambda $. Suppose that for some $\lambda $, $m_{\lambda }\le 0$; then 
continuity implies there is some smaller value of $\lambda $ for which 
$m_{\lambda }=0$. The maximum principle then implies ${\overline G}_{\lambda }=
0$ for all $x\in \Sigma$. This contradicts the fact that ${\overline G}_{\lambda
 } \rightarrow 1$ as $r\rightarrow \infty $. Therefore, ${\overline G}_{\lambda 
} >0$ for $0\le \lambda \le 1$. In particular, it holds for $\lambda =1$. 

The manifold $\Sigma $ will now be compactified. 
Recall in the asymptotic regions ${\bar g}_{ab}=\delta _{ab} + \alpha h_{ab}$ 
where $\alpha $ is as above, in particular $\alpha =0$ for $r>2t$. Now, define 
a new metric ${\tilde g}_{ab}\equiv \phi ^4 {\bar g}_{ab}$ where 
$\phi \equiv \gamma +{(1-\gamma)\over r}$ and $\gamma $ is a smooth decreasing 
function which is one for $r<3t$ and zero for $r>4t$. Given $r>4t$ the metric 
becomes ${\tilde g}_{ab}= {1\over r^4}\delta _{ab}$. Because of the form of 
this metric, the point at infinity $i_k$ on each asymptotic region can be 
smoothly added. Hence ${\widetilde \Sigma }=\Sigma \bigcup \{i_1,i_2,\dots ,i_n
\}$ is smooth and ${\tilde g}_{ab}$ is a smooth metric on $\widetilde \Sigma $. 

Next, a positive Green's function is constructed for the operator 
$L=-8{\widetilde D}^2 + {\widetilde R}$ with respect to the metric 
${\tilde g}_{ab}$. Let 
${\overline G}>0$ be the conformal factor found above for the metric 
${\bar g}_{ab}$ on $\Sigma $. The function $\overline G$ satisfies the equation 
$$-8{\overline D}^2\, {\overline G} + {\overline R}\, {\overline G}=0$$ 
on $\Sigma $ with respect to the metric ${\bar g}_{ab}$. It follows that 
${\widetilde G}\equiv \phi ^{-1}{\overline G}$ satisfies the equation 
$-8{\widetilde D}^2{\widetilde G}+{\widetilde R}{\widetilde G}=0$, 
on ${\widetilde \Sigma}-\{i_1,i_2,\dots ,i_n\}$ 
with respect to the metric ${\tilde g}_{ab}$. Now, the goal is to prove that 
$\widetilde G$ is the Green's function of the operator $L=-8{\widetilde D}^2+{\widetilde R}$ 
on ${\widetilde \Sigma }$ with respect to the metric ${\tilde g}_{ab}$. 
For $r>2t$, it follows that 
$${\overline G}_k=1+{{M_k}\over {2r}}+ \sum_{l=1}^\infty \sum_{m=-l}^l\, {{A_{lm}(k)Y_{lm}(\theta ,\phi )}\over {r^{l+1}}}$$ 
where $k$ denotes which asymptotic region, the $Y_{lm}$'s are spherical harmonics, 
$M_k$ is the mass measured at $i_k$, and the $A_{lm}(k)$'s are higher moments 
measured at $i_k$. Thus for $r>4t$, 
$${\widetilde G}_k= {1\over {\bar r}} +{M_k\over 2} + \sum_{l=1}^\infty \sum_{m=-l}^l\, A_{lm}(k)Y_{lm}(\theta ,\phi )\, {{\bar r}^l}$$ 
where ${\bar r}\equiv {1\over r}$. Since the point at infinity $i_k$ for each 
asymptotic region has radial coordinate ${\bar r}=0$, it follows that 
$L{\widetilde G}_k=8\delta _{i_k}$ on the compactification of each asymptotic 
region. It is important to note that the delta function at the point ${i_k}$ is denoted by $\delta _{i_k}$ and it 
is linear functional on smooth test functions of the compactified manifold. It has the
well known feature that $\langle \delta _{i_k},\phi \rangle=c \phi  ({i_k}) $, for any test 
function $\phi $ where $c$ is a positive constant determined by the choice of normalization of the delta function.

Hence, $L{\widetilde G}= 8\sum_{k=1}^n\delta _{i_k}$ on ${\widetilde \Sigma }$. 
Furthermore, ${\widetilde G}>0$ because both $\phi $ and ${\overline G}$ are 
positive. Therefore, ${\widetilde G}$ is a positive Green's function for $L$ 
on $\widetilde \Sigma $. 

Finally, the above Green's function is used to construct a metric with 
positive scalar curvature on $\widetilde \Sigma $. Let $\psi _0$ denote the 
ground state and $\lambda _0$ the corresponding eigenvalue of the operator $L$. 
By lemma \ref{Lemma 12}, a smooth positive ground state $\psi _0$ always exits. Now, using 
the fact that $\widetilde G$ is the Green's function, it follows that 
$$8\sum_{k=1}^n\, \psi _0(i_k)=\sum_{k=1}^n\, \frac 1c \langle \delta _{i_k},\psi _0
\rangle =\frac 1c \langle L{\widetilde G},\psi _0\rangle =\frac 1c \langle {\widetilde G}, 
L\psi _0\rangle =\frac{\lambda _0}{c}\langle {\widetilde G}, \psi _0\rangle\ .$$ 
This combined with the positivity of both $\psi _0$ and $\widetilde G$ imply 
$\lambda _0>0$. Finally, define a new smooth metric on $\widetilde \Sigma $ by 
${\hat g}_{ab}\equiv {\psi _0}^4{\tilde g}_{ab}$. The scalar curvature of 
${\hat g}_{ab}$ is $${\widehat R}={\psi _0}^{-5}\bigl(-8{\widetilde D}_a{\widetilde D}^a\psi _0 + {\widetilde R}\psi _0\bigr)=\lambda _0\psi _0^{-4}>0\ .$$ 
Therefore, $\widetilde \Sigma $ has a metric with positive scalar curvature. 
\end{proof} 

Conversely, given any closed 3-manifold with positive scalar curvature one can 
remove points to obtain an asymptotically flat 3-manifold with metric having 
$R=0$. The converse of the compactification was first suggested by 
R. Geroch \cite{rg}. Combining the above compactification theorem with 
Geroch's result yields the following corollary:

\begin{cor}\label{Corollary 17} An asymptotically flat 3-manifold $\Sigma $ has a 
metric with $R=0$ if and only if it has a compactification $\widetilde \Sigma $ 
which has a metric with ${\widetilde R}>0$. 
\end{cor}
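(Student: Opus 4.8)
The plan is to prove the two implications separately. The forward (``only if'') direction is essentially Theorem \ref{Theorem 16}: if $\Sigma$ carries a metric with $R=0$, then in particular it carries a metric with $R\ge 0$, and Theorem \ref{Theorem 16} applies verbatim to produce a smooth compactification $\widetilde\Sigma$ equipped with a metric of positive scalar curvature. No new work is needed here.

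For the reverse (``if'') direction, which is Geroch's observation \cite{rg}, the argument is the last part of the proof of Theorem \ref{Theorem 16} run backwards. Suppose $\widetilde\Sigma=\Sigma\cup\{i_1,\dots,i_n\}$ is a smooth compactification carrying a metric $\widetilde g_{ab}$ with $\widetilde R>0$. First I would note that $\widetilde R>0$ forces the ground state eigenvalue of the operator $L=-8\widetilde D^2+\widetilde R$ on $\widetilde\Sigma$ to be strictly positive, since the variational characterization gives $\lambda_0=\inf\bigl(\int_{\widetilde\Sigma}(8\,\widetilde D_a\phi\,\widetilde D^a\phi+\widetilde R\,\phi^2)\,dV\bigr)/\int_{\widetilde\Sigma}\phi^2\,dV>0$. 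Positivity of $L$ (with the maximum principle, Theorem \ref{Theorem 10}, used to fix the sign) yields a positive Green's function $G$ for $L$ on $\widetilde\Sigma$ with poles exactly at $i_1,\dots,i_n$, and near each $i_k$, in normal coordinates, $G=c/r+M_k/2+O(r)$ with $c>0$. Define $g_{ab}\equiv G^4\widetilde g_{ab}$ on $\Sigma=\widetilde\Sigma-\{i_1,\dots,i_n\}$. Since $LG=0$ away from the poles, the three-dimensional conformal transformation law for the scalar curvature gives
$$R=G^{-5}\bigl(-8\widetilde D^2 G+\widetilde R\,G\bigr)=0,$$
so $g_{ab}$ has vanishing scalar curvature everywhere on $\Sigma$.

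It then remains only to check that $(\Sigma,g_{ab})$ is genuinely asymptotically flat. Topologically this is automatic: removing finitely many points from a closed $3$-manifold produces a manifold each of whose ends is diffeomorphic to ${\mathbb R}^3$ minus a ball, which is exactly the definition used in this paper. For the metric I would invert near each pole by setting $\bar r=1/r$, precisely as in the proof of Theorem \ref{Theorem 16}: since $G\sim c/r$ one has $g_{ab}=G^4\widetilde g_{ab}\sim r^{-4}\delta_{ab}$, and in the coordinate $\bar r$ this is the Euclidean metric plus a correction of order $M_k\bar r$ together with higher multipole terms, so each end approaches flat space at the standard rate. This completes the argument. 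The main (indeed only) step with genuine analytic content is the local behavior of $G$ near the removed points — that it has the $c/r$ leading singularity with a smooth remainder; everything else is quoted from the results above or is routine. That local analysis is itself standard: writing $G=c/r+v$ near $i_k$, the function $v$ satisfies an elliptic equation with smooth right-hand side and is therefore smooth, which yields the expansion above and hence the asymptotic flatness of $(\Sigma,g_{ab})$.
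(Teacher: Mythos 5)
Your argument takes essentially the same route as the paper's: the forward direction is Theorem \ref{Theorem 16} verbatim, and the reverse direction builds the Green's function of $L=-8\widetilde D^2+\widetilde R$ with poles at the removed points and uses it as a conformal factor to produce a scalar-flat asymptotically flat metric on $\Sigma$. The paper is more explicit about existence of the Green's function (it uses Lemma \ref{Lemma 13} to make $L$ an isomorphism on $C^\infty(\widetilde\Sigma)$ and then passes to a distributional limit of smooth approximations to $\bar\delta$), whereas you make the conformal-change computation and the end-by-end asymptotic-flatness check more explicit — differences of exposition, not of argument.
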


\begin{proof}
The compactification is an immediate consequence of theorem \ref{Theorem 16}.
 Given a closed 3-manifold $\widetilde \Sigma $ with metric having 
${\widetilde R}>0$. Let $L$ be the operator $-8{\widetilde D}^2 + {\widetilde R}
$. Take any finite set of points $\{ i_1,i_2,\dots ,i_n\}$ in $\widetilde 
\Sigma$ and solve the equation $L\phi = {\bar \delta }$ where ${\bar \delta}
\equiv 8\sum_{k=1}^n\delta_{i_k}$. Since ${\widetilde R}>0$, $\lambda_0 >0$ and 
lemma \ref{Lemma 13} implies $L$ is an isomorphism on $C^\infty\bigl({\widetilde \Sigma}
\bigr)$. Let $\{ f_n\}$ be a sequence of smooth functions which converge to 
${\bar \delta}$ in the sense of distributions. Since $L$ is an isomorphism 
on smooth functions, there is a sequence of smooth functions $\{ \phi_n \}$ 
such that $L\phi_n = f_n$. It follows that 
$$\langle \phi_n -\phi_m , L\psi \rangle = \langle L(\phi_n -\phi_m) ,\psi 
\rangle = \langle f_n -f_m,\psi \rangle $$  
for all smooth $\psi$. Furthermore, the righthand side goes to zero as 
$m,n\rightarrow \infty$ because $\{ f_n\} $ converges. This implies $\langle 
\phi_n -\phi_m , {\tilde \psi}\rangle \rightarrow 0$ for all smooth functions 
${\tilde \psi}$ because $L$ is an isomorphism. Hence $\{ \phi_n \}$ is a Cauchy 
 sequence in the space of distributions. Moreover, completeness of the space of 
distributions implies $\phi_n \rightarrow \phi$. Therefore, $L\phi =
{\bar \delta }$. 

Since $L$ has smooth coefficients and $L\phi =0$ on ${\widetilde \Sigma}- 
\{ i_1,i_2,\dots ,i_n\}$, a standard bootstrap argument implies $\phi $ must 
also be smooth on ${\widetilde \Sigma}-\{i_1,i_2,\dots ,i_n\}$. Using normal 
coordinates about each of the points $i_k$, one can show that $\phi$ has the 
correct asymptotic behavior on ${\widetilde \Sigma}-\{i_1,i_2,\dots ,i_n\}$. 
Finally, an argument using the maximum principle implies that $\phi >0$.
\end{proof}
     
Now, the compactification theorem and the classification theorem are combined 
to prove that most asymptotically flat globally hyperbolic spacetimes do not admit maximal slices. 
This result is generic in that it only depends on the spatial topology of the 
spacetime. 

\begin{thm}\label{Theorem 18} Most asymptotically flat globally hyperbolic spacetimes do not admit a 
maximal slice. 
\end{thm}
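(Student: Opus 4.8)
The plan is to combine the three main ingredients already established in the paper: the existence theorems for asymptotically flat initial data (Theorems \ref{Theorem 5} and \ref{Theorem 7}), the compactification theorem (Theorem \ref{Theorem 16}), and the Gromov--Lawson classification theorem (Theorem \ref{Theorem 8}). First I would recall that a necessary condition for an asymptotically flat globally hyperbolic spacetime ${\mathbb R}\times \Sigma$ to admit a maximal slice is that $\Sigma$ carry a metric with $R\ge 0$; this follows immediately from the Hamiltonian constraint evaluated on the slice where $p=0$, together with the dominant energy condition giving $\rho \ge 0$, so that $R=16\pi \rho + p_{ab}p^{ab}\ge 0$.

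Next I would apply Theorem \ref{Theorem 16} to convert this into a statement about closed manifolds: if $\Sigma$ admits a metric with $R\ge 0$, then its smooth compactification $\widetilde\Sigma$ admits a metric with $R>0$. By Theorem \ref{Theorem 8}, a closed orientable 3-manifold (or its double cover, if nonorientable) admitting a metric with $R>0$ cannot have a $K(\pi,1)$ among its prime factors. Thus the asymptotically flat 3-manifolds $\Sigma$ that could possibly arise as maximal slices are precisely those whose compactifications have no aspherical prime summand --- a vanishingly small fraction, by the counting argument via Thurston's work on hyperbolic Dehn surgery already invoked in the paper. Conversely, Theorem \ref{Theorem 7} guarantees that every asymptotically flat 3-manifold $\Sigma$ --- regardless of topology --- is the Cauchy slice of some physically reasonable globally hyperbolic spacetime ${\mathbb R}\times \Sigma$. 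Putting these together: for the generic asymptotically flat $\Sigma$, the compactification $\widetilde\Sigma$ contains a $K(\pi,1)$ prime factor, so $\widetilde\Sigma$ admits no metric with $R>0$, so by the contrapositive of Theorem \ref{Theorem 16} the manifold $\Sigma$ admits no metric with $R\ge 0$, and hence the spacetime ${\mathbb R}\times \Sigma$ constructed in Theorem \ref{Theorem 7} has no maximal slice.

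The one point requiring a little care is the word ``most.'' I would make precise that the counting of asymptotically flat 3-manifolds is done via the counting of their compactifications $\widetilde\Sigma$ (each asymptotically flat $\Sigma$ being obtained by deleting finitely many points from a closed $\widetilde\Sigma$, and conversely), so that the Thurston--Lickorish Dehn-surgery enumeration of closed 3-manifolds transfers directly: all but finitely many manifolds produced by surgery on a fixed hyperbolic link are themselves closed hyperbolic, hence $K(\pi,1)$'s, hence forbidden by Theorem \ref{Theorem 8}. The main obstacle --- really the only subtle step --- is ensuring the compactification theorem is being used in the correct direction and that the notion of ``generic'' is consistent between the closed and asymptotically flat settings; once that bookkeeping is fixed, the argument is a short chain of implications through the already-proven results.

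\begin{proof}
A maximal slice in an asymptotically flat globally hyperbolic spacetime ${\mathbb R}\times \Sigma$ has $p=0$, and the dominant energy condition forces $\rho \ge 0$ there, so the Hamiltonian constraint gives ${\widehat R}=16\pi \rho + p_{ab}p^{ab}\ge 0$ on the slice. Hence $\Sigma$ must admit a metric with $R\ge 0$. By theorem \ref{Theorem 16}, such a $\Sigma$ has a smooth compactification $\widetilde \Sigma$ admitting a metric with $R>0$; by theorem \ref{Theorem 8}, $\widetilde \Sigma$ (or its orientable double cover) then has no $K(\pi ,1)$ prime factor and, in particular, is not itself hyperbolic. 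But every asymptotically flat $\Sigma$ arises by deleting finitely many points from a closed $\widetilde \Sigma$, and by Thurston's theorem all but finitely many closed orientable 3-manifolds obtained from a fixed hyperbolic link by Dehn surgery are closed hyperbolic, hence $K(\pi ,1)$'s. Thus, in this way of counting, most closed 3-manifolds $\widetilde \Sigma$ have an aspherical prime factor, and consequently most asymptotically flat 3-manifolds $\Sigma$ admit no metric with $R\ge 0$. Finally, theorem \ref{Theorem 7} produces, for every asymptotically flat $\Sigma$, a physically reasonable globally hyperbolic spacetime ${\mathbb R}\times \Sigma$. Combining these facts, for most asymptotically flat $\Sigma$ the spacetime ${\mathbb R}\times \Sigma$ of theorem \ref{Theorem 7} cannot contain a maximal slice, since that would contradict the nonexistence of a metric with $R\ge 0$ on $\Sigma$.
\end{proof}
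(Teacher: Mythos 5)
Your proof is correct and follows essentially the same route as the paper: necessary condition ($R\ge 0$ from the Hamiltonian constraint and dominant energy condition) $\Rightarrow$ Theorem \ref{Theorem 16} compactification $\Rightarrow$ Theorem \ref{Theorem 8} classification $\Rightarrow$ Theorem \ref{Theorem 7} realizability of every asymptotically flat $\Sigma$. Your version merely makes explicit the Thurston/Dehn-surgery counting argument behind the word ``most,'' which the paper leaves implicit by reference to the discussion preceding Theorem \ref{Theorem 9}.
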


\begin{proof}
The existence of a maximal slice implies $\Sigma $ admits a 
metric with $R\ge 0$. The compactification theorem implies the closed manifold 
$\widetilde \Sigma $ admits a metric with positive scalar curvature. This 
combined with theorem \ref{Theorem 8} implies most $\Sigma $ never admit a metric with 
$R\ge 0$. Theorem \ref{Theorem 7} implies all $\Sigma $ are allowed as the spatial topology 
of spacetimes. Therefore, most asymptotically flat globally hyperbolic spacetimes never admit 
a maximal slice. 
\end{proof}

A more explicit description of the spatial topology of globally hyperbolic spacetimes that 
\underbar{do} \underbar{admit} a maximal slice can be given by 
decomposing the orientable closed 3-manifolds in terms of three basic types of prime 
orientable factors. They are closed 3-manifolds with finite fundamental group, 
the $K(\pi ,1)$'s ,and the handle $S^2\times S^1$. 
By theorem \ref{Theorem 8}, the only prime factors which can possibly admit a 
metric with $R>0$ are the ones with finite fundamental group and the handle. 
The handle does have have a metric with positive scalar curvature, one such 
metric is the product metric. The prime 3-manifolds with finite fundamental group
are the spherical spaces. Furthermore, the only orientable prime factors 
admitting metrics with $R>0$ are these spherical spaces and the 
handle $S^2\times S^1$.

If $M^3$ admits a metric with $R\ge 0$, then theorem \ref{Theorem 8} implies $M^3$ either 
has a metric with positive scalar curvature in which case we know its form from our previous 
discussion, or it is flat but there are only six orientable closed flat spaces. Thus, the 
form of any closed orientable 3-manifold with $R\ge 0$ is completely known if 
these conjectures are assumed. This means that, given any spatially closed globally hyperbolic 
spacetime which admits a maximal slice the spatial topology is known, more 
precisely it is the connected sum of spherical spaces and handles, or it is one of six 
flat spaces. Of course, if the globally hyperbolic spacetime is spatial nonorientable, then the 
double cover is of the above form. In the asymptotically flat case, these 
results imply that the spatial topology is the connected sum of spherical 
spaces and handles minus a finite number of points. 

\section{ CONCLUSIONS} 

Although the generic situation is that asymptotically flat globally hyperbolic spacetimes do not 
admit maximal slices, the question still remains what are both necessary and 
sufficient conditions for the maximal slices to exist. It is important to 
realize that just because there are no topological obstructions to maximal 
slices does not mean that maximal slices exist. This is best illustrated 
with the following example: Let the spatial topology be that of a 3-torus $T^3$.
 If one takes ${\mathbb R}\times T^3$ with the metric $ds^2=-dt^2+dx^2+dy^2+dz^2$, 
i.e. Minkowski space with points identified, then the $t={\rm constant}$ slices 
 yield a foliation  by maximal slices. On the other hand, if one takes 
the same topology but with a different metric, namely, $ds^2=-dt^2+Ct^{4\over 3}\bigl(dx^2+dy^2+dz^2\bigr)$ 
where $C>0$ is a constant, i.e. a dust filled Robertson-Walker spacetime with 
points identified, then this spacetime has no maximal slice because $T^3$ 
expands as $t$ increases. This is an example of a spatially closed spacetime 
but with a little bit of work it is conceivable that such examples can be 
constructed in the asymptotically flat case also. 

Likewise, it has been shown that CMC hypersurfaces which are not maximal can 
exist regardless of the spatial topology but this does not assure that they 
always exist. This only means that there are no topological obstructions to 
finding such hypersurfaces. In fact, examples of spatially closed vacuum spacetimes 
which do not have any CMC hypersurface given by \cite{ew,Chrusciel:2004qe}. 

\section{ ACKNOWLEDGEMENTS}

The author is grateful to NSERC for financial support.  In addition, the author 
would like to thank the Perimeter Institute for its hospitality during some of the 
work on this project. The author would like to thank D. M. Eardley, J. L. Friedman, 
G. T. Horowitz, J. B. Hartle, and R. P. Woodard for useful conversations. Especially, 
the author would like to thank J. Brannlund and K. Schleich for very useful conversations
and for comments on this manuscript.


\begin{thebibliography}{99}

\bibitem{e} D. M. Eardley, in {\it Sources of Gravitational
Radiation}, edited by L. Smarr (Cambridge Univ. Press, Cambridge,
England, 1979).

\bibitem{c} Y. Choquet-Bruhat and J. W. York, Jr., in {\it General
Relativity and Gravitation}, edited by A. Held (Plenum New York,
1980), Vol. 1. See also references cited therein.

\bibitem{b} R. Bartnick, Commun. Math. Phys. {\bf 94}, 155(1984).

\bibitem{Bartnik:1990um}
  R.~Bartnik, P.~T.~Chrusciel and N.~O'Murchadha,
  Commun.\ Math.\ Phys.\  {\bf 130}, 95 (1990).

\bibitem{db} D. R. Brill, in {\it Proceedings of the First Marcel
Grossman Meeting on General Relativity}, edited by R. Ruffini
(North-Holland New York, 1977).

\bibitem{drd} D. R. Brill, in {\it Proceedings of the Third Marcel
Grossman Meeting on General Relativity}, edited by Hu Ning
(North-Holland New York, 1983) .

\bibitem{Witt:1986ng}
  D.~M.~Witt,
  Phys.\ Rev.\ Lett.\  {\bf 57}, 1386 (1986).

\bibitem{Isenberg:2002es}
  J.~Isenberg, R.~Mazzeo and D.~Pollack,
  Annales Henri Poincare {\bf 4}, 369 (2003)
  [arXiv:gr-qc/0206034].

\bibitem{Rendall:2002yr}
  A.~D.~Rendall,
  Living Rev.\ Rel.\  {\bf 8}, 6 (2002)
  [arXiv:gr-qc/0505133].

\bibitem{Friedman:1993ty}
J.~L.~Friedman, K.~Schleich and D.~M.~Witt,
Phys.\ Rev.\ Lett.\  {\bf 71}, 1486 (1993)
[Erratum-ibid.\  {\bf 75}, 1872 (1995)]
[arXiv:gr-qc/9305017].

\bibitem{Schleich:1994tm}
  K.~Schleich and D.~M.~Witt,
  Published in Proceedings of the Lake Louise Winter Institute, Particle 
  Physics and Cosmology, Feb. 20-26, 1994, (World Scientific, 1994)
  [arXiv:gr-qc/9903061].

\bibitem{Galloway:1999bp}
G.~J.~Galloway, K.~Schleich, D.~M.~Witt and E.~Woolgar,
Phys.\ Rev.\ D {\bf 60}, 104039 (1999)
[arXiv:gr-qc/9902061].

\bibitem{Galloway:1999br}
G.~J.~Galloway, K.~Schleich, D.~Witt and E.~Woolgar,
Phys.\ Lett.\ B {\bf 505}, 255 (2001)
[arXiv:hep-th/9912119].

 \bibitem{g} M. Gromov and H. B. Lawson, Jr., Inst. Hautes Etudes
Sci. Publ. Math.  {\bf 58 }, 83(1983). See also the earlier work of
R. Schoen and S. T. Yau, Ann.  of Math.  {\bf 110}, 127(1979); and
Manuscripta Math. {\bf 28}, 159(1979).

\bibitem{Hertog:2003ru}
  T.~Hertog, G.~T.~Horowitz and K.~Maeda,
  JHEP {\bf 0305}, 060 (2003)
  [arXiv:hep-th/0304199].

\bibitem{h} S. W. Hawking and G. F. Ellis, {\it The Large Scale
Structure of Space-time}, (Cambridge Univ. Press, Cambridge, England,
1973).
  
\bibitem{kw} J. Kazdan and F. Warner, J. Differential Geom. {\bf 10},
113(1975).


\bibitem{jh} J. Hemple, {\it 3-Manifolds}, (Princeton Univ. Press,
Princeton, NJ, 1976).

\bibitem{t} W. Thurston, Bull. Amer. Math. Soc. {\bf 6}, 357(1982).

\bibitem{s} M. Spivak, {\it A Comprehensive Introduction to
Differential Geometry} , Vol. 5, (Publish or Perish Inc., Berkeley
1979).

\bibitem{sy} R. Schoen and S. T. Yau, Commun. Math. Phys. {\bf 65},
45(1979).

\bibitem{a} T. Aubin, {\it Nonlinear Analysis on Manifolds.
Monge-Amp\'ere Equations .} (Springer-Verlag New York Inc., NY 1982).

\bibitem{cb} M. Cantor and D. Brill, Composito Mathematica, {\bf 43},
317(1981).

\bibitem{rg} R. Geroch, in {\it Theoretical Principles in Astrophysics and 
Relativity}, edited by N. R. Lebovitz (Univ. of Chicago , Chicago 1978). A complete proof can be 
found in , J. L. Friedman and S. Mayer, J. Math. Phys. {\bf 23}, 109(1982). 

\bibitem{ew} D. M. Eardley and D. M. Witt, unpublished.

\bibitem{Chrusciel:2004qe}
  P.~T.~Chrusciel, J.~Isenberg and D.~Pollack,
  Phys.\ Rev.\ Lett.\  {\bf 93}, 081101 (2004)
  [arXiv:gr-qc/0409047].


\end{thebibliography}
\end{document}